	\providecommand\BibTeX{{%
			\normalfont B\kern-0.5em{\scshape i\kern-0.25em b}\kern-0.8em\TeX}}}
\definecolor{shadecolor}{gray}{0.75}
\definecolor{mycolor}{rgb}{0.122, 0.435, 0.698}
\newcommand{\mybox}[1]{%
  \setbox0=\hbox{#1}%
  \setlength{\@tempdima}{\dimexpr\wd0+13pt}%
  \begin{tcolorbox}[colframe=mycolor,boxrule=0.5pt,arc=4pt,
      left=6pt,right=6pt,top=3pt,bottom=3pt,boxsep=0pt,width=\@tempdima]
    #1
  \end{tcolorbox}
}
\newcommand{\introparagraph}[1]{\vspace{0.7mm} \noindent \textbf{\em #1.}}
\newcommand{\relatedwork}[1]{\vspace{0.7mm} \noindent \textbf{#1.}}
\newcommand{\eat}[1]{}
\newcommand{\ba}{\mathbf{c}}
\newcommand{\obtainedfrom}{ \leftarrow }
\newcommand{\eqone}{\stackrel{\text{(1)}}{=}}
\newcommand{\eqtwo}{\stackrel{\text{(2)}}{=}}
\newcommand{\eqthree}{\stackrel{\text{(3)}}{=}}
\newcommand{\defeq}{:=}
\newcommand{\fhw}{\mathsf{fhw}}
\newcommand{\subw}{\mathsf{subw}}
\newcommand{\vars}[1]{\mathsf{vars}(#1)}
\newcommand{\cfhw}{\mathsf{ffhw}}
\newcommand{\csubw}{\mathsf{fsubw}} 
\newcommand{\bL}{\mathbf{L}}
\newcommand{\bS}{\sigma}
\newcommand{\zerobf}{\mathbf{0}}
\newcommand{\onebf}{\mathbf{1}}
\newcommand{\nodes}{[\ell]}
\newcommand{\mE}{\mathcal{E}}
\newcommand{\bx}{\mathbf{x}}
\newcommand{\Datalogo}{\textsf{Datalog}\xspace}
\newcommand{\Datalog}{\textsf{Datalog}\xspace}
\newcommand{\SumProd}{\textsf{sum-prod}\xspace}
\newcommand{\sumprod}{\varphi}
\newcommand{\POPS}{\textsf{POPS}\xspace}
\newcommand{\arity}[1]{\mathsf{arity}(#1)}
\newcommand{\IDB}{\textsf{IDB}\xspace}
\newcommand{\EDB}{\textsf{EDB}\xspace}
\newcommand{\IDBs}{\textsf{IDB}s\xspace}
\newcommand{\EDBs}{\textsf{EDB}s\xspace}
\newcommand{\mT}{\mathcal{T}}
\newcommand{\revone}[1]{{\color{black} #1}}
\definecolor{reviewer2-color}{rgb}{0.8,0.0, 0.9}
\newcommand{\revtwo}[1]{{\color{black} #1}}
\newcommand{\revthree}[1]{{\color{black} #1}}
\def\isarxiv{1}
\author{Hangdong Zhao}
\affiliation{%
  \institution{University of Wisconsin-Madison}
  \city{Madison}
  \country{USA}}
\email{hangdong@cs.wisc.edu}
\author{Shaleen Deep}
\affiliation{%
  \institution{Microsoft Gray Systems Lab}
  \country{USA}}
\email{shaleen.deep@microsoft.com}
\author{Paraschos Koutris}
\affiliation{%
  \institution{University of Wisconsin-Madison}
  \city{Madison}
  \country{USA}}
\email{paris@cs.wisc.edu}
\author{Sudeepa Roy}
\affiliation{%
  \institution{Duke University}
  \city{Durham}
  \country{USA}}
\email{sudeepa@cs.duke.edu}
\author{Val Tannen}
\affiliation{%
  \institution{University of Pennsylvania}
  \city{Philadelphia}
  \country{USA}}
\email{val@seas.upenn.edu}
\begin{document}

\title{Evaluating \Datalog over Semirings: A Grounding-based Approach}

\begin{abstract}
	\eat{
		\Datalogo \cite{POPS} extends \textsf{Datalog}, where each rule of the program is a sum of sum-product queries over any commutative partially ordered pre-semiring (\POPS), instead of the standard Boolean semiring. It can express recursive aggregation beyond the Booleans, while retaining the declarative fixpoint semantics: the naive evaluation algorithm can find the least fixpoint (if exists) by iteratively applying the immediate consequence operator until the program quiesces.  

		It was shown in \cite{POPS} that, the convergence can be studied through the core of the underlying \POPS, which is a naturally-ordered semiring. In this work, we ask: {\em given a \Datalogo program over a naturally-ordered semiring $\bS$, what is the tightest possible runtime?} To this end, our main contribution is a general two-phase framework for analyzing the data complexity of \Datalogo: first ground the program into an equivalent system of polynomial equations (i.e. grounding) and then find the least fixpoint of the grounding over $\bS$. We present algorithms that use structure-aware query evaluation techniques to obtain the smallest possible groundings. Next, efficient algorithms for fixpoint evaluation are introduced over two classes of semirings: (1) semirings of finite rank, and (2) absorptive semirings of total order. Combining both phases, we obtain state-of-the-art and new algorithmic results. Finally, we complement our results with a matching fine-grained lower bound.
	}

    \Datalog is a powerful yet elegant language that allows expressing recursive computation. Although \Datalog evaluation has been extensively studied in the literature, so far, only loose upper bounds are known on how fast a \Datalog program can be evaluated. In this work, we ask the following question: given a \Datalog program over a naturally-ordered semiring $\bS$, what is the tightest possible runtime? To this end, our main contribution is a general two-phase framework for analyzing the data complexity of \Datalog over $\bS$: first ground the program into an equivalent system of polynomial equations (i.e. grounding) and then find the least fixpoint of the grounding over $\bS$. We present algorithms that use structure-aware query evaluation techniques to obtain the smallest possible groundings. Next, efficient algorithms for fixpoint evaluation are introduced over two classes of semirings: (1) finite-rank semirings and (2) absorptive semirings of total order. Combining both phases, we obtain state-of-the-art and new algorithmic results. Finally, we complement our results with a matching fine-grained lower bound.
    
\end{abstract}

\begin{CCSXML}
<ccs2012>
   <concept>
       <concept_id>10003752.10010070.10010111.10011711</concept_id>
       <concept_desc>Theory of computation~Database query processing and optimization (theory)</concept_desc>
       <concept_significance>500</concept_significance>
       </concept>
 </ccs2012>
\end{CCSXML}

\ccsdesc[500]{Theory of computation~Database query processing and optimization (theory)}

\keywords{Datalog, Semirings, Evaluation, Grounding}

\maketitle
\allowdisplaybreaks
\setcounter{page}{1}
\section{Introduction}
\label{sec:intro}

\Datalog is a recursive query language that has gained prominence due to its expressivity and rich applications across multiple domains, including graph processing~\cite{SeoPSL13}, declarative program analysis~\cite{Souffle,SmaragdakisB15}, and business analytics~\cite{GreenHLZ13}. Most of the prior work focuses on \Datalog programs over the Boolean semiring (this corresponds to the standard relational join semantics): popular examples include same generation~\cite{bancilhon1985magic}, cycle finding, and pattern matching. Many program analysis tasks such as Dyck-reachability~\cite{Reps98, li2020fast, li2021complexity}, context-free reachability~\cite{Andersen}, and Andersen's analysis~\cite{andersen1994program} can also be naturally cast as \Datalog programs over the Boolean domain. However, modern data analytics frequently involve aggregations over recursion. Seminal work by Green et al.~\cite{green2007provenance} established the semantics of \Datalog over the so-called $K$-relations where tuples are annotated by the domain of a fixed semiring. \eat{Prior work has also studied the fundamental problem of computing \Datalog provenance by leveraging the proof-theoretic and fixpoint-theoretic derivations~\cite{RamusatMS21, DynDatalog}, as well as through building compact circuits for various semirings~\cite{deutch2014circuits}.} Recently, Abo Khamis et al.~\cite{POPS} proposed \textsf{Datalog}$^\circ$, an elegant extension of \Datalog, and established key algebraic properties that governs convergence of \textsf{Datalog}$^\circ$. The authors further made the observation that the convergence rate of \Datalog can be studied by confining to the class of {\em naturally-ordered} semirings (formally defined in Section~\ref{sec:prelim}).

A parallel line of work has sought to characterize the complexity of \Datalog evaluation. The general data complexity of \Datalog is \textsf{P}-complete~\cite{garey1997computers, cosmadakis1999inherent}, with the canonical \textsf{P}-complete \Datalog program:
\begin{equation}
\begin{aligned}
	T(x_1) \obtainedfrom U(x_1). \quad \quad T(x_1) \obtainedfrom T(x_2) \wedge T(x_3) \wedge R(x_2, x_3, x_1).
\end{aligned}
\label{eq:program}
\end{equation}

Some fragments of \Datalog can have lower data complexity: the evaluation for non-recursive \Datalog is in $\mathsf{AC}_0$, whereas evaluation for \Datalog with linear rules is in $\mathsf{NC}$ and thus efficiently parallelizable~\cite{UllmanG88, afrati1993parallel}. However, all such results do not tell us how efficiently we can evaluate a given \Datalog program. As an example, the program~\eqref{eq:program} can be evaluated in linear time with respect to the input size $m$~\cite{GK04}. \revtwo{A deeper understanding of precise upper bounds for Datalog is important, given that it can capture practical problems
(all of which are in P) across various domains as mentioned before}. 

Unfortunately, even for the Boolean semiring, the current general algorithmic techniques for \Datalog\ evaluation typically aim for an imprecise polynomial bound instead of specifying the tightest possible exponent. Semi-na\"{i}ve or na\"{i}ve evaluation only provides upper bounds on the number of iterations, ignoring the computational cost of each iteration. For example, program~\eqref{eq:program} can be evaluated in at most $O(m)$ iterations, but the cost of an iteration can be as large as $\Theta(m)$. Further, going beyond the Boolean semiring, obtaining tight bounds for evaluating general \Datalog\ programs over popular semirings (such as absorptive semirings with a total order which are routinely used in data analytics~\cite{RamusatMS21, DynDatalog}) is unclear. \revtwo{In particular, proving correctness of program evaluation is not immediate and the impact of the semiring on the evaluation time remains unknown.}

Endeavors to pinpoint the exact data complexity for \Datalog fragments have focused on the class of Conjunctive Queries (CQs)~\cite{DBLP:journals/mst/JoglekarR18,AGM,PANDA} and union of CQs~\cite{carmeli2021enumeration}, where most have been dedicated to develop faster algorithms. \eat{Different widths have been proposed to characterize the exact runtime of CQ evaluation.} When recursion is involved, however, exact runtimes are known only for restricted classes of \Datalog. Seminal work of Yannakakis~\cite{Yannakakis90} established a $O(n^3)$ runtime for chain \Datalog programs \revthree{(formally defined in Section~\ref{sec:datalogo})}, where $n$ is the size of the active domain. Such programs have a direct correspondence to context-free grammars and capture a fundamental class of static analysis known as context-free reachability (\textsf{CFL}). When the chain \Datalog program corresponds to a regular grammar, the runtime can be further improved to $O(m \cdot n)$\footnote{$m$ denotes the size of the input data and $n$ denotes the size of the active domain for a \Datalogo program throughout the paper.}. An $O(n^3)$ algorithm was proposed for the \Datalog program that captures Andersen's analysis~\cite{Andersen}. Recently, Casel and Schmid~\cite{RPQs} studied the fine-grained complexity of evaluation, enumeration, and counting problems over regular path queries (also a \Datalog fragment) with similar upper bounds. However, none of the above techniques generalize to arbitrary \Datalog\ programs.

\smallskip
\relatedwork{Our Contribution} In this paper, we ask: given a \Datalog\ program $P$ over a naturally-ordered semiring $\bS$, what is the {\em tightest possible runtime as a function of the input size $m$ and the size of the active domain $n$}? Our main contributions are as follows.

\looseness-1 We propose a general, two-phased framework for evaluation of $P$ over $\bS$. The first phase \emph{grounds} $P$ into an equivalent system of polynomial equations $G$. Though constructing groundings na\"ively is rather straightforward, we show that \revthree{groundings of smaller size} are attainable via tree decomposition techniques. The second phase evaluates the least fixpoint of $G$ over the underlying semiring $\bS$. We show that finite-rank semirings and absorptive semirings with total order, two routinely-used classes of semirings in practice, admit efficient algorithms for least fixpoint computation. We apply our framework to prove state-of-the-art and new results for practical \Datalog fragments (e.g. linear \Datalog).

\revone{Further, we establish tightness of our results by demonstrating a matching lower bound on the running time (conditioned on the popular min-weight $\ell$-clique conjecture) and size of the grounded program (unconditional) for a class of \Datalog\ programs}.

\section{Preliminaries}
\label{sec:prelim}


\subsection{\Datalog}\label{sec:datalog}
We review standard \Datalog~\cite{abiteboul1995foundations} that consists of a set of {\em rules} over a set of {\em extensional } and {\em intensional } relations \revthree{(simply referred to as \EDB and \IDB\ respectively, henceforth)}. \EDBs correspond to relations in a given database, each comprising a set of \EDB tuples, whereas \IDBs are derived by the rules. The head of each rule is an \IDB, and the body consists of zero or more \EDBs and \IDBs as a conjunctive query defining how the head \IDB is derived. We illustrate with the example of transitive closure on a binary \EDB $R$ denoting edges in a directed graph, a single \IDB $T$, and two rules:
\begin{align} \label{eq:tc}
T(x_1, x_2) & \obtainedfrom R(x_1, x_2), \; & T(x_1, x_2)  \obtainedfrom T(x_1, x_3) \wedge R(x_3, x_2)
\end{align}
In standard \Datalog, \IDBs are derived given \EDBs (equivalently, evaluation is over the Boolean semiring), whereas as discussed in Section~\ref{sec:intro}, the evaluation can also be considered over other semirings where the \EDBs are annotated with elements of that semiring.  

\subsection{Semirings and Their Properties}\label{sec:prelim-semiring}


\introparagraph{Semirings}  A tuple $\bS = (\boldsymbol{D}, \oplus, \otimes, \mathbf{0}, \mathbf{1})$ is a \textit{semiring} if $\oplus$ and $\otimes$ are binary operators over $\boldsymbol{D}$ for which:
\begin{enumerate}
    \item $(\boldsymbol{D}, \oplus, \zerobf)$ is a commutative monoid with 
    identity $\zerobf$ (i.e., $\oplus$ is associative and commutative, and $a \oplus 0 = a$ for all $a \in D$);
    \item $(\boldsymbol{D}, \otimes, \onebf)$ is a monoid with 
    identity $\onebf$ for $\otimes$;
    \item $\otimes$ distributes over $\oplus$, i.e., $a \otimes (b \oplus c) = (a \otimes b) \oplus (a \otimes c)$ for $a, b, c \in \boldsymbol{D}$; and 
    \item $a \otimes \zerobf = \zerobf$ for all $a \in \boldsymbol{D}$.
\end{enumerate}
$\bS$ is a {\em commutative semiring} if $\otimes$ is commutative.
%
Examples include the Boolean semiring $\mathbb{B}$ = $(\{ \textsf{false}, \textsf{true} \} , \vee, \wedge, \textsf{false}, \textsf{true})$, the natural numbers semiring $(\mathbb{N}, +, \cdot, 0, 1)$, and the real semiring $(\mathbb{R}, +, \cdot, 0, 1)$.

\introparagraph{Naturally-Ordered Semirings} In a semiring $\bS$, the relation $x \sqsubseteq y$ (or $ y \sqsupseteq x$), defined as $\exists z: x \oplus z=y$, is reflexive and transitive, but not necessarily anti-symmetric. If it is anti-symmetric, then $(\boldsymbol{D}, \sqsubseteq)$ is a poset ($\zerobf$ is the minimum element since $\zerobf \oplus a = a$) and we say that $\bS$ is {\em naturally-ordered}. For example, $\mathbb{N}$ is naturally-ordered with the usual order $\leq$. However, $\mathbb{R}$ is not naturally-ordered, since $x \sqsubseteq y$ for every $x, y \in \mathbb{R}$.  We focus on naturally-ordered commutative semirings in this paper (simply referred to as semirings from now).

\introparagraph{$\omega$-complete \& $\omega$-continuous Semirings}
An $\omega$-chain in a poset is a sequence $a_0 \sqsubseteq a_1 \sqsubseteq \dots$ with elements from the poset. A naturally-ordered semiring $\bS$ is {\em $\omega$-complete} if every (infinite) $\omega$-chain $a_0 \sqsubseteq a_1 \sqsubseteq \cdots$
has a least upper bound ${\tt sup}~ a_i$, and is {\em $\omega$-continuous} if also $\forall a \in \boldsymbol{D}$, $a \oplus {\tt sup}~ a_i$ =
${\tt sup}~(a \oplus a_i)$ and $a \otimes {\tt sup}~ a_i = {\tt sup~(a \otimes a_i)}$. 

\introparagraph{Rank} The \textit{rank} of a strictly increasing $\omega$-chain $a_0 \sqsubset a_1 \sqsubset \dots \sqsubset a_r$ is $r$ ($x \sqsubset y$ if $x \sqsubseteq y$ and $x \neq y$). We say that a semiring $\bS$ has rank $r$ if every strictly increasing sequence has rank at most $r$. Any semiring over a finite domain has constant rank (e.g. $\mathbb{B}$ has rank $1$).


We highlight two special classes of naturally-ordered semirings to be studied in this paper: dioids and absorptive semirings.

\introparagraph{Dioids}
A \textit{dioid} $\bS$ is a semiring where $\oplus$ is idempotent, i.e. $a \oplus a = a$ for any $a \in \boldsymbol{D}$. A dioid is naturally ordered, i.e. $a \sqsubseteq b$ if $a \oplus b = b$. 
    The natural order satisfies the monotonicity property: for all $a, b, c \in \boldsymbol{D}, a \sqsubseteq b$ implies $a \oplus c \sqsubseteq b \oplus c$ and $a \otimes c \sqsubseteq b \otimes c$. 
    
\introparagraph{Absorptive Semirings} A semiring $\bS$ is \textit{absorptive} if $\mathbf{1} \oplus a = \mathbf{1}$ for every $a \in \boldsymbol{D}$. An absorptive semiring is also called {\em $0$-stable} in~\cite{POPS}. 
An absorptive semiring is a dioid and $\mathbf{0} \sqsubseteq a \sqsubseteq \mathbf{1}$, for any $a \in \boldsymbol{D}$. An equivalent characterization of absorptiveness is in~\autoref{lemma:absorptive}. Simple examples of absorptive semirings are the Boolean semiring and the \textit{tropical semiring} $\mathsf{Trop}^{+}=\left(\mathbb{R}_{+} \cup\{\infty\}, \min ,+, \infty, 0\right)$ (that can be used to measure the shortest path), where its natural order $x \sqsubseteq y$ is the reverse order $x \geq y$ on $\mathbb{R}_{+} \cup\{\infty\}$.




\subsection{\Datalogo over Semirings} \label{sec:datalogo}
Next we describe the syntax of \Datalogo over a semiring $\bS$. Here \EDBs are considered as $\bS$-relations, i.e., each tuple in each \EDB $R$ is annotated by an element from the domain of $\bS$. Tuples not in the \EDB are annotated by $\zerobf$ implicitly. Standard relations are essentially $\mathbb{B}$-relations. When a \Datalogo program is evaluated on such $\bS$-relations, \IDB $\bS$-relations are derived. Here, conjunction ($\wedge$) is interpreted as $\otimes$, and disjunction as $\oplus$ of $\bS$, as discussed in \cite{green2007provenance}.
\par

\introparagraph{Sum-product Queries} The \Datalog program~\eqref{eq:tc} has two rules of the same head, thus an \IDB tuple can be derived by either rule (i.e. a disjunction over rules). Following Abo Khamis et al. \cite{POPS},  we combine all rules with the same \IDB head into one using disjunction. Hence the program under a semiring $\bS$ is written as:
\begin{align} \label{eq:apsp}
 T(x_1,x_2) &\obtainedfrom R(x_1,x_2) \oplus \left(\bigoplus_{x_3} T(x_1,x_3) \otimes R(x_3,x_2) \right).
\end{align}
where $\oplus$ corresponds to alternative usage (or disjunction) \cite{green2007provenance} and the implicit $\exists x_3$ in the second rule of~\eqref{eq:tc} is made explicit by $\bigoplus_{x_3}$.
\revtwo{Let $\ell$ be the number of variables $x_1, x_2, \cdots, x_{\ell}$  in a Datalog program as in Eqn.~(\ref{eq:apsp}).  For $J \subseteq [\ell]$, where  $[\ell] = \{1, 2, \ldots, \ell\}$,  $\bx_J$ denotes the set of variables $\{ x_j \mid j \in J\}$.} 
In the rest of the paper, w.l.o.g., we consider a \Datalogo program as a set of rules of distinct \IDB heads, where each rule has the following form:
\begin{equation} \label{eq:sum-sumprod-rule}
	 T(\mathbf{x}_H) \obtainedfrom \sumprod_1(\bx_H) \oplus \sumprod_2(\bx_H) \oplus \dots 
\end{equation}
Here $\mathbf{x}_H$ denotes the set of head variables. The body of a rule is a {\em sum} (i.e., $\oplus$) over one or more {\em sum-product queries} (defined below) $\sumprod_1(\bx_H), \sumprod_2(\bx_H), \ldots$ corresponding to different derivations of the \IDB  $T(\mathbf{x}_H)$ over a semiring $\bS$. 
%
%
Formally, a {\em sum-product (in short, \SumProd) query} $\sumprod$ over $\bS$ has the following form:
\begin{equation} \label{eq:sumprod-rule}
     \sumprod(\bx_H): \quad \bigoplus_{\bx_{[\ell] \setminus H}} \revone{\bigotimes_{J \in \mE} T_J(\bx_J)},
\end{equation}
where (1) $\bx_H$ is the set of {head variables}, (2) each $T_J$ is an \EDB or \IDB predicate, and \revone{(3) $([\ell], \mE)$ is the \textit{associated hypergraph} of $\sumprod$ with hyperedges $\mathcal{E} \subseteq 2^{\nodes}$ and vertex set as $\bx_{[\ell]}$}. 
\eat{

A \SumProd query $\sumprod(\bx_H)$ is naturally associated with a hypergraph $(\nodes, \mE)$, where $\nodes = \{1, 2, \ldots, \ell\}$ are the vertices and $\mathcal{E} \subseteq 2^{\nodes}$  are the hyperedges. For each vertex $i \in \nodes$, we associate the variable $x_i$. If $T_J(\bx_J)$ appears in the body for $J \subseteq [\ell]$, then $J \in \mathcal{E}$.} 





\eat{
where 
\begin{enumerate}
     \item $\bx_I$ denotes schema $(x_i)_{i \in I}$, for any $I \subseteq \nodes$. We use $\vars{\cdot}$ for the set of variables that occur in an atom $T_J$, a schema $\bx_J$, or a query $\sumprod$. That is, $\vars{T_J} = \vars{\bx_J} = J$ and $\vars{\sumprod} = [\ell]$;
     \item $\bx_H$ is the {\em head variables} of $\sumprod$;
     \item $T_J(\bx_J) : \textsf{Dom}(\bx_J) \rightarrow \boldsymbol{D}$ is a function that has two domains, the standard relation and the annotations. It is stored as table of all tuples $(\ba_J, T_J(\ba_J))$, where $T_J(\ba_J) \in \boldsymbol{D}$ is the {\em annotation} of the constant tuple $\ba_J \in \textsf{Dom}(\bx_J)$. For tuples $\ba_J$ not in the table, $T_J(\ba_J) = \zerobf$ implicitly. Such functions $T_J$ are called {\em $\bS$-relations}. Thus, a standard relation is essentially a $\mathbb{B}$-relation.
\end{enumerate}
}



For every \Datalogo program in the sum-product query form, we will assume that there is a unique \IDB that we identify as the \textit{target} (or \textit{output}) of the program. \revone{We use $\arity{P}$ to denote the maximum number of variables contained in any \IDB of a program $P$.}

\introparagraph{Monadic, Linear \& Chain \Datalogo} We say that a \Datalogo program $P$ is {\em monadic} if every \IDB is unary (i.e. $\arity{P} = 1$);  a \Datalogo program is {\em linear} if every \SumProd query in every rule contains at most one \IDB (e.g., the program in Eq.~\eqref{eq:apsp} is linear). A {\em chain} query is a \SumProd query over binary predicates as follows:
$$ \sumprod(x_1, x_{k+1}): \quad \bigoplus_{\bx_{\{2, \ldots, k\}}} T_1(x_1,x_2) \otimes T_2(x_2, x_3) \otimes \dots \otimes T_k(x_k, x_{k+1}).$$
\revthree{A {\em chain} \Datalogo program is a program where every rule consists of one or a sum of multiple chain queries.} 
A chain \Datalogo program corresponds to a Context-Free Grammar (CFG)~\cite{Yannakakis90}. 

\introparagraph{Least Fixpoint Semantics} 
The least fixpoint semantics of a \Datalogo program $P$ over a semiring $\bS$ is defined as the standard \textsf{Datalog}, through its immediate consequence operator (\textsf{ICO}). An \textsf{ICO} applies all rules in $P$ exactly once over a given instance and uses all derived facts (and the instance) for the next iteration. The naive evaluation algorithm iteratively applies \textsf{ICO} until a least fixpoint is reached (if any).
In general, naive evaluation of \Datalogo over a semiring $\bS$ may not converge, depending on $\bS$. Kleene's Theorem~\cite{kleeneTheorem} shows that if $\bS$ is $\omega$-continuous and the semiring $\bS$ is $\omega$-complete, then the naive evaluation converges to the least fixpoint. Following prior work~\cite{NPA, green2007provenance}, we assume that $\bS$ is $\omega$-continuous and $\omega$-complete.

\introparagraph{$\bS$-equivalent} Two \Datalogo programs $P, G$ are $\bS$-equivalent if for any input \EDB instance annotated with elements of a semiring $\bS$, the targets of $P$ and $G$ are identical $\bS$-relations when least fixpoints are reached for both.


\introparagraph{Parameters \& Computational Model} \looseness-1
We use $m$ to denote the \revone{sum of sizes} of the input \EDB $\bS$-relations to a \Datalogo program \revone{(henceforth referred to as the {``\em input size is $m$''})}. We use $n$ to denote the size of the active domain of \EDB $\bS$-relations (i.e., the number of distinct constants that occur in the input). If $\arity{P} = k$, then $n \leq m \cdot k$. We assume {\em data complexity} \cite{Vardi82}, i.e. the program $P$ size (the total number of predicates and the variables) is a constant. We use the standard word-RAM model \revone{with $O(\log m)$-bit words and unit-cost operations}~\cite{AlgBook} for all complexity results. $\widetilde{{O}}$ hides poly-logarithmic factors in the size of the input.
\section{Framework and Main Results}
\label{sec:framework}

This section presents a general framework to analyze the data complexity of \Datalogo\ programs. We show how to use this framework to obtain both state-of-the-art and new algorithmic results.

A common technique to measure the runtime of a \Datalogo\ program is to multiply the number of iterations until the fixpoint is reached with the cost of each \textsf{ICO} evaluation. Although this method can show a polynomial bound in terms of data complexity, it cannot generate the tightest possible upper bound. 
Our proposed algorithmic framework allows us to decouple the semiring-dependent fixpoint computation from the structural properties of the \Datalogo\ program. It splits the computation into two distinct phases, where the first phase concerns the {\em logical structure} of the program, and the second the {\em algebraic structure} of the semiring.

\introparagraph{Grounding Generation} \looseness-1 In the first phase, the \Datalogo\ program is transformed into a $\sigma$-equivalent {\em grounded program}, where each rule contains only constants. A {\em grounding} of a grounded \Datalogo\ program is a system of polynomial equations where we assign to each grounded \EDB\ atom a distinct coefficient $e$ in the semiring and to each grounded \IDB\ atom a distinct variable $x$. Our goal in this phase is to construct a $\sigma$-equivalent grounding $G$ of \revone{the smallest possible size $|G|$. The size of $G$ is measured as the total number of coefficients and variables in the system of polynomial equations}.

\introparagraph{Grounding Evaluation} In the second phase, we need to deal with evaluating the least fixpoint of $G$ over the semiring $\sigma$. The fixpoint computation now depends on the structure of the semiring $\sigma$, and we can completely ignore the underlying logical structure of the program. Here, we need to construct algorithms that are as fast as possible w.r.t. the size of the grounding $|G|$. For example, it is known that over the Boolean semiring, the least fixpoint of $G$ can be computed in time $O(|G|)$~\cite{GK04}.

\begin{table*}[t]
\caption{Summary of the grounding results. The $\widetilde{O}$ notation hides polylog factors in $m$ \revtwo{(total input size)} and $n$ \revtwo{(size of the active domain). $k$ denotes the arity of the program.}  See Section~\ref{sec:grounding} for definitions of $\fhw, \subw$.}
    \centering
    \renewcommand{\arraystretch}{1.5}
    \scalebox{0.99}{
    \begin{tabular}{c | p{2.5cm} | c | l | p{2.5cm}  }\toprule
    {\bf Semiring} &  {\bf Class of Programs} & {\bf \IDB Arity} & {\bf Grounding Size \& Time} & {\bf Result}  \\ \midrule
     all & rulewise-acyclic & $k$ & $O(n^{k-1} \cdot (m+n^k))$ & \autoref{thm:main:acyclic} \\
     all & rulewise free-connex acyclic & $k$ & $O(m+n^k)$ & \autoref{thm:main:acyclic:freeconnex} \\
     all & linear \& rulewise-acyclic & $2$ & $O(n \cdot m )$ & \autoref{prop:linear} \\
     all & fractional hypertree-width $\leq \fhw$ & $k$ & $O(n^{k-1} \cdot (m^\fhw + n^{k \cdot \fhw}))$  & \autoref{prop:main:fhw}  \\ 
      \midrule
      dioid & submodular width $\leq \subw$ & $k$ & $\widetilde{O}(n^{k-1} \cdot (m^\subw + n^{k \cdot \subw}))$  & \autoref{thm:main}  \\ 
      dioid & submodular width $\leq \subw$ & $1$ & $\widetilde{O}(m^\subw)$  & Corollary of~\autoref{thm:main}  \\ 
      dioid & free-connex submodular width $\leq \csubw$ & $k$ & $\widetilde{O}(m^\csubw + n^{k \cdot \csubw})$  & \autoref{thm:main:free-connex}  \\ 
      dioid & linear \& submodular width $\leq \subw$ & $k$ & $\widetilde{O}(n^{k-1} m^{\subw - 1} \cdot (m + n^k))$  & \autoref{prop:main:linear}  \\ 
    \end{tabular}}
    \label{tab:summary}
\end{table*}

\subsection{Grounding Generation}

The first phase of the framework generates a grounded program.

\begin{example}\label{ex:polynomial}
We will use as an example the following variation of program~\eqref{eq:apsp}, over an arbitrary semiring $\bS$:
\begin{align*} 
 T(x_1, x_2) &\obtainedfrom R(x_1, x_2) \oplus \left(\bigoplus_{x_3} T(x_1, x_3) \otimes U(x_3) \otimes \revtwo{R(x_3, x_2)} \right).
\end{align*}
We introduced an unary atom $U$ to capture properties of the nodes in the graph. Consider an instance where $R$ contains two edges, $(a, b)$ and $(b, c)$, and $U$ contains three nodes, $a,b,c$. 
One possible $\sigma$-equivalent grounded \Datalogo program (over any semiring $\sigma$) is:
\begin{align*}
    T(a, b) & \obtainedfrom R(a, b) \oplus T(a, a) \otimes U(a) \otimes R(a, b), &  T(a, a) & \obtainedfrom \mathbf{0}   \\
 T(a, c) & \obtainedfrom T(a, b) \otimes U(b) \otimes R(b, c), & T(b, c)  & \obtainedfrom R(b, c).
\end{align*}
This corresponds to the following system of polynomial equations:
\begin{align*}
    x_{ab} & = e_{ab} \oplus x_{aa} \otimes f_a \otimes e_{ab} & \quad x_{aa} & = \mathbf{0} \\
    x_{ac} & = x_{ab} \otimes f_b \otimes e_{bc} & \quad x_{bc} & = e_{bc}.
\end{align*}
\end{example}
As there is a one-to-one correspondence between a grounded program and its grounding, we will only use the term grounding from now on. 
The naive way to generate a $\sigma$-equivalent grounded program is to take every rule and replace the variables with all possible constants. However, this may create a grounding of very large size. Our key idea is that we can optimize the size of the generated grounding by exploiting the logical structure of the rules.   

\introparagraph{Upper Bounds}
Our first main result (Section~\ref{sec:acyclic}) considers the body of every sum-product query in the \Datalogo program is acyclic \revthree{(formally Definition~\ref{def:acyclic})}; we call such a program {\em rulewise-acyclic}

\begin{theorem}\label{thm:main:acyclic}
Let $P$ be a rulewise-acyclic \Datalogo program over some semiring $\bS$, \revtwo{with input size $m$, active domain size $n$}, and $\arity{P} \leq k$. Then, we can construct a $\bS$-equivalent grounding in time (and has size) $O(n^{k-1} \cdot (m + n^{k}))$.
\end{theorem}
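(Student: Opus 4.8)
The plan is to reduce to a single acyclic \SumProd query and then ground that query by a Yannakakis-style pass over a join tree. Since $P$ has constant size under data complexity and its rules have pairwise distinct \IDB heads, the grounding of $P$ is the union of the groundings of its individual rules, and each rule is a fixed $\oplus$ of \SumProd queries; so it suffices to produce, for one acyclic query $\sumprod(\bx_H)$ with $|H|\le k$, a $\bS$-equivalent grounding of size and construction time $O(n^{k-1}(m+n^k))$, and then sum over the constantly many queries.

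The key idea for a single query is to neutralize the head variables. An acyclic body with at most one free variable is \emph{free-connex}: adding a singleton head hyperedge $\{z\}$ to an acyclic hypergraph cannot create a cycle, since $\{z\}$ is contained in some body atom and is removed immediately by GYO reduction. I would therefore single out one head variable $z$ and iterate over all $n^{k-1}$ assignments of the remaining $k-1$ head variables to active-domain constants; fixing these variables only shrinks the atoms that mention them and cannot destroy acyclicity, so each of the $n^{k-1}$ residual queries is acyclic with the unique free variable $z$, hence free-connex.

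For each fixed assignment I would ground the residual free-connex query by a bottom-up message pass. Rooting its join tree so that $z$ is not buried, each node emits a fresh intermediate \IDB (a ``message'') indexed by the separator it shares with its parent, obtained by summing out the bound variables local to its subtree. A message is computed by iterating over the tuples of the node's own atom and, for each, looking up the already-aggregated child messages at the appropriate separator values and accumulating into the separator bucket; this never materializes a join, so each node costs time linear in its atom's size, i.e. $O(m+n^k)$, and every message has at most $n^k$ grounded atoms. Because combining a child into its parent uses only commutativity and distributivity of $\otimes$ over $\oplus$, the folded system has the same least fixpoint as the original rule over every $\omega$-continuous semiring (idempotence is never used), so $\bS$-equivalence holds for an arbitrary semiring $\bS$. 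Summing the per-assignment cost $O(m+n^k)$ over the $n^{k-1}$ assignments yields the claimed $O(n^{k-1}(m+n^k))$ for both size and time.

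The hard part is less the bookkeeping than certifying the two structural claims that make the linear pass legitimate. First I must verify that fixing all but one head variable always yields a free-connex query and that a rooting exists in which $z$ stays at or above every separator, so that each message depends only on its separator and the aggregate-then-lookup step is correct; this is where acyclicity is essential and where a non-free-connex shape would otherwise force a genuine join of cost up to $n^{2k-1}$ per assignment. Second, because the body may contain recursive \IDB atoms (possibly $T$ itself), I must argue that introducing the intermediate message predicates inside a recursive rule is a semantics-preserving fold/unfold: by $\omega$-continuity the least fixpoint of the extended system restricted to $T$ coincides with that of $P$, independently of the semiring. Once these are established, the size summation and the reduction across rules are routine.
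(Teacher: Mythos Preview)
Your proposal is correct and takes a route that is genuinely different from the paper's. The paper does not fix head variables. Instead, it roots the join tree at a node containing at least one head variable and introduces, at each edge $(s,t)$ of the tree, an intermediate \IDB whose schema is $e_{(s,t)} = (\chi(s)\cap\chi(t))\cup H_t$, i.e., the usual separator augmented by the head variables that occur in the subtree below $t$. The $n^{k-1}$ factor then arises \emph{locally} at every node $s$: the body of the refactored rule at $s$ ranges over $S = \chi(s) \cup \bigcup_t e_{(s,t)} \subseteq \chi(s)\cup H_s$, and the running-intersection property (together with the choice of root) gives $|S\setminus\chi(s)|\le k-1$, so one enumerates $O(n^{k-1})$ extensions of each tuple of $T_{\chi(s)}$.

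Your reduction---fix all but one head variable and observe that an acyclic query with a single free variable is automatically free-connex (root at any bag containing $z$; then $\textsf{TOP}_r(z)=r$ has no ancestors)---is an elegant shortcut the paper does not take. It turns Theorem~\ref{thm:main:acyclic} into $n^{k-1}$ instances of (the $|H|=1$ case of) the free-connex result, each a straight Yannakakis pass. The price is that subtrees not touched by any head variable get grounded afresh for every assignment $\ba$, whereas the paper's construction shares them (its intermediate \IDB there has schema just $\chi(s)\cap\chi(t)$, since $H_t=\emptyset$). Both stay within the stated bound, but the paper's grounding can be strictly smaller on concrete instances. Conversely, your argument is conceptually cleaner: once you root at a bag containing $z$, the $O(m+n^k)$ per-assignment cost is immediate, and the only global bookkeeping is the outer loop over $\ba$.

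Two points to tighten in your write-up. First, be explicit that the intermediate message \IDBs are indexed by the fixed assignment $\ba$ (so the grounding has $O(n^{k-1})$ \emph{copies} of each message predicate); otherwise the fold/unfold correctness claim is ambiguous. Second, your phrasing ``looking up the already-aggregated child messages and accumulating'' reads like evaluation rather than grounding---make clear that you are emitting grounded rules that \emph{reference} the child \IDBs symbolically, not computing values, since body \IDB atoms (including recursive occurrences of $T$) have no values at grounding time.
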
    

A direct result of the above theorem is that for {\em monadic} \Datalogo, where $\arity{P}=1$, we obtain a grounding of size $O(m)$, which is essentially optimal. The main technical idea behind Theorem~\ref{thm:main:acyclic} is to construct the join tree corresponding to each rule, and then decompose the rules following the structure of the join tree. 

It turns out that, in analogy to conjunctive query evaluation, we can also get good bounds on the size of the grounding by considering a width measure of tree decompositions of the rules. We show in Section~\ref{sec:grounding} the following theorem, which relates the grounding size to the maximum {\em submodular width} (Section~\ref{sec:grounding}) across all rules.

\begin{theorem}\label{thm:main}
Let $P$ be a \Datalogo program where $\arity{P} \leq k$, $\subw$ is the maximum submodular width across all rules of $P$, and $\bS$ is a dioid and suppose \revtwo{the input size is $m$, and the active domain size is $n$}. Then, we can construct a $\bS$-equivalent grounding in time (and has size) $\widetilde{O}(n^{k-1} \cdot (m^\subw + n^{k \cdot \subw}))$.
\end{theorem}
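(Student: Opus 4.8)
The plan is to reduce the entire program to the problem of grounding a single \SumProd query, and then to realize that grounding as the execution \emph{trace} of the PANDA algorithm run over a dioid. Since $P$ has constant size (data complexity), it has constantly many rules, and each rule $T(\bx_H) \obtainedfrom \sumprod_1 \oplus \sumprod_2 \oplus \cdots$ is an $\oplus$ of constantly many \SumProd queries. Because $\oplus$ is idempotent and monotone in a dioid, it suffices to ground each $\sumprod_i$ separately and take the $\oplus$ of the resulting equation bodies: the least fixpoint of the combined system agrees with that of $P$, since each grounded rule reproduces exactly one application of the immediate consequence operator per head tuple, and the auxiliary \IDBs introduced below depend only on the body atoms (no fresh recursion), so folding them away recovers $P$. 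I would therefore fix one \SumProd query $\sumprod(\bx_H)$ with associated hypergraph $(\nodes, \mE)$, $|H| \le k$, and submodular width $\subw$, and construct an $\bS$-equivalent grounding of it of size $\widetilde{O}(n^{k-1}(m^\subw + n^{k\subw}))$.

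For the single query, the key observation is that a grounding is precisely the trace of an evaluation plan: every intermediate relation the plan materializes becomes an auxiliary \IDB, and every join/aggregation step becomes a block of $\otimes$/$\oplus$ equations over the grounded tuples. I would run PANDA on $\sumprod$, treating each \EDB atom as its given relation (support $\le m$) and each \IDB atom as a \emph{full} relation over the active domain (support $\le n^{|J|} \le n^k$, since tuples outside the active domain are annotated $\zerobf$ and never contribute). PANDA partitions the input into $\widetilde{O}(1)$ pieces and, on each piece, follows a tree decomposition whose bags are all materializable within the submodular-width budget; with the two input regimes $m$ and $n^k$, splitting the bound $N^{\subw}$ gives the $m^\subw + n^{k\subw}$ term. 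The auxiliary \IDBs are exactly the bag relations across all pieces and decompositions, and the equations express each bag as the $\otimes$-product of its child bags and assigned atoms, $\oplus$-aggregated over eliminated variables. Correctness over a dioid is where idempotency is essential: a single full assignment $\bx_{\nodes} = \ba$, hence a single monomial $\bigotimes_{J\in\mE} T_J(\ba_J)$, may be produced by several pieces or decompositions, but its value does not depend on how it is computed, so summing it more than once is absorbed by $a \oplus a = a$. This is exactly what fails for non-idempotent semirings, which is why the dioid hypothesis appears.

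The remaining $n^{k-1}$ factor comes from assembling the per-head-tuple equations for $T(\bx_H)$ out of the bag relations. In general $\sumprod$ is not free-connex with respect to $\bx_H$, so no single bag holds all $k$ head variables together and the output must be stitched along the decomposition; fixing $k-1$ of the head variables (an enumeration of size $n^{k-1}$) reduces each residual assembly to a free-connex-like instance that the submodular-width computation already handles, and the bound multiplies through. When $\sumprod$ is free-connex this step is unnecessary, recovering the sharper $\widetilde{O}(m^\csubw + n^{k\csubw})$ of \autoref{thm:main:free-connex}; and restricting to a single tree decomposition recovers the fractional-hypertree-width bound of \autoref{prop:main:fhw}, which is a useful sanity check on the exponents.

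The main obstacle I anticipate is the correctness argument rather than the size bound: I must show that treating \IDB atoms as full active-domain relations, combined with PANDA's data-dependent input partition, yields a grounding whose least fixpoint over $\bS$ matches that of $P$ for \emph{every} \EDB valuation simultaneously. The delicate points are (i) that the partition together with the family of tree decompositions covers every contributing monomial at least once (completeness) yet never fabricates a monomial absent from the naive grounding (soundness), and (ii) that the auxiliary bag \IDBs introduce no spurious recursion and do not perturb the least fixpoint of the target \IDB. Both rest on the dioid structure, idempotency for (i) and monotonicity for the Kleene-iteration argument in (ii), so I would isolate a clean lemma stating that, over a dioid, the PANDA trace of a \SumProd query is an $\bS$-equivalent grounding of that query of size $\widetilde{O}(n^{k-1}(m^\subw + n^{k\subw}))$, and then compose it across the constantly many rules exactly as in the first paragraph.
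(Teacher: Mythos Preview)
Your high-level plan matches the paper's: reduce to one \SumProd query, run \textsf{PANDA} with \IDB atoms treated as full active-domain relations of size $n^k$ to materialize bag views of size $\widetilde{O}(m^{\subw}+n^{k\cdot\subw})$, use idempotency of $\oplus$ to absorb the double-counting across the multiple tree decompositions, and then assemble the head. The correctness concerns you flag (completeness/soundness of the cover, no spurious recursion from the auxiliary bag \IDBs) are exactly the ones the paper isolates in its Lemma~\ref{lemma:generalizedPANDA}.

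The one place where you diverge from the paper is the source of the $n^{k-1}$ factor, and your argument there is under-justified. You propose to enumerate $n^{k-1}$ assignments to $k-1$ head variables and claim each residual is ``free-connex-like'' and already handled by the submodular-width computation. But the bag views were computed \emph{once}, independently of any head-variable binding, so fixing head values does not shrink them; you would need to argue that the total work over all $n^{k-1}$ residuals telescopes to $n^{k-1}$ times the bag size, and that is not immediate from what you wrote. The paper avoids this entirely: once the bags $B_{\chi_i(t)}$ are grounded, the query $\bigoplus_i \bigoplus_{\bx_{[\ell]\setminus H}} \bigotimes_{t} B_{\chi_i(t)}$ is \emph{acyclic} (each $\mT_i$ is its own join tree), so it simply invokes Theorem~\ref{thm:main:acyclic} with ``input size'' $\widetilde{O}(m^{\subw}+n^{k\cdot\subw})$ and head arity $\le k$, and the $n^{k-1}$ factor falls out of that theorem's bound directly. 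This modular reuse is both cleaner and gives the free-connex refinement (Proposition~\ref{thm:main:free-connex}) for free by swapping in Proposition~\ref{thm:main:acyclic:freeconnex}.

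One small clarification: your worry about the grounding being correct ``for every \EDB valuation simultaneously'' is misplaced. The \EDB instance is fixed; what varies during the fixpoint is the \IDB annotations, and you already handled that by treating each \IDB atom as the full $n^k$ relation when running \textsf{PANDA}. The partition is therefore determined once and for all, and the least-fixpoint equivalence is for that single input.
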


The above theorem requires the semiring to be idempotent w.r.t. the $\oplus$ operation (i.e. a dioid). For a general semiring, the best we show is that we can replace $\subw$ with the weaker notion of fractional hypertree width (Proposition \ref{prop:main:fhw}). Sections~\ref{sec:acyclic} and~\ref{sec:grounding} show refined constructions that improve the grounding size when the \Datalogo\ program has additional structure (e.g., linear rules); we summarize these results in Table~\ref{tab:summary}.

\introparagraph{Lower Bounds} One natural question is: do Theorem~\ref{thm:main:acyclic} and~\ref{thm:main} attain the best possible grounding bounds? This is unlikely to be true for specific \Datalogo\ fragments (e.g. linear \Datalog has a tighter grounding as shown in Proposition~\ref{prop:main:linear}); however, we can show optimality for a class of programs (see Appendix~\ref{app:lowerbound} for the proof).

\begin{theorem} \label{thm:main_lower_bound}
    Take any integer $k \geq 2$ and any rational number $w \geq 1$ such that $k \cdot w$ is an integer. 
    There \revone{exists} a (non-linear) \Datalogo\ program $P$ over the tropical semiring $\mathsf{Trop}^{+}$ with $\arity{P} \leq k$, such that:
    \begin{enumerate}
        \item $\subw(P) = w$, 
        \item any $\mathsf{Trop}^{+}$-equivalent grounding has size $\Omega(n^{k-1+k w})$,
        \item under the min-weight $\ell$-Clique hypothesis~\cite{DBLP:conf/soda/LincolnWW18}, no algorithm that evaluates $P$ has a runtime of $O(n^{k-1+k w - o(1)})$.
    \end{enumerate}
\end{theorem}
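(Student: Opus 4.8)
Set $\ell \defeq k-1+kw$; by the hypotheses on $k$ and $w$ this is a positive integer, and the target exponent $k-1+kw$ equals $\ell$. The plan is to commit to an explicit witness program $P$ over $\mathsf{Trop}^{+}$ and then argue its width, its grounding size, and its conditional hardness in three separate steps. The guiding example is the min-weight triangle, computed by the non-linear program $T(x_1,x_3) \obtainedfrom \bigoplus_{x_2} R(x_1,x_2)\otimes R(x_2,x_3)$ together with $\mathsf{Out} \obtainedfrom \bigoplus_{x_1,x_3} T(x_1,x_3)\otimes R(x_1,x_3)$: each rule is acyclic (so $\subw = 1$), yet the only grounding is the $\Theta(n^{3})$ min-plus product, matching $k=2$, $w=1$. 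I would generalize this to a program over arity-$k$ \EDB/\IDB atoms that stages the computation of a min-weight $\ell$-clique objective through arity-$k$ intermediate \IDBs, where the body hypergraph of the single hard rule is engineered to have submodular width exactly $w$. Concretely, I would let the bound variables of that rule form a clique-like core of $kw$ vertices attached to $k-1$ persistent head vertices, so that enumerating the $n^{k-1}$ head valuations and solving a width-$w$ core for each is forced.

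\textbf{Submodular width.} I would establish $\subw(P)=w$ in both directions. For $\subw(P)\le w$, I exhibit for every rule the family of tree decompositions required by the definition of submodular width, each of whose bags admits a fractional edge cover of value at most $w$; this is the same decomposition that the algorithm of Theorem~\ref{thm:main} consumes. For $\subw(P)\ge w$, I would follow the information-theoretic recipe: produce a monotone, submodular, edge-dominated set function (the entropy profile of a carefully chosen distribution on the satisfying tuples of the core) under which every tree decomposition has a bag of fractional cover value at least $w$. The delicate point is calibrating the core so the two bounds meet exactly at the prescribed rational $w$, which is where the integrality of $kw$ is used.

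\textbf{Grounding lower bound.} For the unconditional $\Omega(n^{k-1+kw})$ bound I would argue directly on the semantics over $\mathsf{Trop}^{+}$. In any $\mathsf{Trop}^{+}$-equivalent grounding, the least-fixpoint value of the target under a given weighting is the minimum, over derivation trees of the grounded system, of the $\otimes$-sum (ordinary sum) of the participating coefficients; because $\mathsf{Trop}^{+}$ is absorptive and totally ordered, optimal derivations are simple, so the number of relevant monomials is within a fixed-program constant of $|G|$. I would then build a fooling family of $\Omega(n^{\ell})$ weightings, one per candidate $\ell$-tuple, each making that tuple's monomial the unique minimizer of the target. Distinct minimizers force distinct monomials to be present, so the grounding must contain $\Omega(n^{\ell})=\Omega(n^{k-1+kw})$ atom occurrences.

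\textbf{Conditional runtime lower bound.} Finally I would reduce min-weight $\ell$-clique to evaluating $P$: given an edge-weighted graph on $n$ vertices, populate the \EDB relations of $P$ (over an active domain of size $\Theta(n)$) so that the target annotation equals the minimum weight of an $\ell$-clique, the arity-$k$/width-$w$ staging being exactly what lets one evaluation assemble the $\binom{\ell}{2}$ optimal edge weights. An algorithm running in $O(n^{k-1+kw-o(1)})=O(n^{\ell-o(1)})$ would then decide min-weight $\ell$-clique in $n^{\ell-o(1)}$, contradicting the min-weight $\ell$-Clique hypothesis. I expect the \emph{main obstacle} to be the construction itself: designing the core hypergraph so that its submodular width is exactly $w$ while it simultaneously admits a faithful, active-domain-preserving embedding of the clique instance, so that the reduction and the grounding count both land on the exponent $k-1+kw$. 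These constraints pull against each other — lowering the width tends to destroy the clique structure the hardness reduction relies on — and reconciling them, together with producing the witnessing submodular function for the $\subw\ge w$ direction, is where the real work lies.
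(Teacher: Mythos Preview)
Your overall plan --- encode min-weight $\ell$-clique with $\ell = k-1+kw$ in an arity-$k$ program of submodular width $w$, then argue width, grounding size, and conditional hardness separately --- matches the paper's. The construction you describe as ``the main obstacle'' is in fact the easy part: the paper simply takes, for each head variable $x_i$, the product of the $k$-ary clique atoms $C(z_1,\dots,z_k)$ over \emph{all} size-$k$ subsets of $\{x_i,y_1,\dots,y_{kw-1}\}$. Because the clique-graph of this hypergraph is chordal, every non-redundant tree decomposition has the same bags $\{x_i,y_1,\dots,y_{kw-1}\}$, each of size $kw$ with fractional edge cover exactly $w$; both the width computation and the clique embedding then become routine, and the conditional runtime bound follows by the obvious reduction, as you say.

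The genuine gap is in your grounding lower bound. You assert that ``the number of relevant monomials is within a fixed-program constant of $|G|$,'' but this is false: a grounding is a (tropical) circuit, and a circuit of size $s$ can compute a polynomial with exponentially many monomials --- e.g.\ $y_i = y_{i-1}\otimes(a_i\oplus b_i)$ has size $O(s)$ but $2^s$ monomials. Hence a fooling family exhibiting $\Omega(n^{\ell})$ distinct optimal monomials does \emph{not} lower-bound $|G|$. The paper's argument is different: it first shows that any $\mathsf{Trop}^{+}$-equivalent grounding $G$ yields a tropical semiring circuit of size $O(|G|)$ computing the provenance polynomial of $P$ (Proposition~\ref{prop:grounding:circuit}), and then invokes Jukna's unconditional $\Omega(n^{\ell})$ lower bound on tropical (monotone min-plus) circuits for the $\ell$-clique polynomial~\cite{Jukna15}. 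That monotone-circuit lower bound is the missing nontrivial ingredient; it cannot be recovered by monomial counting alone, and your proposal does not supply a substitute for it.
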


\eat{When every predicate (both \IDB and \EDB) has arity at most $k$, we have $m = O(n^k)$, and thus Theorem~\ref{thm:main} gives an upper bound of $O(n^{k-1 + k \cdot \subw})$. Hence, the lower bound shows that the bound is essentially tight with respect to the parameter $n$.}

\subsection{Grounding Evaluation}

Given a grounding $G$ for a \Datalogo\ program, we now turn to the problem of computing the (least fixpoint) solution for this grounding. In particular, we study {\em  how fast can we evaluate $G$ under different types of semirings as a function of its size.} Ideally, we would like to have an algorithm that computes the fixpoint using $O(|G|)$ semiring operations; however, this may not be possible in general. In Section~\ref{sec:eval}, we will show fast evaluation strategies for two different classes of semirings: $(i)$ semirings of finite rank, and $(ii)$ semirings that are totally ordered and absorptive. Our two main results can be stated as follows.

\begin{theorem} \label{thm:algoRank}
We can evaluate a grounding $G$ over any semiring of rank $r$ using $O(r \cdot |G|)$ semiring operations.
\end{theorem}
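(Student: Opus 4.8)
The plan is to compute the least fixpoint of $G$ by a chaotic (worklist-based) iteration over an arithmetic-circuit representation of $G$, and to charge the total cost against the number of times an individual semiring value can strictly increase, which the rank bound caps at $r$. First I would flatten $G$ into a circuit $C$. Each polynomial equation $x_i = \bigoplus_j M_j$, with monomials $M_j$ of the form $e \otimes T_{J_1}(\dots) \otimes \dots$, is rewritten using $\oplus$-gates and $\otimes$-gates of fan-in two (binarizing each sum and each product). The leaves are the coefficients $e$ and the current values of the variable-gates, and the gate computing the right-hand side of the equation for $x_i$ feeds back into the gate holding $x_i$, creating the cyclic dependency resolved by the fixpoint. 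Binarization introduces only a constant factor of extra gates per factor and per summand, so $C$ has $O(|G|)$ gates and $O(|G|)$ wires, every gate having fan-in at most two.

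Next I would run an asynchronous fixpoint iteration on $C$: initialize every gate to $\mathbf{0}$, maintain a worklist of gates whose value may have changed, and repeatedly pop a gate, recompute its output from its (at most two) current inputs using a single semiring operation, and whenever the output strictly increases, push every gate in its fan-out. For correctness I would invoke monotonicity of $\oplus$ and $\otimes$ in a naturally-ordered semiring, which follows from distributivity and the definition of $\sqsubseteq$: if $a \sqsubseteq b$ then $a \oplus c \sqsubseteq b \oplus c$ and $a \otimes c \sqsubseteq b \otimes c$. Consequently, starting from $\mathbf{0}$ the value held at each gate is non-decreasing throughout the run, so the computation realizes an increasing, Kleene-style chain at every gate. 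A standard fairness argument — every enqueued gate is eventually processed — combined with this monotonicity shows the iteration halts exactly at the least fixpoint guaranteed by Kleene's theorem, in agreement with naive evaluation.

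For the running time I would argue as follows. Since $\bS$ has rank $r$, the non-decreasing sequence of values taken by any single gate is a chain with at most $r$ strict increases, so each gate's output changes at most $r$ times. A gate is recomputed only when one of its at most two inputs changes, hence at most $2r$ times, each recomputation costing $O(1)$ semiring operations; summing over the $O(|G|)$ gates gives $O(r \cdot |G|)$. The bookkeeping is charged identically: when a gate's value increases (at most $r$ times) we push its fan-out, and the total number of such pushes equals the number of input-change events at dependent gates, which is again $\sum_g 2r = O(r \cdot |G|)$. This establishes the claimed bound, and specializes to $O(|G|)$ over the Boolean semiring ($r = 1$), recovering the result of~\cite{GK04}.

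The main obstacle I anticipate is the amortized accounting rather than the algorithm itself. I must ensure that the potentially large fan-out of a gate never blows up the cost, which I handle by charging each notification to a future recomputation of the notified gate and bounding recomputations per gate through the rank; I must also confirm that binarization preserves both the $O(|G|)$ size and the fan-in bound, so that each re-evaluation is genuinely a single semiring operation. A secondary point to treat carefully is the correctness of asynchronous, as opposed to synchronous, iteration: I need the fairness of the worklist together with monotonicity to guarantee convergence to the \emph{least} fixpoint rather than to some arbitrary fixpoint.
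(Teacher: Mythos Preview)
Your proposal is correct and essentially identical to the paper's proof: the paper's ``2-canonical'' transformation (Lemma~\ref{lem:2canonical}) is exactly your binarization into a fan-in-two circuit, Algorithm~\ref{algoRank} is precisely your worklist iteration, and the runtime accounting (each variable changes at most $r$ times, hence each equation is visited at most $2r$ times) matches verbatim. The only cosmetic difference is that the paper formalizes your ``fairness/monotonicity $\Rightarrow$ least fixpoint'' remark via an explicit \emph{asynchronous Kleene chain} argument (Proposition~\ref{prop:asyncKleene}), showing that any such chain stays below the synchronous Kleene chain and hence, upon termination at a fixpoint, equals the least one.
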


\begin{theorem} \label{thm:algoAbsorp}
We can evaluate a grounding $G$ over any absorptive semiring with total order using $O(|G| \log |G|)$ semiring operations.
\end{theorem}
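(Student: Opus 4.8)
The plan is to adapt Dijkstra's algorithm---more precisely, Knuth's generalization of it to the context-free grammar problem---to the least fixpoint of the grounding $G$. First I would record the two structural facts about an absorptive, totally ordered semiring $\bS$ that make a greedy strategy correct. Since $\bS$ is a dioid, $\oplus$ is the join in the natural order, so over a total order $a \oplus b$ is simply the $\sqsubseteq$-larger of $a$ and $b$; and $\otimes$ is monotone. Crucially, absorptiveness ($\onebf \oplus a = \onebf$, i.e. $a \sqsubseteq \onebf$) together with monotonicity yields the \emph{superiority} property $a \otimes b \sqsubseteq a \otimes \onebf = a$, and symmetrically $a \otimes b \sqsubseteq b$: every product is $\sqsubseteq$-dominated by each of its factors. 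Intuitively this says that extending a derivation can only push its value down toward $\zerobf$, which is exactly the monotonicity Dijkstra relies on when it finalizes the largest tentative value first. It also implies that cycles in the (recursive) grounding never help, so the least-fixpoint value $v_i$ of each variable $x_i$ equals the $\oplus$-join, over all \emph{finite} derivation trees of $x_i$, of the product of the coefficients at their leaves.

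Next I would set up the algorithm. View $G$ as equations $x_i = \bigoplus_t M_t$, where each monomial $M_t$ is a product of coefficients and of some \IDB variables. I maintain a tentative value $d_i$ for each variable (initialized to the $\oplus$ of the constant monomials of $x_i$, and $\zerobf$ if it has none), a finalized flag, and for each monomial a counter of its not-yet-finalized variables. Unfinalized variables are kept in a max-heap keyed by $d_i$ under $\sqsubseteq$; note a comparison $a \sqsubseteq b$ is the test $a \oplus b = b$, hence costs one semiring operation. I repeatedly extract the $\sqsubseteq$-maximum variable $x_i$, finalize it with value $d_i$, and for every monomial $M_t$ containing $x_i$ decrement its counter; when a counter reaches $0$, all variable-factors of $M_t$ are finalized, so I evaluate $M_t$ and fold its value into the tentative value of its head variable, adjusting the heap. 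Variables still carrying $d_i = \zerobf$ at termination are finalized to $\zerobf$.

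The heart of the proof is the finalization invariant: when $x_i$ is extracted it already holds its true value $v_i$. I would prove by induction on extraction order that (i) every tentative value is always a lower bound, $d_i \sqsubseteq v_i$, since each evaluated monomial produces a genuine derivation value and an $\oplus$ of such values stays $\sqsubseteq v_i$; and (ii) the extracted maximum is tight. For (ii), let $D = d_i$ be the maximum tentative value among unfinalized variables. I show $v_y \sqsubseteq D$ for \emph{every} unfinalized $y$: taking an optimal derivation tree $\tau$ of $y$ and descending into an unfinalized child at each step, superiority gives a $\sqsubseteq$-increasing chain of subtree values, which terminates at a node $w$ all of whose children are finalized; that monomial was therefore already evaluated with final (hence $\sqsupseteq$) values, so its value is $\sqsubseteq d_w$, and since $w$ is unfinalized, $d_w \sqsubseteq D$. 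Chaining gives $v_y = \mathrm{val}(\tau) \sqsubseteq \cdots \sqsubseteq d_w \sqsubseteq D$; in particular $v_i \sqsubseteq D = d_i \sqsubseteq v_i$, so $d_i = v_i$. The \emph{main obstacle} is making this descent airtight when the monomials are nonlinear (several \IDB variables), where the classical single-edge Dijkstra argument must be replaced by the grammar/hyperpath argument above; this is also precisely where totality of the order is used, since the step $d_w \sqsubseteq D$ needs the extracted key to dominate \emph{all} other unfinalized keys, not merely to be maximal.

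Finally I would bound the cost. Each monomial is evaluated exactly once, when its last variable is finalized, for $\sum_t |M_t| = O(|G|)$ semiring operations in total; each such evaluation triggers at most one heap key-update, and each of the at most $|G|$ variables is extracted once, giving $O(|G|)$ heap operations of cost $O(\log |G|)$ each (the heap holds at most $|G|$ entries). Summing yields $O(|G| \log |G|)$ semiring operations, as claimed. I would close by noting termination---each iteration finalizes a fresh variable, so the loop runs at most $|G|$ times---and observing that $\zerobf$-valued variables and nonterminating recursions are handled correctly, since by superiority any monomial touching a $\zerobf$-valued variable contributes $\sqsubseteq \zerobf$.
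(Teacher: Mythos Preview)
Your proposal is correct and yields the same bound, but it diverges from the paper's proof in two linked ways. First, the paper does not work with arbitrary monomials and counters: it begins by rewriting $G$ into an equivalent \emph{2-canonical} grounding (every equation is $y = x \oplus z$ or $y = x \otimes z$), so each popped variable touches $O(1)$ right-hand sides per equation and no counter mechanism is needed. Second, and more substantively, the paper's correctness argument is not the derivation-tree greedy invariant you sketch; instead it casts the while-loop as an \emph{asynchronous Kleene chain}, proves two invariants (values are non-decreasing along the chain; popped values are non-increasing), and then argues by contradiction that the frozen values form a fixpoint, which by the asynchronous-Kleene bound must be the least one. Your route is essentially Knuth's grammar-shortest-path argument and is arguably more self-contained for this single theorem, but it leans on the identification of the least fixpoint with the supremum over finite derivation trees (you justify this informally via ``cycles never help''), whereas the paper's asynchronous-Kleene framework avoids proof-theoretic semantics entirely and is reused verbatim for the rank-$r$ algorithm. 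Conversely, your counter-based evaluation dispenses with the 2-canonical preprocessing and makes the $\sum_t |M_t| = O(|G|)$ accounting explicit; the paper gets the same accounting for free because 2-canonical equations have size $O(1)$. One small point you should make explicit: when a monomial's counter hits zero but its head variable is already finalized, the fold is a no-op and no heap adjustment occurs; this is harmless by your invariant (ii), but worth stating.
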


Many semirings of practical interest are captured by the above two classes. The Boolean semiring in particular is a semiring of rank 1. Hence, we obtain as a direct corollary the result in~\cite{GK04}:

\begin{corollary} \label{cor:algoBoolean}
We can evaluate a grounding $G$ over the Boolean semiring in time $O(|G|)$.
\end{corollary}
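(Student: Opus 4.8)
The plan is to derive the corollary by instantiating \autoref{thm:algoRank} at rank $r = 1$, so essentially no new algorithmic work is needed. The first step is to argue that the Boolean semiring $\mathbb{B} = (\{\textsf{false}, \textsf{true}\}, \vee, \wedge, \textsf{false}, \textsf{true})$ has rank exactly $1$. Under its natural order we have $\textsf{false} \sqsubseteq \textsf{true}$, so the longest strictly increasing $\omega$-chain is $\textsf{false} \sqsubset \textsf{true}$, which by definition has rank $1$; this is already observed in the definition of rank in Section~\ref{sec:prelim-semiring}. Hence $\mathbb{B}$ qualifies as a semiring of rank $r = 1$, and in particular it is $\omega$-complete and $\omega$-continuous (being finite), so the hypotheses under which groundings are evaluated are met.

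The second step is a direct substitution: \autoref{thm:algoRank} guarantees that a grounding $G$ over a rank-$r$ semiring can be evaluated using $O(r \cdot |G|)$ semiring operations, and plugging in $r = 1$ yields $O(|G|)$ semiring operations. The only remaining point is to convert the count of semiring operations into actual running time. Since the Boolean domain consists of just two elements, each application of $\oplus = \vee$ or $\otimes = \wedge$ (together with the associated reads, writes, and equality tests) is a constant-time primitive in the word-RAM model fixed in the preliminaries. Therefore $O(|G|)$ semiring operations translate into $O(|G|)$ total time, which is exactly the claimed bound and recovers the result of~\cite{GK04}.

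I do not expect a genuine obstacle here; the statement is a specialization rather than an independent theorem. The only item requiring any care is the operations-to-time accounting in the final step, namely confirming that constant-size domain indeed gives $O(1)$ cost per semiring operation so that no hidden logarithmic factor appears; once that bookkeeping is made explicit, the bound follows immediately from \autoref{thm:algoRank}.
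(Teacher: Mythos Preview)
Your proposal is correct and matches the paper's own reasoning: the paper simply notes that the Boolean semiring has rank $1$ and presents the corollary as an immediate consequence of \autoref{thm:algoRank}. Your additional care in converting ``$O(|G|)$ semiring operations'' into ``$O(|G|)$ time'' via the constant-size domain is a detail the paper leaves implicit, but it is sound and welcome.
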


The \textit{set semiring} $(2^{K}, \cup, \cap, \emptyset, K)$ for a finite set $K$ has rank $|K|$. In fact, all bounded distributive lattices have constant rank. As another example, the \emph{access control} semiring $(\{P, C, S, T, 0\}, \min,$ $\max, 0, P)$~\cite{AccessControl} employs a constant number of security classifications, where $P$ = \textsf{PUBLIC}, $C$ = \textsf{CONFIDENTIAL}, $S$ = \textsf{SECRET}, $T$ = \textsf{TOP-SECRET} and $P \sqsubset C \sqsubset S \sqsubset T \sqsubset 0$ is the total order for levels of clearance. For all these semirings, their grounding can be evaluated in time $O(|G|)$ by Theorem~\ref{thm:algoRank}.  The class of absorptive semiring with total order contains $\mathsf{Trop}^{+}$ that has infinite rank. Hence Theorem~\ref{thm:algoRank} cannot be applied, but by Theorem~\ref{thm:algoAbsorp}, we can evaluate $G$ over $\mathsf{Trop}^{+}$ in time $O(|G| \log |G|)$.

\subsection{Applications}
Finally, we discuss algorithmic implications of our general framework. In particular, we demonstrate that our approach captures as special cases several state-of-the-art algorithms for tasks that can be described in \Datalogo. Table~\ref{tab:runtimes} summarizes some of our results that are straightforward applications of our framework.

To highlight some of our results, Dikjstra's algorithm for single-source shortest path is a special case of applying Theorem~\ref{thm:algoAbsorp} after we compute the grounding of the program. As another example, Yannakakis \cite{Yannakakis90} showed that a binary (i.e. \EDB/\IDB arities are at most 2) rulewise-acyclic programs can be evaluated in time $O(n^3)$. By Theorem~\ref{thm:main:acyclic} (let $k = 2$), an equivalent grounding of size $O(n^3)$ can be constructed in time $O(n^3)$ for such programs. Then by Corollary~\ref{cor:algoBoolean}, we can evaluate the original program in $O(n^3)$ time, thus recovering the result of Yannakakis.

If the binary rulewise-acyclic \textsf{Datalog} program is also linear, it can be a evaluated in $O(m \cdot n)$ time~\cite{Yannakakis90}. We generalize this result to show that the $O(m \cdot n)$ bound holds for any linear rulewise-acyclic \Datalogo program with \IDB arity at most $2$, even if the \EDB relations are of higher arity (see Table~\ref{tab:runtimes}, Proposition~\ref{prop:linear}).

As another corollary of Theorem~\ref{thm:main} and Corollary~\ref{cor:algoBoolean}, monadic \Datalogo can be evaluated in time $\widetilde{O}(m^{\subw})$. This is surprising in our opinion, since $\widetilde{O}(m^{\subw})$ is the best known runtime for Boolean CQs. Hence, this result tells us that the addition of recursion with unary \IDB does not really add to the runtime of evaluation!

\begin{table*}[t]
\caption{Summary of runtime results. The $\widetilde{O}$ notation hides polylog factors in $m$ \revtwo{(total input size)} and $n$ \revtwo{(size of the active domain). $k$ denotes the arity of the program.} See Section~\ref{sec:grounding} for the definition of $\subw$.}
    \centering
    \renewcommand{\arraystretch}{1.5}
    \scalebox{1.0}{
    \begin{tabular}{p{3cm} | c | p{3.75cm} | l }\toprule
    {\bf Task} & {\bf Semiring} & {\bf Program} & {\bf Runtime} \\ \midrule
    \textsf{APSP}~\cite{floyd1962algorithm, warshall1962theorem} & Tropical &  $\!\begin{aligned}[t]
        &T(x_1, x_2) \obtainedfrom R(x_1, x_2) \oplus \\ &\bigoplus_{x_3} T(x_1, x_3) \otimes R(x_3, x_3)
    \end{aligned} $  & $\widetilde{O}(m \cdot n)$ \\
        Single-source Shortest Path (\textsf{SSSP})~\cite{dijkstra2022note} & Tropical &   $\!\begin{aligned}[t]&T(x_1) \obtainedfrom U(x_1) \oplus \\ &\bigoplus_{x_2} T(x_2) \otimes R(x_2, x_1) \end{aligned} $ & $\widetilde{O}(m)$    \\  
      \textsf{CFL} reachability~\cite{Yannakakis90} & Boolean & chain programs & $O(n^3)$ \\
       & Boolean & \IDB arity $\leq 2$ \& rulewise-acyclic  & $O(n^3)$ \\
     Regular Path Queries All-pairs~\cite{barcelo2013querying, RPQs} & Boolean & linear chain programs & $O(m \cdot n)$  \\
     \textsf{CFL-APSP}~\cite{valiant1974general, Yannakakis90}  & Tropical & chain programs & $\widetilde{O}(n^3)$  \\
      Monadic acyclic \Datalogo~\cite{gottlob2002datalog, GK04} & Boolean & monadic rulewise-acyclic & $O(m)$  \\
      Monadic \Datalogo~\cite{gottlob2002datalog, GK04} & Boolean & monadic & $\widetilde{O}(m^{\subw})$  \\
      Linear \Datalogo\ \cite{Hung2023LinearDatalogo} & Boolean & \IDB arity $\leq k$ & $\widetilde{O}(n^{k-1} m^{\subw - 1}  \cdot (m+n^k))$   \\
      \Datalogo\ & Boolean & \IDB arity $\leq k$ & $\widetilde{O}(n^{k-1} \cdot (m^\subw + n^{k \cdot \subw}))$   
    \end{tabular}}
    \label{tab:runtimes}
\end{table*}

\section{Grounding Evaluation}
\label{sec:eval}

This section presents algorithms for evaluating a grounding over \revthree{two types of commonly-used semirings~\cite{RamusatMS21, DynDatalog, POPS}}. First, Section~\ref{subsec:2-canonical} presents a procedure that transforms the grounding to one with a more amenable structure, called a {\em 2-canonical grounding}. Then, we present evaluation algorithms for semirings of rank $r$ (Section~\ref{subsec:rank}), and for absoprtive semirings that are totally ordered (Section~\ref{subsec:absorp}).

\subsection{2-canonical Grounding} \label{subsec:2-canonical}

We say that a grounding $G$ is {\em 2-canonical} if every equation in $G$ is of the form $y = x \oplus z$ or $y = x \otimes z$. As a first step of our evaluation, we first transform the given grounding into a $2$-canonical form using the following ~\autoref{lem:2canonical}. \eat{The proof is deferred to Appendix~\ref{app:2canonical}.}

\begin{lemma} \label{lem:2canonical}
A grounding $G$ can be transformed into a $\bS$-equivalent 2-canonical  grounding of size at most $4 |G|$ in time $O(|G|)$.
\end{lemma}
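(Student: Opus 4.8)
The plan is to process each equation of $G$ independently and replace it by a short sequence of binary equations that compute its right-hand side bottom-up, introducing one fresh \IDB variable per intermediate sub-result. Concretely, consider an equation $x = \bigoplus_{i} \bigotimes_{j} t_{ij}$ of $G$ whose right-hand side has $s$ occurrences of coefficients/variables $t_{ij}$. First I would fully parenthesize this expression into a binary expression tree: each product $\bigotimes_j t_{ij}$ is associated left-to-right into $\otimes$-nodes, and the resulting partial products are then combined by $\oplus$-nodes. Each internal node of this tree becomes one 2-canonical equation $w = a \star b$ (with $\star \in \{\oplus, \otimes\}$), where $a, b$ are the two children (each either an original leaf $t_{ij}$, or a fresh variable naming a previously-built subtree), and $w$ is a fresh variable --- except the root, whose equation keeps the original head $x$ on the left. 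Trivial equations whose right-hand side is a single coefficient or variable $t$ (e.g.\ $x = \mathbf{0}$ or $x = e_{bc}$ in Example~\ref{ex:polynomial}) are made 2-canonical by padding with the additive identity, writing $x = t \oplus \mathbf{0}$.

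For correctness I would argue that the transformation preserves the least fixpoint on the original variables, hence on the target \IDB. Let $X$ be the variables of $G$ and $Y$ the fresh variables introduced, and call the new system $G'$. Each $y \in Y$ is the head of exactly one equation of $G'$, and these defining equations merely build up subexpressions of a single original rule, so the dependency among the $Y$-variables is acyclic. Thus any assignment $\alpha \colon X \to \boldsymbol{D}$ extends uniquely to $\hat\alpha \colon X \cup Y \to \boldsymbol{D}$ by evaluating each subexpression, and by construction $\alpha$ is a fixpoint of $G$ if and only if $\hat\alpha$ is a fixpoint of $G'$, since substituting the defining equations of $Y$ back into the root equations exactly reconstructs the original right-hand sides. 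Because $\oplus$ and $\otimes$ are monotone in a naturally-ordered semiring (from $x \sqsubseteq y$, i.e.\ $x \oplus z = y$, distributivity gives $x \otimes c \oplus z \otimes c = y \otimes c$, so $x \otimes c \sqsubseteq y \otimes c$, and similarly for $\oplus$), the extension map $\alpha \mapsto \hat\alpha$ is order-preserving with order-preserving inverse (restriction). Hence the fixpoints of $G$ and $G'$ are in an order-isomorphic correspondence, which forces $\mathrm{lfp}(G')|_X = \mathrm{lfp}(G)$; in particular the target relation is unchanged, establishing $\bS$-equivalence. (Equivalently, one may run the Kleene iterations of both systems and observe, using $\omega$-continuity, that the two sequences agree on $X$ in the limit.)

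For the size and running time, I would account per equation. An equation with right-hand side of size $s \ge 2$ yields $s-1$ binary equations, each contributing a bounded number of coefficient/variable occurrences (its head plus its two operands), while a trivial equation is replaced by a single padded binary equation. Summing over all equations, the total number of occurrences in $G'$ is at most a constant multiple of $|G|$; a careful count --- heads, two operands per new equation, and the padding constants for trivial rules --- gives the stated bound $|G'| \le 4|G|$. The whole procedure is a single pass that parenthesizes each right-hand side and emits the binary equations in order, so it runs in time $O(|G|)$.

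The main obstacle is the correctness argument rather than the construction or the counting: one must verify carefully that introducing fresh variables and splitting rules does not change the least fixpoint of the target. The cleanest route is the fixpoint-correspondence argument above, whose crux is that the fresh variables are functionally determined by the original ones (their defining equations are acyclic) and that the semiring operations are monotone, so that the correspondence respects the natural order and therefore matches least fixpoints. The remaining edge cases --- trivial right-hand sides, empty products read as $\mathbf{1}$, and rules with right-hand side $\mathbf{0}$ --- must be folded into both the 2-canonical form (via padding) and the count, but present no real difficulty.
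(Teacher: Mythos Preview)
Your proposal is correct and follows essentially the same approach as the paper: introduce fresh intermediate variables to break each right-hand side into a chain of binary $\oplus$/$\otimes$ equations, pad trivial rules with an identity element, and appeal to associativity (which you expand into a more explicit least-fixpoint correspondence argument) for $\bS$-equivalence. The only cosmetic differences are that the paper performs the decomposition in two passes (monomials first, then sums, each costing a factor $2$) rather than your single expression-tree pass, and pads singletons with $\otimes\,\mathbf{1}$ rather than $\oplus\,\mathbf{0}$.
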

\begin{proof}
	The construction works in two steps. In the first step, we rewrite all monomials. Consider a monomial $\boldsymbol{m} = x_1 \otimes x_2 \otimes \dots x_n$. If $n = 1$, then simply let $\boldsymbol{m} = x_1 \otimes \onebf$. Otherwise ($n \geq 2$), we rewrite $\boldsymbol{m}$ as follows. We introduce $n-1$ new variables $y_1, \dots, y_{n-1}$, replace $\boldsymbol{m}$ with $y_{n-1}$ and add the following equations in the system:
	\begin{align*}
		y_1 = x_1 \otimes x_2, \quad y_2  = y_1 \otimes x_3, \quad \dots \quad y_{n-1}  = y_{n-2} \otimes x_{n}
	\end{align*}
	This transformation increases the size of the system by at most a factor of two. Indeed, the monomial contributes $n$ to $|G|$, and the new equations contribute $1 + 2(n-2) \leq 2n$. 
	
	After the first step, we are left with equations that are either a product of two elements, or a sum of the form $x = x_1 \oplus x_2 \oplus \dots x_n$ (w.l.o.g., we can assume no equations of the form $x = y$). Here, we apply the same idea as above, replacing multiplication with addition. This transformation will increase the size of the system by another factor of two using the same argument as before. The equivalence of the new grounding follows from the associativity property of both $\oplus,\otimes$.
\end{proof}

\revtwo{
    We demostrate Lemma~\ref{lem:2canonical} with an example. 
    \begin{example}  \label{ex:2canonical}
        Consider the polynomial equation showed up in Section~\ref{sec:framework}: 
        $ x_{ab} = e_{ab} \oplus x_{aa} \otimes f_a \otimes e_{ab} $.
        Its rewriting (via substitutions) following the proof of Lemma~\ref{lem:2canonical} is:
        \begin{align*}
            x_{ab} & = e_{ab} \oplus y_1, \quad y_1 = x_{aa} \otimes y_2, \quad y_2 = f_a \otimes e_{ab}
        \end{align*}
        Here $y_1, y_2$ are new \IDB variables introduced to make the system of polynomial equations 2-canonical. 
    \end{example} 
}

	

\subsection{Finite-rank Semirings} \label{subsec:rank}

We now present a grounding evaluation algorithm (Algorithm~\ref{algoRank}) over a semiring of rank $r$, which is used to prove Theorem~\ref{thm:algoRank}. 

 \begin{algorithm}[h]
 \small
 \SetKwInOut{Input}{Input}
 \SetKwInOut{Output}{Output}
    \Input{grounding $G$} 
    \textbf{transform} $G$ to be 2-canonical via ~\autoref{lem:2canonical}  \\
    \textbf{construct} a hash table $E$, such that for every variable $x$ in $G$, $E[x]$ is the set of all equations that contain $x$ in the right-hand side ; $\;$ \textbf{init} an empty queue $q$  \\
    \lForEach{\textsf{EDB} variables $e$ in $G$} { $h(e) \gets e$ }
    \lForEach{\textsf{IDB} variables $x$ in $G$} { $h(x) \gets \zerobf$ }
     \lForEach{\textsf{EDB} variables $e$ in $G$} { \textbf{insert} $e$ into $q$ }
    \While{$q$ is not empty}{
    \textbf{pop} $x$ from $q$ \label{line:pickvar}    \\
    \ForAll(\tcp*[f]{$\circledast \in \{\oplus,\otimes\}$}){$(y = x \circledast z) \in E[x]$ \label{line:check}}{ 
              \If {$h(y) \neq h(x) \circledast h(z)$ }{
                $h(y) \leftarrow h(x) \circledast h(z)$ \label{line:update} \textbf{insert} $y$ into $q$   \\
                }
    }  }
    \Return $h(\cdot)$ 
    \caption{Grounding evaluation over a rank $r$ {semiring}}\label{algoRank}
\end{algorithm}

The key idea of the algorithm is to compute the least fixpoint in a fine-grained way. Instead of updating all equations in every iteration, we will carefully choose a subset of equations to update their left-hand side variables. In particular, at every step we will pick a new variable (Line~\ref{line:pickvar}) and then only update the equations that contain this variable\footnote{We assume that look-ups, inserts, and deletes on hash tables cost constant time. In practice, hashing can only achieve amortized constant time. Therefore, whenever we claim constant time for hash operations, we mean amortized constant time.} (Line~\ref{line:check}-\ref{line:update}). Because the semiring has rank $r$, the value of each variable cannot be updated more than $r$ times; hence, we are guaranteed that each equation will be visited only $2 \cdot r$ times. Moreover, since the grounding is 2-canonical, updating each variable needs only one $\oplus$ or $\otimes$ operation; the latter property would not be possible if we had not previously transformed the grounding into a 2-canonical one. \eat{The full proof is in Appendix~\ref{app:eval}.}

\subsection{Absorptive Semirings with Total Order} \label{subsec:absorp}

We present a Dijkstra-style algorithm (Algorithm~\ref{algoAbsorp}) for grounding evaluation over an absorptive semiring with $\sqsubseteq$ being a total order 
to prove Theorem~\ref{thm:algoAbsorp}. 
It builds upon prior work~\cite{DynDatalog} by further 
optimizing the original algorithm to achieve the almost-linear runtime. \eat{Formal proof of convergence and runtime is in Appendix~\ref{app:eval}.}


\begin{algorithm}
    \small
    \SetKwInOut{Input}{Input}
    \SetKwInOut{Output}{Output}
       \Input{a grounding $G$}
        \textbf{transform} $G$ to be 2-canonical via ~\autoref{lem:2canonical}  \\
       \textbf{construct} a hash table $E$, such that for every variable $x$ in $G$, $E[x]$ is the set of all equations that contain $x$ in the right-hand side ; $\;$ \textbf{init} $\mathcal{F} \gets \emptyset$ \\

       \lForEach{\textsf{EDB} variables $e$ in $G$} { $h(e) \gets e$ }
       \lForEach{\textsf{IDB} variables $x$ in $G$} { $h(x) \gets \mathbf{0}$ }

       \textbf{init} an empty priority queue $q$ of decreasing values w.r.t. $\sqsupseteq$ \\
       
       \ForAll(\tcp*[f]{$\circledast \in \{\oplus,\otimes\}$}){\textsf{IDB} variables $x = y \circledast z$ in $G$} { 
            $h(x) \gets h(y) \circledast h(z)$ ; \textbf{insert} $x$ into $q$ of value $h(x)$  \\
       }
        
       \While{$q$ is non-empty}{
       \textbf{pop} a variable $x$ of the max value from $q$  \\
        $\mathcal{F} \gets \mathcal{F} \cup \{x\}$ ; \label{line:freeze} \\
       \ForAll(\label{line:noupdate}){$(y = x \circledast z) \in E[x] \wedge y \notin \mathcal{F}$} {
                 \If {$h(y) \neq h(x) \circledast h(z)$}{
                            $h(y) \gets h(x) \circledast h(z)$   \\
                           \textbf{insert} $y$ into $q$ of value $h(y)$  } 
       }  }
       \Return $h(\cdot)$ 
       \caption{Grounding evaluation over an absorptive semiring with total order} \label{algoAbsorp}
   \end{algorithm}

The algorithm follows the same idea as Algorithm~\ref{algoRank} by carefully updating only a subset of equations at every step. However there are two key differences. First, while Algorithm~\ref{algoRank} \revthree{is agnostic} to the order in which the newly updated variables are propagated to the equations (hence the use of a queue), Algorithm~\ref{algoAbsorp} needs to always pick the variable with the current maximum value w.r.t. the total order  $\sqsubseteq$. To achieve this, we need to use a priority queue instead of a queue, which is the reason of the additional logarithmic factor in the runtime. The second difference is that once a variable is updated once, it gets ``frozen'' and never gets updated again (see Lines \ref{line:freeze}-\ref{line:noupdate}). We show in the detailed proof of correctness (Appendix~\ref{app:eval}) that it is safe to do this and still reach the desired fixpoint.

\section{Grounding of Acyclic \Datalogo} 
\label{sec:acyclic}

In this section, we study how to find an efficient grounding (in terms of space usage and time required) of a \Datalogo program over a semiring $\bS$. First, we introduce \textit{rulewise-acyclicity} of a program using the notion of \emph{tree decompositions}. 

%
%

\begin{example}
We ground the \textsf{APSP} program \eqref{eq:apsp}. First, $R(x_1,x_2)$ produces $O(|R|)$ groundings, since it is an EDB. The latter \SumProd query has a $O(|R| \cdot n)$ grounding, since for each tuple $R(x_3,x_2)$, we should consider all active domain of $x_1$. Hence, the equivalent grounding is of size $O(|R| + |R| \cdot n) = O(m \cdot n)$.
\end{example}

\eat{We revisit the definition of tree decompositions of a hypergraph.}

\introparagraph{Tree Decompositions}
Let $\sumprod$ be a \SumProd query \eqref{eq:sumprod-rule} and $(\nodes, \mathcal{E})$ be its associated hypergraph. A {\em tree decomposition} of $\sumprod$ is a tuple $(\mathcal{T}, \chi)$ where $(i)$ $\mathcal{T}$ is an undirected tree, and $(ii)$ $\chi$ is a mapping that assigns to every node $t \in V(\mathcal{T})$ 
a set of variables $\chi(t) \subseteq \nodes$, called the {\em bag} of $t$, such that 
	\begin{enumerate}
		\item  
		every hyperedge $J \in \mathcal{E}$ is a subset of some $\chi(t)$, $t \in V(\mT)$;
		\item (running intersection property)
		for each $i \in \nodes$, the set $\{t \mid i \in \chi(t)\}$ is a non-empty (connected) sub-tree of $\mathcal{T}$.
	\end{enumerate}

	\revtwo{A tree decomposition is a {\em join tree} if $\chi(t) \in \mathcal{E}$ for all $t \in V(\mT)$.} 

\introparagraph{Acyclicity} \label{def:acyclic}
A \SumProd query $\sumprod$ is said to be \textit{acyclic} if its associated hypergraph admits a join tree. The GYO reduction~\cite{yu1979algorithm} is a well-known method to construct a join tree for $\sumprod$. We say that a rule of a \Datalogo program is {\em acyclic} if every \SumProd query in its body is acyclic and a program is {\em rulewise-acyclic} if it has only acyclic rules.

\revthree{We formally prove Theorem~\ref{thm:main:acyclic}, by introducing a grounding algorithm for rulewise-acyclic programs attaining the desired bounds; 
the pseudocode is in Algorithm~\ref{grounded-acyclicprogram} in the appendix.}

\revone{
\begin{proof}[Proof of Theorem~\ref{thm:main:acyclic}]
The $\bS$-equivalent grounding $G$ is constructed rule-wise from $P$. Take any acyclic rule from $P$ with head $T(\bx_H)$. Note that its arity $|H| \leq k$. We ground each \SumProd query $\sumprod(\bx_H)$ in this rule one by one. To ground a single $\sumprod$ (which is of the form~\eqref{eq:sumprod-rule}), we construct a join tree $(\mathcal{T}, \chi)$ of $\sumprod$. Each bag in the join tree will correspond to an atom in the body of $\sumprod$. We root $\mathcal{T}$ from any atom that contains at least one of the variables in $\bx_H$ (i.e. head variables), say $s_0$\eat{(ideally we pick the atom that contains the most head variables, but any choice works for this proof)}. This orients the join tree. For any node $s$, define $(V_s, \mE_s)$, where $V_s \subseteq [\ell], \mE_s \subseteq \mE$, to be the hypergraph constructed by only taking the bags from the subtree of $\mathcal{T}$ rooted at $t$ as hyperedges. Note that $(V_{s_0}, \mE_{s_0}) = ([\ell], \mE)$ is the associated hypergraph of $\sumprod$. Let $H_s \subseteq H$ be the head variables that occur in this subtree (so at root, $H_{s_0} = H$). Further, rename the head atom $T(\bx_H)$ to be $U_{e_{(\bot, s_0)}}(\bx_H)$, with $\bot$ being an imaginary parent node of the root $s$ and $e_{(\bot, s_0)} = H$.

Next, we use a recursive method \textsc{Ground} to ground $\sumprod$. Let $r$ be the parent of $s$ in $\mathcal{T}$. A \textsc{Ground}$(s, U_{e_{(r, s)}})$ call at a node $s$ grounds the \SumProd query
\begin{align} \label{eq:methodGround}
    U_{e_{(r, s)}}(\bx_{e_{(r, s)}}) \obtainedfrom \bigoplus_{\bx_{V_s \setminus e_{(r, s)}}} \bigotimes_{J \in \mathcal{E}_s} T_J(\bx_J).
\end{align} 
Hence, calling \textsc{Ground}$(s_0, U_{e_{(\bot, s_0)}})$ at $s_0$ suffices to ground $\sumprod$. We describe the procedure \textsc{Ground}$(s, U_{e_{(r, s)}})$, which has three steps:

$(i)$ \textit{Refactor.} Define for each child $t$ of $s$ (i.e. $(s, t) \in E(\mT)$, $E(\mT)$ being the directed edges of $\mT$), $e_{(s, t)} = (\chi(s) \cap \chi(t)) \cup H_t$ and 
\begin{align*}
    S & = \chi(s) \cup \bigcup_{t: (s, t) \in E(\mT)} e_{(s, t)}.
\end{align*}

We refactor as follows: $U_{e_{(r, s)}}(\bx_{e_{(r, s)}}) = \bigoplus_{\bx_{V_s \setminus e_{(r, s)}}} \bigotimes_{J \in \mathcal{E}_s} T_J(\bx_J)$
{\small
\begin{align*}
    & \eqone \bigoplus_{\bx_{V_s  \setminus e_{(r, s)}}} T_{\chi(s)}(\bx_{\chi(s)}) \otimes \bigotimes_{t: (s, t) \in E(\mT)}  \bigotimes_{J \in \mathcal{E}_t} T_J(\bx_J) \\
    & \eqtwo \bigoplus_{\bx_{S \setminus e_{(r, s)}} } T_{\chi(s)}(\bx_{\chi(s)}) \otimes \bigotimes_{t: (s, t) \in E(\mT)}  \left( \bigoplus_{\bx_{V_t \setminus e_{(s, t)}}} \bigotimes_{J \in \mathcal{E}_t} T_J(\bx_J) \right) \\
    & \eqthree \bigoplus_{\bx_{S \setminus e_{(r, s)}}} T_{\chi(s)}(\bx_{\chi(s)}) \otimes \bigotimes_{t: (s, t) \in E(\mT)} U_{e_{(s, t)}}(\bx_{e_{(s, t)}})
\end{align*}
}
where (1) regroups the product into the node $s$ and subtrees rooted at each child $t$ of $s$, (2) safely pushes aggregation over every child $t$ (since any $i \in V_t \setminus e_{(s, t)}$ only occurs in the subtree rooted at $t$, otherwise by the running intersection property, $i \in \chi(s) \cap \chi(t) \subseteq e_{(s, t)}$, a contradiction), and (3) replaces every child \SumProd query by introducing a new \IDB $U_{e_{(s, t)}}(\bx_{e_{(s, t)}})$ and a corresponding query:
\begin{align} \label{eq:refactor}
    U_{e_{(s, t)}}(\bx_{e_{(s, t)}}) \obtainedfrom
     \bigoplus_{\bx_{V_t \setminus e_{(s, t)}}} \bigotimes_{J \in \mathcal{E}_t} T_J(\bx_J).
\end{align}

$(ii)$ \textit{Ground.} Instead of grounding the \SumProd query~\eqref{eq:methodGround} as a whole, we ground the $\bS$-equivalent (but refactored) query
\begin{align*}
    U_{e_{(r, s)}}(\bx_{e_{(r, s)}}) \obtainedfrom \bigoplus_{\bx_{S \setminus e_{(r, s)}}} T_{\chi(s)}(\bx_{\chi(s)}) \otimes \bigotimes_{t: (s, t) \in E(\mT)} U_{e_{(s, t)}}(\bx_{e_{(s, t)}}).
\end{align*}
Notice that $S$ is exactly the set of variables appear in the body of the refactored query. We ground this query as follows: for each possible tuple (say $\ba_{\chi(s)}$) in $T_{\chi(s)}$, we add a grounded rule for each tuple (say $\ba_{S \setminus \chi(s)}$) that can be formed over the schema $\bx_{S \setminus \chi(s)}$ using the active domain of each variable. This is done by taking the attribute values for each variable from $\ba_{\chi(s)}$ and $\ba_{S \setminus \chi(s)}$, and substituting it in every predicate (\EDB, \IDB, or the head) of the query.

$(iii)$ \textit{Recurse.} Now every child $t$ of $s$ has an introduced \IDB $U_{e_{(s, t)}}$. For each $t$, we call into \textsc{Ground}$(t, U_{e_{(s, t)}})$ to recursively ground the new rule \eqref{eq:refactor}. If there are no children (i.e. $s$ is a leaf node), the \textsc{Ground} call on $s$ terminates immediately.



\smallskip

We argue the grounding size and time at the grounding step $(ii)$, for any node $s$ when grounding the refactored query. Recall that $r$ is the parent node of $s$. We consider all possible tuples over $\bx_S$ (variables in the body of the refactored query). First, $T_{\chi(s)}$ is of size $O(m)$ if it is an \EDB, or $O(n^k)$ if it is an \IDB. Next, for the number of tuples that can be constructed over $\bx_{S \setminus \chi(s)}$, note that
{\small
$$
S = \chi(s) \cup \bigcup_{t: (s, t) \in E(\mT)} {e_{(s, t)}} = \chi(s) \cup \bigcup_{t: (s, t) \in E(\mT)} H_t \subseteq \chi(s) \cup H_s
$$
}
where the inclusion holds since $H_t \subseteq H_s$ if $t$ is a child of $s$. Observe that if $H_s \subsetneq H$, then $|H_s| \leq k-1$; otherwise (i.e., $H_s = H$), since we have rooted the join tree to a node that contains at least one head variable, at least one of the head variables in $H_s$ must belong to $\chi(s)$ (again by the running intersection property), in which case also $|H_s \setminus \chi(s)| \leq k-1$. Thus, 
$|S \setminus \chi(s)| \leq |(\chi(s) \cup H_s) \setminus \chi(s)| = |H_s \setminus \chi(s)| \leq k-1$. Hence, the groundings inserted for any intermediate rule is of size (and in time) $O(n^{k-1} \cdot (m + n^{k}))$, i.e. the product of the cardinality of $T_{\chi(s)}$ and the number of possible tuples that can be formed over the schema $\bx_{S \setminus \chi(S)}$.
\end{proof}
}

\begin{figure}[t]
\centering
\scalebox{0.99}{
\begin{tikzpicture}[level distance=10em,
every node/.style = {line width=1pt, shape=rectangle, text centered, anchor=center, rounded corners, draw, align=center}, grow=right];

\node (V) {$R_{14}(x_1,x_4)$};
\node (E) [draw=none, above of=V, node distance=2.2cm]{$\bot$};
\node (R) [below left of=V, node distance=3cm] {$R_{24}(x_2, x_4)$};
\node (Ty) [below of=R, node distance=2.2cm] {$T_2(x_2)$};
\node (S) [below right of=V, node distance=3cm] {$R_{34}(x_3, x_4)$};
\node (Tz) [below of=S, node distance=2.2cm] {$T_3(x_3)$};

\draw[->] (E) edge [line width=1pt]  node [draw=none, midway, label=right:{$T(x_1) \obtainedfrom \bigoplus_{x_4}  {R_{14}(x_1, x_4)} \otimes~U_4 \otimes V_4$}] {} (V);
\draw[->] (V) edge [line width=1pt] node [draw=none, midway, label=left:{$U_4(x_4)  \obtainedfrom \bigoplus_{x_2} R_{24}(x_2,x_4) \otimes U_2$}] {} (R);
\draw[->] (R) edge [line width=1pt] node [draw=none, midway, label=left:{$U_2(x_2)  \obtainedfrom T_2(x_2)$}] {} (Ty);
\draw[->] (V) edge [line width=1pt] node [draw=none, midway, label=right:{$V_4(x_4)  \obtainedfrom \bigoplus_{x_3} R_{34}(x_3,x_4) \otimes V_3$}] {} (S); 
\draw[->] (S) edge [line width=1pt] node [draw=none, midway, label=right:{$V_3(x_3)  \obtainedfrom T_3(x_3)$}] {}  (Tz);                                            
                                                        
\end{tikzpicture}}
\caption{A join tree with its corresponding rewriting.}
\vspace{-5mm}
\label{fig:join-tree}
\end{figure}
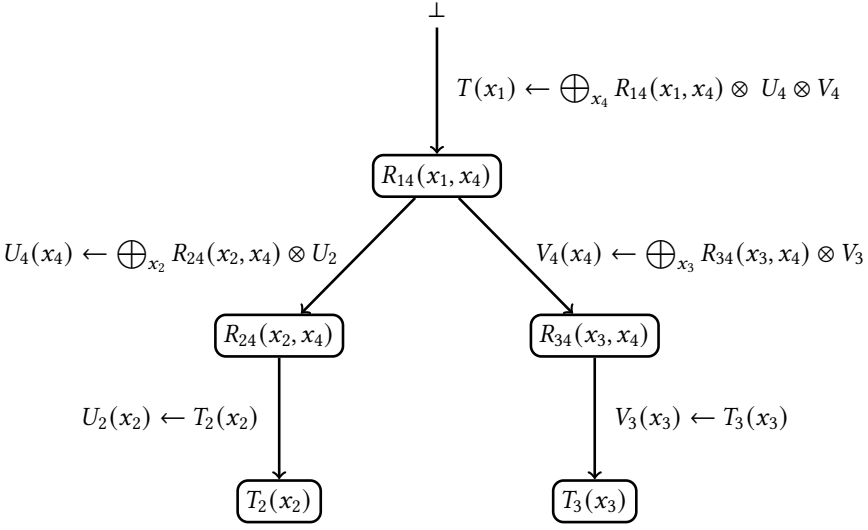

\vspace{-3mm}
\begin{example} \label{ex:acyclic}
We illustrate Theorem~\ref{thm:main:acyclic} with the acyclic rule:
\begin{align*}
T(x_1) \obtainedfrom \bigoplus_{\bx_{234}} T_2(x_2) \otimes T_3(x_3) \otimes R_{24}(\bx_{24}) \otimes R_{34}(\bx_{34}) \otimes R_{14}(\bx_{14}) 
\end{align*}
%
A join tree is drawn in Figure~\ref{fig:join-tree} with new intermediate \IDBs and rules (one per edge of the join tree) introduced recursively by the proof of Theorem~\ref{thm:main:acyclic}. In the transformed program, each rule produces $O(m + n) = O(m)$ groundings. Hence, its overall size is $O(m)$. \revone{A step-by-step walk-through of this example (Example~\ref{ex:acyclic-demo}) and more applications of Theorem~\ref{thm:main:acyclic} are in Appendix~\ref{app:acyclic}}.

%
\end{example}





\introparagraph{Free-connexity} 
Take a tree decomposition $(\mT, \chi)$ of a \SumProd query $\sumprod(\bx_H)$ rooted at a node $r \in V(\mT)$. Let $\textsf{TOP}_r(x)$ be the highest node in $\mT$ containing $x$ in its bag. We say that $(\mT, \chi)$ is \textit{free-connex w.r.t. $r $} if for any $x \in H$ and $y \in [\ell] \setminus H$, $\textsf{TOP}_r(y)$ is not an ancestor of $\textsf{TOP}_r(x)$~\cite{Secure}. We say that $(\mT, \chi)$ is free-connex if it is free-connex w.r.t. some  $r \in V(\mT)$. An acyclic query is free-connex if it admits a free-connex join tree. If every \SumProd query in every rule of a program $P$ admits a free-connex join tree, then $P$ is \textit{rulewise free-connex acyclic}. Theorem~\ref{thm:main:acyclic:freeconnex} (see Table~\ref{tab:summary}) shows that in such a case, there is a much tighter bound (dropping the $n^{k-1}$ term).

\revtwo{
\begin{proposition} \label{thm:main:acyclic:freeconnex}
    Let $P$ be a rulewise free-connex acyclic \Datalogo program over some semiring $\bS$ with input size $m$, active domain size $n$, and $\arity{P}$ is at most $k$. Then, a $\bS$-equivalent grounding can be constructed in time (and has size) $O(m + n^{k})$.
\end{proposition}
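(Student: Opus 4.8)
The plan is to reuse, essentially verbatim, the recursive grounding procedure \textsc{Ground} from the proof of \autoref{thm:main:acyclic}, but to apply it to a \emph{carefully rooted} join tree whose root bag is exactly the set of head variables. Concretely, for each free-connex acyclic rule with body query $\sumprod(\bx_H)$, I would first use the standard equivalent formulation of free-connexity: augmenting the associated hypergraph with a fresh hyperedge equal to $H$ preserves acyclicity. This yields a join tree $(\mT,\chi)$ of the augmented query that contains a bag $\chi(r)=H$, and I root $\mT$ at this node $r$, treating $r$ as carrying no \EDB/\IDB atom (it represents only the head $T(\bx_H)$). Since $\chi(r)=H$ contains every head variable, we have $\textsf{TOP}_r(x)=r$ for all $x\in H$, so this rooting is free-connex in the paper's sense; crucially, it forces every head variable to be ``introduced'' at the root rather than deep in the tree. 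I then run \textsc{Ground}$(r,\cdot)$ unchanged and re-analyze the per-node cost.

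The cost of grounding a node $s$ is the number of tuples enumerated over $\bx_S$, where $S=\chi(s)\cup\bigcup_t e_{(s,t)}$ and $e_{(s,t)}=(\chi(s)\cap\chi(t))\cup H_t$. The heart of the argument is a dichotomy. Let $\mT_1$ be the maximal connected subtree containing $r$ whose bags are subsets of $H$. For $s\notin\mT_1$ I claim $S=\chi(s)$, so there is no head-variable enumeration at all: for any child $t$ and any $x\in H_t$ we have $x\in H=\chi(r)$, hence $\textsf{TOP}_r(x)=r$, and by the running intersection property $x$ occurs on the entire path from $r$ down to $t$, so $x\in\chi(s)\cap\chi(t)$; thus $H_t\subseteq\chi(s)$ and $e_{(s,t)}\setminus\chi(s)=\emptyset$. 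Grounding such a node costs only $O(|T_{\chi(s)}|)$, i.e. $O(m)$ for an \EDB and $O(n^k)$ for an original \IDB. For $s\in\mT_1$, every variable occurring in the step is a head variable, so $S\subseteq H$ and the step enumerates at most $n^{|S|}\le n^k$ tuples; indeed $|T_{\chi(s)}|\cdot n^{|S\setminus\chi(s)|}\le n^{|\chi(s)|}\cdot n^{|S\setminus\chi(s)|}=n^{|S|}\le n^{k}$, using that a relation over $\chi(s)\subseteq H$ has at most $n^{|\chi(s)|}$ tuples. The root $r$ itself carries no atom and satisfies $S\subseteq H$, so its step merely forms the product of its children's intermediate \IDBs over shared head variables, costing $O(n^k)$. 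Summing over the constant number of nodes (data complexity) yields total grounding size and running time $O(m+n^k)$, and no idempotence (dioid) assumption is needed.

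The main obstacle, and the only place free-connexity is truly essential, is establishing this dichotomy, i.e. that below $\mT_1$ no head variable is ever enumerated. This hinges entirely on rooting at the all-head bag $H$: because every head variable is introduced at the root and its occurrences form a connected subtree hanging from $r$, any head variable appearing in a subtree is already present in the bag at that subtree's root, so the intermediate \IDBs $U_{e_{(s,t)}}$ created below $\mT_1$ carry only the pure join key $\chi(s)\cap\chi(t)$ and never a free head variable. This is exactly what removes the $n^{k-1}$ factor of \autoref{thm:main:acyclic}. I expect the remaining points to be routine: justifying the augmentation step (the standard free-connex equivalence) and the bookkeeping at the atom-free root, after which the refined constructions for linear or lower-arity programs transfer without change.
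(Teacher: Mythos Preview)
Your proposal is correct and follows essentially the same approach as the paper: both run the \textsc{Ground} procedure of \autoref{thm:main:acyclic} on a suitably rooted free-connex join tree and argue via the same dichotomy (bags containing only head variables cost $O(n^k)$; bags containing a non-head variable satisfy $S=\chi(s)$ and cost $O(m+n^k)$). The only cosmetic difference is that you first augment the hypergraph with the head edge $H$ and root at that artificial bag, which lets you derive $H_t\subseteq\chi(s)$ directly from the running intersection property, whereas the paper roots at an existing node $r$ for which the tree is free-connex and derives the same inclusion via a $\textsf{TOP}_r$ contradiction; both arguments yield the identical bound.
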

}

\introparagraph{Linear Acyclic Programs}
If $\arity{P}\leq 2$, Theorem~\ref{thm:main:acyclic} states that a rulewise-acyclic \Datalogo program admits a grounding of size $O(n^3)$. If the program is also linear (e.g. same generation~\cite{bancilhon1985magic}), we strengthen the upper bound to $O(m \cdot n)$ (see Table~\ref{tab:summary}). The proof and an example (Example~\ref{ex:nonbinary}) can be found in Appendix~\ref{app:acyclic}.

\revtwo{
\begin{proposition} \label{prop:linear}
    Let $P$ be a linear rulewise-acyclic \Datalogo program over some semiring $\bS$ \revtwo{with input size $m$, active domain size  $n$}, and $\arity{P}$ is at most $2$. Then, we can construct a $\bS$-equivalent grounding in time (and has size) $O(m \cdot n)$.
\end{proposition}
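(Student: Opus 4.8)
The plan is to ground each \SumProd query of each rule separately; since the program size is constant under data complexity, it suffices to show that a single linear acyclic \SumProd query $\sumprod(\bx_H)$ with $|H| \le 2$ can be grounded in time and size $O(m \cdot n)$. By linearity, the body of $\sumprod$ contains exactly one \IDB atom $T_{J_0}(\bx_{J_0})$ with $|J_0| \le 2$, while all remaining atoms are \EDBs (of possibly higher arity). I would start from a join tree $(\mT, \chi)$ of $\sumprod$ as in the proof of \autoref{thm:main:acyclic} and run the recursive \textsc{Ground} procedure, tracking the per-node grounding cost $|T_{\chi(s)}| \cdot n^{|S \setminus \chi(s)|}$, where $|S \setminus \chi(s)| \le k-1 = 1$.

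Recall that for every pure-\EDB node $s$ this cost is $O(m \cdot n)$, since $|T_{\chi(s)}| = O(m)$ and at most one variable is iterated. The only term exceeding this bound is the node whose bag is the \IDB atom: there $|T_{\chi(s)}| = O(n^2)$, so the generic analysis yields $O(n^2 \cdot n) = O(n^3)$. The crux of the proof is therefore to eliminate this single expensive step. To do so, I would merge the \IDB node with an adjacent \EDB atom $T_J(\bx_J)$ in the join tree (when $T_{J_0}$ is connected to the rest, adjacency gives $J \cap J_0 \ne \emptyset$; an isolated \IDB is handled trivially, as $\bigoplus_{\bx_{J_0}} T_{J_0}$ is then a scalar) and ground the combined bag by \emph{anchoring on the \EDB}: enumerate the $O(m)$ tuples of $T_J$, and for each iterate over the $|J_0 \setminus J| \le |J_0|-1 \le 1$ remaining \IDB variables, giving $O(m \cdot n)$ grounded rules for this step as well.

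The two structural hypotheses are exactly what make this work: \emph{linearity} guarantees a single \IDB atom and hence a single such step, and $\arity{P} \le 2$ guarantees that once an adjacent \EDB tuple is fixed, at most one \IDB variable stays free, so we pay $n$ rather than $n^2$. All remaining nodes are pure-\EDB and are grounded as in \autoref{thm:main:acyclic} at cost $O(m \cdot n)$ each. Since $\sumprod$ has $O(1)$ atoms and each grounded rule has $O(1)$ atoms, summing over the $O(1)$ nodes of the tree and the $O(1)$ queries and rules of $P$ yields total size and construction time $O(m \cdot n)$.

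I expect the main obstacle to be the bookkeeping of the (re-)rooting so that every step genuinely iterates over at most one free variable. The delicate case is when the two head variables lie on opposite sides of the \IDB atom in the join tree, e.g. $T(x_1,x_2) \obtainedfrom \bigoplus_{x_3,x_4} R_1(x_1,x_3) \otimes T(x_3,x_4) \otimes R_2(x_4,x_2)$: anchoring naively on the \IDB forces an $n^2$ enumeration, whereas grounding the \IDB step against the neighbouring \EDB $R_1$ (and a later step against $R_2$) charges the extra factor to $m$ instead of $n^2$. Verifying that this \EDB-anchoring respects the running-intersection constraints of the chosen join tree — so that each introduced intermediate \IDB interacts with the rest of the tree through a single free variable — is where the real work lies. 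The algebraic $\bS$-equivalence, as in \autoref{thm:main:acyclic}, follows routinely from pushing $\bigoplus$ past $\otimes$.
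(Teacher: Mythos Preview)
Your core intuition—anchor the \IDB grounding on an adjacent \EDB so that you pay $m$ rather than $n^2$—is correct and is also the paper's starting point. However, the claim that a \emph{single} merge suffices, with ``all remaining nodes pure-\EDB and grounded as in \autoref{thm:main:acyclic} at cost $O(m\cdot n)$ each'', does not go through. Consider the paper's own Example~\ref{ex:nonbinary}:
\[
T(x_1,x_5) \obtainedfrom \bigoplus_{\bx_{234}} R_{12}(x_1,x_2)\otimes T(x_2,x_3)\otimes R_{34}(x_3,x_4)\otimes R_{45}(x_4,x_5).
\]
Merge $T$ with (say) $R_{34}$; the merged bag is $\{x_2,x_3,x_4\}$. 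In the refactored query at this bag you must produce $U_{25}(x_2,x_5)$, because $x_5$ is a head variable living below. Anchoring on $R_{34}$ fixes $x_3,x_4$, but you still have to iterate over \emph{both} the free \IDB variable $x_2$ and the propagated head variable $x_5$: that is $O(m\cdot n^2)$, not $O(m\cdot n)$. Your count ``iterate over $|J_0\setminus J|\le 1$ remaining \IDB variables'' omits the head variable that the running-intersection property drags through the merged bag from the subtree below. No single choice of adjacent \EDB or rooting avoids this when there is more than one \EDB between the \IDB and the far head variable.

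What the paper does, and what your last paragraph gestures at without supplying, is to \emph{propagate along the whole path} from the \IDB node $t$ to the node $t_y=\textsf{TOP}_r(y)$ where the far head variable first appears: compute $T_{24}\leftarrow\bigoplus_{x_3}T\otimes R_{34}$ anchored on $R_{34}$, then $T_{25}\leftarrow\bigoplus_{x_4}T_{24}\otimes R_{45}$ anchored on $R_{45}$, each step $O(m\cdot n)$; only after this is the \IDB reduced to a leaf and \autoref{thm:main:acyclic} applies to the rest. Two further ingredients are needed and absent from your sketch: (i) before the path propagation, every side branch hanging off the path (subtrees not on the $t\to t_y$ path) must be collapsed into a single \EDB via a free-connex acyclic grounding of size $O(m)$, so that each path step involves exactly one \EDB and one carried intermediate; and (ii) a case split on whether $y\in\chi(t)$, $y\notin H_t$, or $y\in H_t\setminus\chi(t)$—only the last case needs the path argument, and in the first two one can indeed reduce the \IDB to a leaf directly. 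Your proposal identifies the right obstacle but stops short of the construction that removes it.
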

}

\section{Grounding of General \Datalogo}
\label{sec:grounding}

This section introduces an algorithm for grounding any \Datalogo program over an arbitary dioid via the \textsf{PANDA} algorithm~\cite{PANDA} (or \textsf{PANDA}, for short). \textsf{PANDA} is introduced to evaluate CQs (i.e. \SumProd queries over the \textit{Boolean semiring}). \eat{Appendix~\ref{sec:PANDA} shows that \textsf{PANDA} can be used for evaluation over \textit{any dioid} (Theorem~\ref{thm:generalizedPANDA}) in time dictated by the submodular width (introduced by Marx~\cite{Marx10, marx2013tractable}) of the query.}

\introparagraph{Submodular width} A function $f : 2^{\nodes} \mapsto \mathbb{R}_+$ is a non-negative {\em set function} on $\nodes$ ($\ell \geq 1$).
The set function is {\em monotone} if $f(X) \leq f(Y)$ whenever $X \subseteq Y$, and is {\em submodular} if $f(X\cup Y)+f(X\cap Y)\leq f(X)+f(Y)$
for all $X,Y\subseteq \nodes$. A non-negative, monotone, submodular set function $h$ such that $h(\emptyset)=0$ is a \textit{polymatroid}. 
Let $\sumprod$ be a \SumProd query \eqref{eq:sumprod-rule}. Let $\Gamma_{\ell}$ be the set of all polymatroids $h$ on $\nodes$ such that $h(J) \leq 1$ for all \revone{$J \in \mE$}. The {\em submodular width} of $\sumprod$ is
\begin{equation} \label{eq:subw}
    \subw(\varphi)  \defeq \; \max_{h \in \Gamma_{\ell}} \min_{(\mathcal{T}, \chi) \in \mathcal{F}} \max_{t \in V(\mathcal{T})} h(\chi(t)),
\end{equation}
where $\mathcal{F}$ is the set of all {\em non-redundant} tree decompositions of $\sumprod$. A tree decomposition is {\em non-redundant} if no bag is a subset of another.
\revone{Abo Khamis et al.~\cite{PANDA} proved that non-redundancy ensures that $\mathcal{F}$ is finite, hence the inner minimum is well-defined.} We define the {\em \revone{free-connex submodular width}} of $\sumprod$, $\csubw(\sumprod)$, by restricting $\mathcal{F}$ to the set of all non-redundant free-connex tree decompositions.

We extend $\subw$ and $\csubw$ to \Datalogo. The {\em submodular width} of a program $P$ (i.e. $\subw(P)$) is the maximum $\subw(\sumprod)$ across all \SumProd queries in $P$. Likewise, the {\em free-connex submodular width} of $P$ (i.e. $\csubw(P)$) is the maximum $\csubw(\sumprod)$ over all \SumProd queries in $P$. If $P$ is rulewise-acyclic, $\subw(P) = 1$.

\revone{This section describes the key ideas of the algorithm underlying Theorem~\ref{thm:main} (formally as Algorithm~\ref{groundedprogram}). We demonstrate the grounding algorithm step-by-step on the following concrete example.}
 \eat{The detailed algorithm and its proof is deferred to Appendix~\ref{app:proofsSec6}.} 

\revone{
\begin{example} \label{ex:diamond}
    The following \Datalogo computes a diamond-pattern reachability starting from some node a set $U(x_1)$: $T(x_1) \obtainedfrom U(x_1)$,
    \begin{align*} 
        T(x_1) \obtainedfrom \bigoplus_{\bx_{234}} T(x_3) \otimes R_{32}(\bx_{32}) \otimes R_{21}(\bx_{21}) \otimes R_{34}(\bx_{34}) \otimes R_{41}(\bx_{41})
    \end{align*}
    We ground the second rule that involves a 4-cycle join. \eat{A na\"ive strategy uses \textsf{PANDA} to materialize the 4-cycle join and then applies the construction for acyclic programs presented in the proof of Theorem~\ref{thm:main:acyclic}. As demonstrated in Example~\ref{app:ex:diamond}, it yields a grounding of size $\widetilde{O}(m^{3/2} + n^2)$.} In particular, our grounding algorithm constructs a grounding of size $\widetilde{O}(m^{3/2})$.

    \smallskip
    ($i$) \textit{Refactor}. Let $([4], \mE)$ be the associated hypergraph of the above cyclic recursive rule, where $\mE = \{\{3\}, \{3, 2\}, \{2, 1\}, \{3, 4\}, \{4, 1\}\}$. We construct the non-redundant tree decompositions $(\mT_1, \chi_1), (\mT_2, \chi_2)$ (there are only two), both having two nodes $\{v_1, v_2\}$ and one edge $\{(v_1, v_2)\}$, where the bags are
    \begin{align*}
        \chi_1(v_1) = [3], \chi_1(v_2) = \{3, 4, 1\}; \; \chi_2(v_1) = \{4, 1, 2\},  \chi_2(v_2) = \{2, 3, 4\}.
    \end{align*}
    This allows us to rewrite and factorize the cyclic body into two acyclic \SumProd queries, one for each decomposition:
    \begin{align*}
        \sumprod(x_1) & = \bigoplus_{\bx_{234}} T(x_3) \otimes R_{32} \otimes R_{21} \otimes R_{34} \otimes R_{41} \\
        & = \bigoplus_{\bx_{234}} \underbrace{T \otimes R_{32} \otimes R_{21}}_{B_{[3]}(\bx_{[3]})} \otimes \underbrace{R_{34}\otimes R_{41}}_{B_{341}(\bx_{341})}  \oplus \underbrace{R_{21} \otimes R_{41}}_{B_{412}(\bx_{412})} \otimes \underbrace{T \otimes R_{32} \otimes R_{34}}_{B_{234}(\bx_{234})} 
    \end{align*}
    where the second line uses the idempotence of $\oplus$. The underbraces introduce new \IDBs and rules (e.g. $B_{[3]} \obtainedfrom T \otimes R_{32} \otimes R_{21}$ corresponds to the bag $\chi_1(v_1)$ of $\mT_1$ and likewise for $B_{341}$, $B_{234}$ and $B_{412}$). 

    \smallskip
    ($ii$) \textit{Grounding Bags}. A na\"ive grounding of (say) $B_{[3]}$ is of size $O(m \cdot n)$ (cartesian product of $R_{32}$ with domain of $x_1$), which is suboptimal. Instead, we use \textsf{PANDA} to ground the new \IDBs. 
    
    
    Let $Q(\bx_{[4]})$ be the set of tuples such that a tuple $\ba_{[4]} \in Q$ if and only if its annotation $T(\ba_3) \otimes R_{32}(\ba_{32}) \otimes R_{21}(\ba_{21}) \otimes R_{34}(\ba_{34}) \otimes R_{41}(\ba_{41}) \neq \mathbf{0}$. Here, $\ba_{J}$ with $J \subseteq [4]$ is a shorthand for the projection of $\ba_{[4]}$ onto the variables in $J$. The \textsf{PANDA} algorithm outputs one table for every new \IDB (to differentiate, denote the \textsf{PANDA} output tables as $B^*_{[3]}$, $B^*_{341}$, $B^*_{234}$ and $B^*_{412}$) such that: (1) each table is of size $\widetilde{O}(m^{3/2})$ since $\subw(P) = 3/2$; (2) $Q(\bx_{[4]})$ is equal to the union of two CQs, one for each decomposition, i.e.
    \begin{align*}
        ( B^*_{[3]}(\bx_{[3]}) \bowtie B^*_{341}(\bx_{341}) ) \cup ( B^*_{412}(\bx_{412}) \bowtie B^*_{234}(\bx_{234}) ) = Q(\bx_{[4]})
    \end{align*}

    
    Using \textsf{PANDA} output tables, we ground the four new \IDBs, e.g. for each tuple $\ba_{[3]} \in B^*_{[3]}$, we add a grounded rule: $B_{[3]}(\ba_{[3]}) \obtainedfrom T(\ba_3)\otimes R_{32}(\ba_{32}) \otimes R_{21}(\ba_{21})$. The same step is applied to all \IDBs in total time and size $\widetilde{O}(m^{3/2})$. This ensures that each tuple in $Q(\bx_{[4]})$ is present in at least one of $B_{[3]} \otimes B_{341}$ or $B_{412} \otimes B_{234}$, with the correct annotation being preserved \eat{(occurring in both is also correct due to idempotence)}, i.e. for any $\ba_{[4]} \in Q(\bx_{[4]})$, 
    \begin{align*}
        T \otimes R_{32} \otimes R_{21} \otimes R_{34} \otimes R_{41} = ( B_{[3]} \otimes B_{341} ) \oplus \left( B_{412}\otimes B_{234} \right).
    \end{align*}

    \smallskip
    ($iii$) \textit{Grounding Acyclic Sub-queries}. With the grounded bags, we now rewrite $\sumprod(x_1)$ as a sum of two acyclic \SumProd queries:
    \begin{align*}
        \sumprod(x_1) =  \left[\bigoplus_{\bx_{234}} B_{[3]}(\bx_{[3]}) \otimes B_{341}(\bx_{341})\right] \oplus \left[ \bigoplus_{\bx_{234}} B_{412}\otimes B_{234} \right]
    \end{align*}

    We ground the two acyclic sub-queries (one per decomposition) using the construction in the proof of Theorem~\ref{thm:main:acyclic}. For example, if we root both trees at $v_1$, we get a rewriting:
    {\small
    \begin{align*}
        U_{31}(\bx_{31}) & \obtainedfrom \bigoplus_{x_4}  B_{341} \qquad \qquad U_{24}(\bx_{24}) \obtainedfrom \bigoplus_{x_3}  B_{234}  \\
        \varphi(x_1) & \obtainedfrom \left[\bigoplus_{\bx_{23}} B_{[3]} \otimes U_{31}(\bx_{31}) \right] \oplus \left[\bigoplus_{\bx_{24}} B_{412} \otimes U_{24}(\bx_{24}) \right]
    \end{align*}
    }
    guaranteeing that every intermediate \IDB has a grounding of size $\widetilde{O}(m^{3/2})$. Thus, the total size of the grounding for $P$ is $\widetilde{O}(m^{3/2})$.
\end{example}}

We show a free-connex version (Proposition~\ref{thm:main:free-connex}) by restricting the first refactoring step to use only free-connex decompositions. Though $\csubw \geq \subw$, a $n^{k-1}$ factor is shaved off from the bound.

\revtwo{
\begin{proposition}\label{thm:main:free-connex}
    Let $P$ be a \Datalogo program \revtwo{with input size $m$, active domain size  $n$}, $\arity{P} \leq k$. $\csubw$ is its free-connex submodular width. Let $\bS$ be a dioid. Then, a $\bS$-equivalent grounding can be constructed in time (and has size) $\widetilde{O}(m^\csubw + n^{k \cdot \csubw})$.
\end{proposition}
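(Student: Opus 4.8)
The plan is to reuse the three-phase strategy behind Theorem~\ref{thm:main} (refactor, ground the bags, ground the resulting acyclic sub-queries), but to forbid every non-free-connex tree decomposition in the refactoring phase. I would argue rule by rule and, within each rule, \SumProd query by \SumProd query: since the program $P$ has constant size, it suffices to bound the grounding emitted for a single query $\sumprod(\bx_H)$ and then sum the constantly many contributions.

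\textbf{Refactoring and grounding the bags.} For the fixed query $\sumprod(\bx_H)$ I would take the finite family $\mathcal{F}^{\mathrm{fc}}$ of non-redundant \emph{free-connex} tree decompositions witnessing $\csubw(\sumprod)$, and run \textsf{PANDA} over the dioid $\bS$ with its witnessing decompositions constrained to $\mathcal{F}^{\mathrm{fc}}$. As in Example~\ref{ex:diamond}, this rewrites $\sumprod$ as a $\oplus$-sum of acyclic \SumProd sub-queries, one per decomposition, with each bag $\chi(t)$ turned into a fresh \IDB $B_{\chi(t)}$ grounded from a \textsf{PANDA} output table $B^{*}_{\chi(t)}$; correctness over a dioid rests on the idempotence of $\oplus$, so a tuple produced by several decompositions contributes the same value and leaves the $\oplus$-aggregate unchanged. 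The size of each $B^{*}_{\chi(t)}$ is governed by the \emph{restricted} width $\csubw$: treating each \EDB relation as having size $m$ and each \IDB relation as having at most $n^{k}$ tuples, the common input bound is $N=\max(m,n^{k})$, so every bag table has size $\widetilde{O}(N^{\csubw})=\widetilde{O}(m^{\csubw}+n^{k\cdot\csubw})$, and emitting one grounded rule per tuple costs the same.

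\textbf{Grounding the acyclic sub-queries.} This is the step where the $n^{k-1}$ factor of Theorem~\ref{thm:main} is removed. Because every decomposition in $\mathcal{F}^{\mathrm{fc}}$ is free-connex with respect to the head $\bx_H$, each acyclic sub-query from the refactoring step---now viewed as a query over the grounded bags $B_{\chi(t)}$ with head $\bx_H$---is itself free-connex acyclic. I would therefore ground it with Proposition~\ref{thm:main:acyclic:freeconnex} instead of Theorem~\ref{thm:main:acyclic}. Feeding in bag tables of size $\widetilde{O}(m^{\csubw}+n^{k\cdot\csubw})$ with arity still at most $k$, Proposition~\ref{thm:main:acyclic:freeconnex} yields a grounding of size $O\big((m^{\csubw}+n^{k\cdot\csubw})+n^{k}\big)$; since $\csubw\geq 1$ forces $n^{k\cdot\csubw}\geq n^{k}$, this collapses to $\widetilde{O}(m^{\csubw}+n^{k\cdot\csubw})$. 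Summing over the constantly many sub-queries, queries, and rules gives the claimed total size and time.

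The step I expect to be the main obstacle is the bookkeeping that certifies free-connexity is \emph{inherited}: I must verify that the acyclic sub-query assembled from the bags of a free-connex decomposition $(\mT,\chi)$ genuinely admits a free-connex join tree with respect to $\bx_H$, so that Proposition~\ref{thm:main:acyclic:freeconnex} is applicable and the $n^{k-1}$ term truly vanishes. Alongside this, I must confirm that restricting \textsf{PANDA} to $\mathcal{F}^{\mathrm{fc}}$ still produces a correct disjunctive rewriting whose bag sizes are controlled by $\csubw$ rather than $\subw$. Everything else is a size accounting that mirrors the proof of Theorem~\ref{thm:main}.
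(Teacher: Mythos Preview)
Your proposal is correct and is exactly the paper's approach: constrain Algorithm~\ref{groundedprogram} to use only non-redundant free-connex tree decompositions, then invoke Proposition~\ref{thm:main:acyclic:freeconnex} in place of Theorem~\ref{thm:main:acyclic} in the final acyclic-grounding step, which drops the $n^{k-1}$ factor. Your two flagged concerns are not serious obstacles: the acyclic sub-query's join tree \emph{is} the free-connex decomposition $(\mT,\chi)$ (each bag becomes an atom, so free-connexity w.r.t.\ $\bx_H$ is inherited trivially), and the \textsf{PANDA} bound of $\widetilde{O}(N^{\csubw})$ follows because $\csubw$ is by definition the submodular width taken over the restricted family $\mathcal{F}^{\mathrm{fc}}$ that you feed to \textsf{PANDA}.
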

}

\introparagraph{Linear Programs} For linear programs, a careful analysis yields the following improved result (proof is in Appendix~\ref{app:miscellaneous}).

\revtwo{
\begin{proposition} \label{prop:main:linear}
    Let $P$ be a linear \Datalogo program where\\ $\arity{P} \leq k$, input size $m$, and active domain size  $n$. Let $\bS$ be a dioid. Then, a $\bS$-equivalent grounding can be constructed in time (and has size) $\widetilde{O}(n^{k-1} m^{\subw - 1} \cdot (m + n^k))$.
\end{proposition}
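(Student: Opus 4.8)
The plan is to reuse, essentially verbatim, the three-step grounding algorithm behind Theorem~\ref{thm:main} — refactor each \SumProd query into a sum of acyclic sub-queries using its non-redundant tree decompositions, ground the resulting ``bags'' with \textsf{PANDA}, and then ground the acyclic sub-queries with the construction of Theorem~\ref{thm:main:acyclic} — and to replace only its worst-case size analysis by a refined one that exploits linearity. I would ground $P$ rule by rule, and within a rule one \SumProd query $\sumprod$ at a time. Since $P$ is linear, $\sumprod$ contains at most one \IDB atom; let $J_0 \in \mE$ be its variable set, so $|J_0| \le k$ and this atom stores at most $n^{k}$ tuples, while every \EDB atom stores at most $m$ tuples. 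The whole improvement comes from the fact that at most one atom is ``heavy''.

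First I would isolate the single technical statement that drives the bound: running \textsf{PANDA} with per-atom size bounds $n^{k}$ for $J_0$ and $m$ for every \EDB edge produces bag tables of total size $\widetilde{O}(m^{\subw-1}(m+n^{k}))$, where $\subw=\subw(\sumprod)$. By the (dioid-generalized) guarantee of \textsf{PANDA}, the produced bags have size at most $2^{W}$ up to polylog factors, where $W=\max_{h}\min_{(\mT,\chi)}\max_{t} h(\chi(t))$, the maximum ranging over polymatroids $h$ on $[\ell]$ with $h(J_0)\le k\log n$ and $h(J)\le\log m$ for every \EDB edge $J$, and the minimum over the non-redundant tree decompositions of $\sumprod$. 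Since exponentiating gives $m^{\subw-1}\cdot\max(m,n^{k})\le m^{\subw-1}(m+n^{k})$, it suffices to prove the inequality $W\le(\subw-1)\log m+\max(\log m,\,k\log n)$.

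The heart of the argument, and the step I expect to be the main obstacle, is a ``one heavy edge contributes Lipschitz-ly'' claim proved by truncating the marginal of $J_0$. Fix a maximizing polymatroid $h$ and write $\mu=\log m$. I would set $h'(S)=h(S\setminus J_0)+\min\!\big(\mu,\,h(S)-h(S\setminus J_0)\big)$, i.e. cap the marginal contribution of $J_0$ at $\mu$. The delicate point is verifying that $h'$ is again a monotone submodular function with $h'(\emptyset)=0$: it is the pointwise minimum of the two polymatroids $h(S)$ and $h(S\setminus J_0)+\mu$, whose difference $h(S)-h(S\setminus J_0)-\mu$ is non-increasing in $S$ by diminishing marginal returns, which is exactly the situation where truncating a marginal preserves submodularity (formally I would reduce to the standard single-element truncation fact by contracting the vertices $J_0$ to one element, an \EDB edge meeting $J_0$ becoming an edge through that element with the same size bound $m$). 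Granting this, $h'(J_0)\le\mu$ and $h'(J)\le h(J)\le\mu$ for every \EDB edge, so $h'/\mu$ is feasible for $\subw$ and the definition of submodular width gives $\min_{(\mT,\chi)}\max_{t}h'(\chi(t))\le\subw\cdot\mu$. On the other hand $h(S)-h'(S)=\big(h(S)-h(S\setminus J_0)-\mu\big)^{+}\le(h(J_0)-\mu)^{+}$, since the marginal of $J_0$ never exceeds $h(J_0)$. Evaluating $h$ on a decomposition optimal for $h'$ then yields $W\le\subw\mu+(h(J_0)-\mu)^{+}\le\subw\mu+(k\log n-\mu)^{+}=(\subw-1)\mu+\max(\mu,\,k\log n)$, the desired inequality.

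Finally I would assemble the pieces. Each \SumProd query is refactored into a constant number (data complexity) of acyclic sub-queries whose bags are precisely the \textsf{PANDA} output tables of size $\widetilde{O}(m^{\subw-1}(m+n^{k}))$ just bounded. Feeding these tables into the acyclic grounding construction of Theorem~\ref{thm:main:acyclic}, in the role of the relations $T_{\chi(s)}$, the analysis there shows that every intermediate grounded rule has size at most the product of one bag table with the number of tuples over the at most $k-1$ remaining head variables, i.e. a factor $n^{k-1}$. Summing over the constantly many sub-queries, bags, and rules of $P$ yields a $\bS$-equivalent grounding of size $\widetilde{O}(n^{k-1}m^{\subw-1}(m+n^{k}))$, and since both \textsf{PANDA} and the acyclic construction run in time proportional to their output up to polylog factors, the construction time matches. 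Linearity is used only to bound the number of heavy atoms by one; with two or more \IDB atoms the same truncation applied to each would reintroduce a factor $n^{k}$ per \IDB and recover the general bound of Theorem~\ref{thm:main}.
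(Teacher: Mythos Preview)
Your high-level plan coincides with the paper's: run the same three-step algorithm as in Theorem~\ref{thm:main} and replace only the \textsf{PANDA} bag-size analysis by a sharper one that exploits the single \IDB atom; then multiply by the $n^{k-1}$ factor from Theorem~\ref{thm:main:acyclic}. The paper carries out the sharper analysis by invoking the LP characterization behind \textsf{PANDA} (Lemma~5.3 of~\cite{PANDA}): with per-variable cardinality constraints $n$ on the \IDB, the optimum is of the form $m^{w-w_i}\cdot n^{k w_i}$ with $w\le\subw$ and the weight $w_i$ placed on the \IDB edge at most~$1$, and a two-case split on $n^k\lessgtr m$ gives $\widetilde O(m^{\subw-1}(m+n^k))$. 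Your route is different---you try to exhibit, for every feasible polymatroid $h$, a truncated polymatroid $h'$ feasible for the ordinary submodular width---so it would be more self-contained if it worked.

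The gap is precisely in that truncation step. Your $h'(S)=\min\bigl(h(S),\,h(S\setminus J_0)+\mu\bigr)$ need not be submodular when $|J_0|\ge 2$. The justification you give---that $h(S)-h(S\setminus J_0)$ is non-increasing in $S$ ``by diminishing marginal returns''---is false already for the modular function $h=|\cdot|$ with $J_0=\{1,2\}$, since the difference jumps from $0$ at $S=\emptyset$ to $1$ at $S=\{1\}$. The contraction fix does not help either: contracting $J_0$ to a single element yields a polymatroid on a different ground set and gives no control over subsets $S$ with $\emptyset\subsetneq S\cap J_0\subsetneq J_0$, which are exactly the ones your $h'$ must handle. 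Concretely, on ground set $\{1,2,3,4\}$ with $J_0=\{1,2\}$, take the polymatroid $h$ with $h(i)=1$ for all $i$, $h(13)=1$, $h(12)=h(14)=h(23)=h(24)=h(34)=2$, $h(123)=h(134)=2$, $h(124)=h(234)=3$, $h(1234)=3$ (one checks monotonicity and submodularity hold). With $\mu=1$, your truncation gives $h'(\{1,3\})=1$, $h'(\{1,2,4\})=2$, $h'(\{1,2,3,4\})=3$, $h'(\{1\})=1$, so
\[
h'(\{1,3\})+h'(\{1,2,4\})=3<4=h'(\{1,2,3,4\})+h'(\{1\}),
\]
violating submodularity. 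Thus $h'/\mu$ is not feasible for $\subw$ and the inequality $W\le(\subw-1)\log m+\max(\log m,k\log n)$ is not established by your argument. The conclusion is true, but you need either a different truncation that is guaranteed to be a polymatroid, or---as the paper does---to work on the LP side of \textsf{PANDA} where the ``at most one heavy edge'' constraint translates directly into $w_i\le 1$.
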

}

\introparagraph{Fractional Hypertree-width} 
So far, all our results only apply to dioids. However, we can extend our results to \emph{any} semiring by using the \textsf{InsideOut} algorithm ~\cite{faq}. Similar to $\subw(\sumprod)$ \eqref{eq:subw}, the {\em fractional hypertree-width} of a \SumProd query $\sumprod$, i.e. $\fhw(\sumprod)$, is defined as
\begin{align*}
    \fhw(\sumprod) \defeq \; \min_{(\mathcal{T}, \chi) \in \mathcal{F}}  \max_{h \in \Gamma_{\ell}} \max_{t \in V(\mathcal{T})} h(\chi(t)) 
\end{align*}
The fractional hypertree-width of a program $P$ is the maximum $\fhw$ over all \SumProd queries, denoted as $\fhw(P)$. We show ~\autoref{prop:main:fhw} for the grounding size over \emph{any} semiring (see Table~\ref{tab:summary}).
\vspace{-3mm}
\begin{proposition}\label{prop:main:fhw}
\revtwo{
    Let $P$ be a \Datalogo program \revtwo{with input size $m$, active domain size  $n$}, and $\arity{P} \leq k$. Let $\bS$ be a naturally-ordered semiring. Then, a $\bS$-equivalent grounding can be constructed in time (and has size) $O(n^{k-1} \cdot (m^{\fhw} + n^{k \cdot \fhw}))$.
}
\end{proposition}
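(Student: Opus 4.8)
The plan is to mirror the dioid construction behind Theorem~\ref{thm:main} (illustrated in Example~\ref{ex:diamond}), but to commit to a \emph{single} tree decomposition instead of taking a union over several, so that idempotence of $\oplus$ is never invoked. Working rule by rule, consider a \SumProd query $\sumprod(\bx_H)$ with associated hypergraph $([\ell],\mE)$. First I would fix one fractional hypertree decomposition $(\mT,\chi)$ attaining $\fhw(\sumprod)$, i.e. one in which every bag $\chi(t)$ admits a fractional edge cover of total weight at most $\fhw$ using the hyperedges it contains. I assign each hyperedge $J\in\mE$ to exactly one bag $\chi(t)\supseteq J$ (possible by the first tree-decomposition axiom) and introduce a fresh bag-\IDB $B_t(\bx_{\chi(t)})$ defined by the product of the atoms assigned to $t$. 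Since each atom lands in a unique bag and no variable of $\chi(t)$ is marginalized at this stage, the identity $\bigotimes_{J\in\mE} T_J(\bx_J)=\bigotimes_{t\in V(\mT)} B_t(\bx_{\chi(t)})$ holds in \emph{any} commutative semiring, using only associativity and commutativity of $\otimes$. Pushing the aggregation $\bigoplus_{\bx_{[\ell]\setminus H}}$ down along $\mT$ (eliminating each non-head variable at the topmost bag still containing it, bottom-up) is valid by distributivity, turning $\sumprod$ into an \emph{acyclic} \SumProd query over the $B_t$ --- exactly the situation handled by Theorem~\ref{thm:main:acyclic}.

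The second step is to materialize, for each bag, the \emph{support} of $B_t$: the set of constant tuples $\ba_{\chi(t)}$ whose annotation is not $\zerobf$. Because this is a purely Boolean (non-zero/zero) property, I would compute it with the \textsf{InsideOut} algorithm~\cite{faq} run over the atoms contained in $\chi(t)$, treating each \IDB atom as a \emph{full} relation over the active domain (all $n^{|J|}\le n^{k}$ tuples potentially non-zero) and each \EDB atom at its true size $\le m$. The fractional-cover guarantee of the chosen decomposition, with the AGM bound applied under these heterogeneous sizes, bounds each support by $O(m^{w^E}\cdot n^{k\,w^I})$ where $w^E+w^I\le\fhw$; since $m^{a}n^{kb}$ with $a+b\le\fhw$ is maximized at a corner, this is $O(m^{\fhw}+n^{k\fhw})$. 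For each tuple $\ba_{\chi(t)}$ in the support I emit the grounded rule $B_t(\ba_{\chi(t)})\obtainedfrom\bigotimes_{J\ \text{at}\ t} T_J(\ba_J)$, whose right-hand side references the original \EDB coefficients and \IDB variables at the appropriate constants, so the symbolic least-fixpoint semantics is preserved while the support stays within the stated size.

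Finally I would feed the acyclic query over the materialized bag-\IDBs into the construction of Theorem~\ref{thm:main:acyclic}. That construction grounds an acyclic rule at cost $O(n^{k-1}\cdot(\text{atom size}))$ per intermediate rule, where the $n^{k-1}$ factor accounts for the at most $k-1$ head variables not pinned down by the current bag (via the running-intersection argument in its proof). Here the role of an atom is played by a bag-\IDB of size $O(m^{\fhw}+n^{k\fhw})$ rather than an \EDB of size $m$ or an \IDB of size $n^{k}$, so the per-rule grounding size and time become $O(n^{k-1}\cdot(m^{\fhw}+n^{k\fhw}))$. Summing over the constantly many rules (data complexity) yields the claimed bound, and because \textsf{InsideOut} carries no polylogarithmic overhead the plain $O(\cdot)$ (rather than $\widetilde{O}(\cdot)$) notation is justified.

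I expect the main obstacle to be the \emph{correctness} argument for arbitrary (non-idempotent) semirings, not the size bookkeeping. Unlike the dioid case of Theorem~\ref{thm:main}, I cannot rewrite the body as a union over several decompositions and lean on $a\oplus a=a$ to discard duplicates; committing to one decomposition forces me to argue, via the unique atom-to-bag assignment and a valid variable-elimination order, that the refactored grounding computes \emph{exactly} the same $\bS$-relation as $\sumprod$ with no tuple's contribution counted twice. The remaining delicate point is justifying that over-approximating \IDB supports by full $n^{k}$-relations during bag materialization is sound: it can only add ground equations whose variables evaluate to $\zerobf$ at the fixpoint, so $\bS$-equivalence with the original program is maintained.
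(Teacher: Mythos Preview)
Your proposal is correct and follows essentially the same approach as the paper's own proof: commit to a single tree decomposition attaining $\fhw$, materialize each bag's support via \textsf{InsideOut} (yielding the $O(m^{\fhw}+n^{k\fhw})$ bound without polylog overhead), and then invoke the acyclic grounding of Theorem~\ref{thm:main:acyclic}; the paper likewise notes that with $M=1$ decomposition, the idempotence of $\oplus$ is never needed. Your discussion of the unique atom-to-bag assignment and of over-approximating \IDB supports by full $n^k$-relations makes explicit points the paper leaves implicit, but the underlying construction is the same.
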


\section{Related Work}

\relatedwork{Complexity of \Datalog} Data complexity of special fragments of \Datalog has been explicitly studied. Gottlob et al.~\cite{GK04} showed that monadic acyclic \Datalog can be evaluated in linear time \revone{w.r.t the size of the program plus the size of the input data}. Gottlob et al.~\cite{gottlob2002datalog} defined the fragment \emph{\Datalog LITE} as the set of all stratified \Datalog queries whose rules are either monadic or guarded. The authors showed that this fragment can also be evaluated in linear time as monadic acyclic \Datalog. This result also follows from a generalization of Courcelle's theorem~\cite{courcelle1990graph} by Flum et al.~\cite{flum2002query}. Our framework subsumes these results as an application. \eat{More recently,~\cite{RPQs} studied the fine-grained static and dynamic complexity of Regular Path Queries (\textsf{RPQ}) evaluation, enumeration, and counting problems. The main difference from our paper is that we study the data complexity for a fixed \textsf{RPQ}.} Lutz et al.~\cite{lutz2022efficiently} studied efficient enumeration of \emph{ontology-mediated queries} that are acyclic and free-connex acyclic. The reader may refer to Green at al.~\cite{GreenHLZ13} for an in-depth survey of \Datalog rewriting, \Datalog with disjunctions, and with integrity constraints. However, there is no principled study on general \Datalog programs in parameterized complexity despite its prevalence in the literature of join query evaluation~\cite{skew14, faq, PANDA, Marx10}.

\smallskip
\relatedwork{\Datalog over Semirings} \eat{~\cite{POPS} proposed \textsf{Datalog}$^\circ$, an extension of \Datalog evaluated over a \POPS (short for Partially Ordered Pre-Semirings). \POPS subsumes semirings, and in addition, it allows for an imposed partial order other than the natural order. ~\cite{POPS} characterizes the convergence of \textsf{Datalog}$^\circ$ through grounding and the stability of the underlying \POPS. Coarse convergence rate (i.e. the number of Kleene iterations to converge) are also shown. }Besides \textsf{Datalog}$^\circ$~\cite{POPS}, recent work~\cite{Hung2023LinearDatalogo} has looked at the convergence rate for linear \textsf{Datalog}$^\circ$. The evaluation of \Datalog over absorptive semirings with a total order has also been studied~\cite{DynDatalog} where the key idea is to transform the program (and its input) into a weighted hypergraph and use Knuth's algorithm~\cite{knuth1977generalization} for evaluation. Our paper recovers and extends this result by formalizing the proof of correctness via the concept of asynchronous Kleene chains (Appendix~\ref{app:async}), and showing a precise runtime for evaluating a grounding over such semirings. None of the works mentioned above focus on finding the smallest possible grounding, a key ingredient to show the tightest possible bounds.

\smallskip
\relatedwork{\Datalog Provenance Computation and Circuits} \looseness-1 Algorithms for \Datalog provenance computation provides an alternative way to think of \Datalog evaluation. Deutch et al.~\cite{deutch2014circuits} initiated the study of circuits for database provenance. They show that for a \Datalog program having $|G|$ groundings for \IDBs, a circuit for representing \Datalog provenance for the \textsf{Sorp}$(X)$ semiring (absorptive semirings are a special case of \textsf{Sorp}) can be built using only $|G|+1$ layers. As an example, for \textsf{APSP}, the circuit construction and evaluation cost $O(n^4)$ time. An improvement of the result~\cite{Jukna15} showed that for \textsf{APSP}, a monotone arithmetic circuit of size $O(n^3)$ can be constructed by mimicking the dynamic programming nature of the Bellman-Ford algorithm. We use this improvement to show a circuit unconditional lower bound on the grounding size (see Appendix~\ref{app:lowerbound}).
\section{Conclusion}

This paper introduces a general two-phased framework that uses the structure of a \Datalog\ program to construct a tight grounding, and then evaluates it using the algebraic properties of the semiring. Our framework successfully recovers state-of-the-art results for popular programs (e.g. chain programs, \textsf{APSP}), and uncovers new results (e.g. for linear \Datalogo). We also show a matching lower bound (both for running time and space requirement) for a class of Datalog programs. Future work includes efficient evaluation over broader classes of semirings~\cite{POPS}, circuit complexity of \Datalog over semirings and general grounding lower bounds.

\begin{acks}
This work was done in part while Zhao, Koutris, Roy, and Tannen were visiting the Simons Institute for the Theory of Computing. We gratefully acknowledge the support of NSF via awards IIS-2008107, IIS-2147061, and IIS-1910014.
\end{acks}

\bibliography{refs}

\appendix
\newpage
\section{Missing Details from Section~\ref{sec:prelim}} \label{app:prelim}
\begin{lemma}[\cite{DynDatalog}] \label{lemma:absorptive}
    Let $\bS$ be a naturally-ordered dioid. Then $\bS$ is absorptive iff $ a \otimes b \sqsubseteq a$ for all $a, b \in \boldsymbol{D}$.
\end{lemma}
\begin{proof}[Proof of Lemma \ref{lemma:absorptive}]
    ($\Rightarrow$) Since $\bS$ is absorptive, $a \oplus a \otimes b = a \otimes (\onebf \oplus b) = a \otimes \onebf  = a$. Thus, $a \otimes b \sqsubseteq a$; \\
    ($\Leftarrow$) For any $a \in \bS$, $a = \onebf \otimes a \sqsubseteq \onebf$ (i.e. $\onebf \oplus a = \onebf$)
\end{proof}

\section{Missing Details from Section~\ref{sec:eval}} \label{app:eval}

\subsection{Asynchronous Fixpoint Evaluation} \label{app:async}

The result in this part on asynchronous evaluation for \Datalog is folklore, but we formalize it in a rigorous and general way. The definitions and lemmas established are used in the proof of Theorem~\ref{thm:algoRank} and Theorem~\ref{thm:algoAbsorp}, to show convergence.

Let $G$ be a grounding containing $k$ rules over $\bS$. Suppose we have $k$ posets $\bL_1, \bL_2, \dots, \bL_{k}$ with minimum elements $\bot_1, \dots, \bot_k$ respectively. Let $\{f_i\}_{i \in [k]}$ be a family of $k$ motonone functions, where
$$ f_i: \bL_1 \times \cdots \times \bL_k \rightarrow \bL_i, $$
associates to the right-hand side of the $i$-th rule (the multivariate polynomial) in $G$. Let $\bot = (\bot_1, \dots, \bot_k)$ and the \textsf{ICO} of $G$ is 
$$
\boldsymbol{f} = (f_1, \dots, f_k): \bL_1 \times \cdots \times \bL_k \rightarrow \bL_1 \times \cdots \times \bL_k.
$$
The Kleene sequence (or ascending Kleene chain) attempts to evaluate the grounding (if there is a fixpoint) through the following (ascending) sequence: 
$$ \boldsymbol{h}^{(0)} = \bot, \quad \boldsymbol{h}^{(1)} = \boldsymbol{f}(\boldsymbol{h}^{(0)}), \quad \boldsymbol{h}^{(2)} = \boldsymbol{f}(\boldsymbol{h}^{(1)}), \quad \dots $$
Here $\boldsymbol{h}^{(t)} = (h_1^{(t)}, \dots, h_k^{(t)})$, $t \geq 0$, is the vector of values for the posets at the $t$-th iteration (applying $\boldsymbol{f}$ for $t$ times).

We say that the Kleene sequence converges in $t$ steps if $\boldsymbol{h}^{(t+1)} = \boldsymbol{h}^{(t)}$. Note that the Kleene sequence updates $\boldsymbol{h}^{(t)}$ by using the values of the previous iteration simultaneously. We show that we can obtain convergence if $\boldsymbol{h}^{(t)}$ is updated asynchronously.

\begin{definition}
An \textit{asynchronous Kleene chain} is a chain where $\boldsymbol{\hbar}^{(0)} = \bot$ and for any $t \geq 0$:
$$\hbar^{(t+1)}_i =  f_i(\hbar_1^{(t_1)}, \hbar_2^{(t_2)}, \dots,  \hbar_n^{(t_n)}), \text{ for some } 0 \leq t_i \leq t. $$
\end{definition}

In other words, the current values of each variable can be updated by looking at possibly stale values of the other variables. When $t_1 = t_2 = \dots = t$, we get exactly a Kleene chain.
The following proposition tells us that any asynchronous Kleene chain is always below the Kleene chain.

\begin{proposition}\label{prop:asyncKleene}
Let $\boldsymbol{\hbar}^{(0)}, \boldsymbol{\hbar}^{(1)}, \ldots$ be an asynchronous Kleene chain. For every $t \geq 0$,  $\boldsymbol{\hbar}^{(t)} \sqsubseteq \boldsymbol{h}^{(t)}$.
\end{proposition}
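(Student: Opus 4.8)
The plan is to prove the inequality componentwise by induction on $t$, relying on two ingredients: the monotonicity of the functions $f_i$ (assumed in the setup), and the fact that the \emph{synchronous} Kleene chain is itself ascending. So the first step I would carry out is an auxiliary claim: $\boldsymbol{h}^{(0)} \sqsubseteq \boldsymbol{h}^{(1)} \sqsubseteq \boldsymbol{h}^{(2)} \sqsubseteq \cdots$. This is standard and follows by a short induction: since $\bot$ is the minimum element of each poset, $\boldsymbol{h}^{(0)} = \bot \sqsubseteq \boldsymbol{h}^{(1)}$; and if $\boldsymbol{h}^{(t-1)} \sqsubseteq \boldsymbol{h}^{(t)}$, then applying the monotone operator $\boldsymbol{f}$ to both sides gives $\boldsymbol{h}^{(t)} = \boldsymbol{f}(\boldsymbol{h}^{(t-1)}) \sqsubseteq \boldsymbol{f}(\boldsymbol{h}^{(t)}) = \boldsymbol{h}^{(t+1)}$. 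A consequence I will use repeatedly is that $h_j^{(s)} \sqsubseteq h_j^{(t)}$ whenever $s \leq t$, for every component $j$.

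The main argument is a (strong) induction on $t$ for the statement $\boldsymbol{\hbar}^{(t)} \sqsubseteq \boldsymbol{h}^{(t)}$. The base case $t = 0$ is immediate, since $\boldsymbol{\hbar}^{(0)} = \bot = \boldsymbol{h}^{(0)}$. For the inductive step, assume $\boldsymbol{\hbar}^{(s)} \sqsubseteq \boldsymbol{h}^{(s)}$ for all $s \leq t$, and fix a component $i$. By definition of the asynchronous chain there exist indices $t_j$ with $0 \leq t_j \leq t$ such that
$$ \hbar^{(t+1)}_i = f_i\bigl(\hbar_1^{(t_1)}, \hbar_2^{(t_2)}, \dots, \hbar_n^{(t_n)}\bigr). $$
For each $j$, the induction hypothesis (applicable because $t_j \leq t$) gives $\hbar_j^{(t_j)} \sqsubseteq h_j^{(t_j)}$, and the auxiliary ascending property gives $h_j^{(t_j)} \sqsubseteq h_j^{(t)}$; chaining these yields $\hbar_j^{(t_j)} \sqsubseteq h_j^{(t)}$. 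Since $f_i$ is monotone, applying it to these componentwise inequalities gives
$$ \hbar^{(t+1)}_i = f_i\bigl(\hbar_1^{(t_1)}, \dots, \hbar_n^{(t_n)}\bigr) \sqsubseteq f_i\bigl(h_1^{(t)}, \dots, h_n^{(t)}\bigr) = h_i^{(t+1)}. $$
As $i$ was arbitrary, $\boldsymbol{\hbar}^{(t+1)} \sqsubseteq \boldsymbol{h}^{(t+1)}$, completing the induction.

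The one subtlety worth flagging — and the step I expect to require the most care — is the interaction between the stale indices and the induction. An asynchronous update of component $i$ reads values $\hbar_j^{(t_j)}$ computed at possibly different (and earlier) times $t_j \leq t$, so an ordinary induction that only compares $\boldsymbol{\hbar}^{(t)}$ to $\boldsymbol{h}^{(t)}$ is insufficient; I need the \emph{strong} hypothesis covering all $s \leq t$, together with the ascending property of the synchronous chain, precisely to lift each stale value $h_j^{(t_j)}$ up to the common bound $h_j^{(t)}$ before invoking monotonicity of $f_i$. Everything else is routine, and no property of the semiring beyond monotonicity of the immediate-consequence operator is needed.
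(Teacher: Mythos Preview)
Your proof is correct and follows essentially the same approach as the paper: strong induction on $t$, with the inductive step using monotonicity of $f_i$ together with the fact that the synchronous Kleene chain is ascending. The only difference is cosmetic---you spell out the auxiliary ascending-chain claim explicitly, whereas the paper simply invokes ``the monotonicity of the Kleene chain'' in passing.
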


\begin{proof}
We will use the inductive hypothesis that for every $t' \leq t$, $\boldsymbol{\hbar}^{(t')} \sqsubseteq \boldsymbol{h}^{(t')}$. For the base case $t=0$, we have $\boldsymbol{\hbar}^{(0)} = \bot = \boldsymbol{h}^{(0)}$. For the inductive case, assume that $\boldsymbol{\hbar}^{(t')} \sqsubseteq \boldsymbol{h}^{(t')}$ for every $t' \leq t$. Then, we can write:
\begin{align*}
\hbar^{(t+1)}_j  & =  f_i(\hbar_1^{(t_1)}, \hbar_2^{(t_2)}, \dots,  \hbar_n^{(t_n)}) \\
 & \sqsubseteq f_i(h^{(t_1)}, h^{(t_2)}, \dots,  h^{(t_n)}) \\
 & \sqsubseteq f_i(h^{(t)}, h^{(t)}, \dots,  h^{(t)}) \\
 & = h^{(t+1)}_i
\end{align*}
The first inequality follows from the inductive hypothesis, and the second inequality is implied by the monotonicity of the Kleene chain. Both inequalities use the monotonicity of the function $f_i$. 
\end{proof}

\subsection{Proof of Theorem~\ref{thm:algoRank}}

In this part, we complete the proof of the theorem, by showing that Algorithm~\ref{algoRank} correctly computes the solution with the desired runtime.

\begin{proof}[Proof of \autoref{thm:algoRank}]
    Algorithm~\ref{algoRank} terminates when no variable changes value, in which case we have obtained a fixpoint.
    By \autoref{prop:asyncKleene}, we have reached the least fixpoint (since we can view the updates as an asynchronous Kleene chain). 
    
    We now analyze its runtime. The hash table construction and the variable initialization cost time $O(|G|)$. For the while loop, observe that each operation in Line~\ref{line:pickvar}-\ref{line:update} costs $O(1)$ time. Hence, it suffices to bound the number of times we visit each equation in $G$. For this, note that to consider the equation $y = x \circledast z$, either $x$ or $z$ needs to be updated. However, since the semiring has rank $r$, each variable can be updated at most $r$ times. Hence, an equation can be considered at most $2r$ times in the while loop.
\end{proof}

\subsection{Proof of Theorem~\ref{thm:algoAbsorp}}

In the following, we assume $G$ to be a 2-canonical grounding after applying ~\autoref{lem:2canonical}. First, we establish the following two lemmas throughout the while-loop of Algorithm~\ref{algoAbsorp}. The first one is the following:

\begin{lemma} \label{lem:inv1}
     For any \textsf{IDB} variable $x$ in $G$,
     $\mathbf{0} \sqsubseteq x^{(1)} \sqsubseteq x^{(2)} \sqsubseteq \cdots$, where $x^{(0)} = \mathbf{0}, x^{(1)},  x^{(2)}, \ldots$ are the values of $x$ in the asynchronous Kleene chain constructed from the while-loop of Algorithm~\ref{algoAbsorp}.
\end{lemma}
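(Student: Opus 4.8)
The plan is to establish a stronger global invariant by induction on the number of assignments the algorithm performs, and then read off the lemma as its restriction to a single variable. Let $h_t(\cdot)$ denote the contents of the hash table $h$ immediately after the $t$-th assignment (counting both the initialization assignments $h(x) \gets h(y) \circledast h(z)$ and the reassignments inside the while-loop), so that $h_0$ is the initial state in which every \textsf{IDB} variable holds $\mathbf{0}$. I would prove the invariant: for every variable $w$ and all $s \le t$, $h_s(w) \sqsubseteq h_t(w)$; that is, the value stored in each cell is non-decreasing in the natural order over the entire run. The lemma then follows immediately, because $x^{(0)} = \mathbf{0}, x^{(1)}, x^{(2)}, \dots$ is exactly the subsequence of $h_0(x), h_1(x), h_2(x), \dots$ taken at the steps where $x$ changes value, and a subsequence of a non-decreasing sequence is non-decreasing (with $x^{(0)} = \mathbf{0}$ the minimum).

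For the inductive step $t \to t+1$, I use that a single assignment changes the value of exactly one variable $y$, setting $h_{t+1}(y) = h_t(x) \circledast h_t(z)$, where $y = x \circledast z$ is the $2$-canonical defining equation of $y$ and $\circledast \in \{\oplus, \otimes\}$; every other cell is untouched, so the invariant persists for $w \ne y$ by the induction hypothesis. For $y$ itself I distinguish two cases. If $y$ was never assigned before, then $h_t(y) = \mathbf{0} \sqsubseteq h_{t+1}(y)$ since $\mathbf{0}$ is the minimum of the poset. Otherwise, let $r \le t$ be the most recent earlier step that assigned $y$; since $y$ is untouched between steps $r$ and $t$, we have $h_t(y) = h_r(y) = h_{r-1}(x) \circledast h_{r-1}(z)$. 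Applying the induction hypothesis with $r-1 \le t$ gives $h_{r-1}(x) \sqsubseteq h_t(x)$ and $h_{r-1}(z) \sqsubseteq h_t(z)$, and the monotonicity of both $\oplus$ and $\otimes$ in a dioid (recorded in the Dioids paragraph of Section~\ref{sec:prelim-semiring}) yields $h_t(y) = h_{r-1}(x) \circledast h_{r-1}(z) \sqsubseteq h_t(x) \circledast h_t(z) = h_{t+1}(y)$. Together with the hypothesis $h_s(y) \sqsubseteq h_t(y)$ for $s \le t$, this closes the induction.

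The step I expect to be the main obstacle is conceptual rather than computational: an \emph{arbitrary} asynchronous Kleene chain need not be coordinatewise monotone, since a stale read can move a value backwards. Monotonicity here relies essentially on the fact that Algorithm~\ref{algoAbsorp} always recomputes $h(y)$ from the \emph{current} values $h(x), h(z)$ of its inputs; this is precisely what lets me invoke $h_{r-1}(x) \sqsubseteq h_t(x)$ when comparing the old and new values of $y$. For this reason I would argue at the level of the algorithm's evolving state rather than over the abstract asynchronous chain, and I would note that only dioid monotonicity is needed for this invariant — neither the total order nor absorptiveness plays a role here (these enter elsewhere, e.g. in justifying the priority-queue ordering and the freezing of variables).
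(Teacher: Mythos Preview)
Your proof is correct and follows essentially the same approach as the paper's: both argue by induction over time, using only the monotonicity of $\oplus$ and $\otimes$ in a dioid to compare the old and new values of the single updated variable. The paper indexes the induction by pops of the priority queue (so one step may update several variables at once), whereas you index by individual assignments; your finer granularity makes the invariant and the inductive step a bit cleaner to state, but the underlying idea is identical.
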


\begin{proof}
    The base case is the first pop, say it pops the variable $x_1$. Trivially, $\mathbf{0} \sqsubseteq  x_1^{(1)} = x_1^{(2)} = \cdots$ because $x_1$ is fixed onwards. Then, for an \textsf{IDB} variable $y \notin E[x]$, $y^{(1)} = y^{(2)}$, else $y^{(1)} = \mathbf{0} \circledast z^{(0)} \sqsubseteq  x_1^{(1)} \circledast z^{(1)} = y^{(2)}$.
    
    Now in the inductive case (at the $(i+1)$-th pop of the variable $x_{i}$), the inductive hypothesis is that for each \textsf{IDB} variable $x$, $\mathbf{0}  \sqsubseteq x^{(1)} \sqsubseteq  x^{(2)} \sqsubseteq \cdots \sqsubseteq  x^{(i)}$. Then, for any \textsf{IDB} variable $y$, $y^{(i)} = x_i^{(i-1)} \circledast z^{(i - 1)} \sqsubseteq x_i^{(i)} \circledast z^{(i)} = y^{(i+1)} $ by monotonicity of both $\oplus, \otimes$.
\end{proof}

The second invariant on the while-loop of Algorithm~\ref{algoAbsorp} is stated as the following:
\begin{lemma} \label{lem:inv2}
    The values of \textsf{IDB} variables popped off the queue are non-increasing w.r.t. to $\sqsubseteq$.
\end{lemma}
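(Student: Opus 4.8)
The plan is to prove the claim by induction on the order in which variables are popped, while maintaining the stronger invariant: \emph{immediately after the $i$-th variable is popped (with value $v_i$) and all of its outgoing equations have been relaxed, every value currently stored in the priority queue $q$ is $\sqsubseteq v_i$.} Since the $(i{+}1)$-st popped variable is, by construction, the $\sqsupseteq$-maximum element of $q$ at that point, this invariant immediately yields $v_{i+1} \sqsubseteq v_i$, which is exactly the non-increasing property (w.r.t.\ $\sqsubseteq$) asserted by the lemma. Only \IDB variables ever enter $q$, so the statement concerns precisely these.

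Before the inductive step I would record two facts about a totally ordered absorptive semiring that drive the argument. Since $\sqsubseteq$ is total and $\bS$ is a dioid, $a \oplus b = \max_{\sqsubseteq}(a,b)$ (both $a,b \sqsubseteq a\oplus b$, and if $a \sqsubseteq b$ then $a\oplus b \sqsubseteq b \oplus b = b$ by idempotence and monotonicity); and by \autoref{lemma:absorptive}, $a \otimes b \sqsubseteq a$. The base case concerns the queue built in the initialization loop, whose entries are $h(x) = h(y) \circledast h(z)$ with $y,z$ already assigned; the first pop extracts the $\sqsupseteq$-maximum $v_1$, so every remaining entry is trivially $\sqsubseteq v_1$, and it only remains to check that the relaxations performed while processing each popped variable never insert a value $\sqsupset v_i$ — which is precisely the content of the inductive step.

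So the heart of the proof is to show that when we process a just-popped variable $x$ of value $v_i$ and relax an equation $y = x \circledast z$ with $y \notin \mathcal{F}$, the newly assigned value $h(y) = h(x) \circledast h(z)$ satisfies $h(y) \sqsubseteq v_i$. If $\circledast = \otimes$, this is immediate from absorptiveness: $h(y) = v_i \otimes h(z) \sqsubseteq v_i$. The delicate case is $\circledast = \oplus$, where $h(y) = v_i \oplus h(z) = \max_{\sqsubseteq}(v_i, h(z))$, so the invariant can fail only if $h(z) \sqsupset v_i$. I would rule this out by a case split on $z$. If $z \notin \mathcal{F}$, then $z$ is either untouched (so $h(z) = \zerobf \sqsubseteq v_i$) or still present in $q$, in which case $h(z) \sqsubseteq v_i$ because $x$ was popped as the $\sqsupseteq$-maximum of $q$; either way $h(y) = v_i$. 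If $z \in \mathcal{F}$, then $z$ was popped earlier, so the same equation $y = x \oplus z$ was already relaxed from $z$'s side (note $y \notin \mathcal{F}$ held then as well, since $\mathcal{F}$ only grows), and by the monotonicity of the chain (\autoref{lem:inv1}) the stored $h(y)$ has been $\sqsupseteq h(z)$ ever since, so $h(y) = v_i \oplus h(z) = h(z)$ already equals the stored value and the guard yields a no-op.

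The step I expect to be the main obstacle is exactly this last sub-case: arguing that a relaxation through an $\oplus$-equation whose other operand $z$ is already frozen cannot introduce a fresh, larger value into $q$. The clean way to close it is to combine the ``freeze-once'' discipline (Lines~\ref{line:freeze}--\ref{line:noupdate}) with \autoref{lem:inv1}: because $z$'s final value was propagated into $h(y)$ when $z$ was frozen, and values only increase thereafter, the test $h(y) \neq h(x) \circledast h(z)$ fails and the relaxation does nothing; alternatively, had $h(y)$ genuinely exceeded $v_i$, then $y$ — which lies in $q$ and is not frozen — would have been extracted before $x$, contradicting the maximality of $x$. Either phrasing preserves the invariant and completes the induction.
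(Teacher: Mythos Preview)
Your strengthened-invariant formulation (``after popping $v_i$ and relaxing, every entry remaining in $q$ is $\sqsubseteq v_i$'') is a sound Dijkstra-style variant of the paper's direct induction on consecutive pops, and the $\otimes$-case as well as the \IDB sub-cases of the $\oplus$-case go through essentially as you describe.

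There is, however, a real gap in your $\oplus$-case. Your split on $z$ is ``$z \notin \mathcal{F}$'' versus ``$z \in \mathcal{F}$'', and within $z \notin \mathcal{F}$ you conclude that $z$ is ``either untouched (so $h(z)=\zerobf$) or still present in $q$''. This tacitly assumes $z$ is an \IDB variable. But in a 2-canonical grounding an \EDB coefficient \emph{can} sit on the right of an $\oplus$-equation (cf.\ Example~\ref{ex:2canonical}, where $x_{ab}=e_{ab}\oplus y_1$). Such a $z$ is never inserted into $q$, never added to $\mathcal{F}$, and has $h(z)=e$ which need not be $\zerobf$; neither branch of your $z\notin\mathcal{F}$ case covers it. This is exactly the sub-case the paper singles out: since an \EDB coefficient is constant, the initialization loop already set $h(y)\sqsupseteq h(z)$, so if $h(z)\sqsupset v_i$ then the current $h(y)\sqsupset v_i$ with $y$ unfrozen and hence in $q$, contradicting that $x$ was popped as the maximum. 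Your ``alternative phrasing'' in the last paragraph is precisely this argument; you just need to recognize that it is required for the \EDB sub-case you overlooked, not (only) for the $z\in\mathcal{F}$ sub-case where you located the difficulty.
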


\begin{proof} Let $x^{(i)}$ ($i \geq 1$) denote the value of variable $x$ at the $i$-th pop. If $(x_1, x_2, \dots)$ is the sequence of variables that get popped off the queue, then we want to show that $\mathbf{1} \sqsupseteq x_1^{(1)} \sqsupseteq x_2^{(2)} \sqsupseteq  \cdots \sqsupseteq  \mathbf{0}$. We prove it via induction. Observe that $x_i$ will bind to $x_i^{(i)}$ after the $i$-th pop. Thus, $x_i^{(i)} = x_i^{(j)}$ for any $j \geq i$.

The base case is trivial, since $\mathbf{1} \sqsupseteq s \sqsupseteq \mathbf{0}$ holds for any $s \in \boldsymbol{D}$. Now, inductively, suppose we pop $x_{i}$ ($i \geq 2$) at some point and  $\mathbf{1} \sqsupseteq x_1^{(1)} \sqsupseteq x_2^{(2)} \sqsupseteq \cdots \sqsupseteq x_i^{(i)}$  is the sequence of already popped values. We will show that $ x_{i}^{(i)} \sqsupseteq  x_{i+1}^{(i+1)}$.

We distinguish two cases. Suppose that $x_{i}$ is not in the right-hand side of the equation of $x_{i+1}$. Then,  $x_{i}^{(i)} \sqsupseteq x_{i+1}^{(i)} =  x_{i+1}^{(i+1)} $, where the first inequality comes from the fact that $x_{i}^{(i)}$ is popped off as a max element in the queue at iteration $i$.

Otherwise, we have in $G$ exactly one equation of the form $x_{i+1} = x_i \circledast z$ ($z$ may be a \textsf{EDB} coefficient). Now there are two cases depending on the operator:
\begin{enumerate}
    \item $x_{i+1} = x_i \otimes z$. By superiority of $\otimes$, $x_{i}^{(i)} \sqsupseteq x_{i}^{(i)} \otimes z^{(i)} = x_{i+1}^{(i+1)}$. 
    \item $x_{i+1} = x_i \oplus z$. If $x_{i}^{(i)} \sqsupseteq z^{(i)}$, then $x_{i+1}^{(i+1)} = x_{i}^{(i)}$. Otherwise, $x_{i}^{(i)} \sqsubset z^{(i)}$. If $z$ is an \textsf{IDB} variable, we get a contradiction since $x_i$ is not a max value at the $i$-th pop. If $z$ is an \textsf{EDB} coefficient, then $z^{(i-1)} = z^{(i)} $ and thus
    \begin{align*}
        x_{i+1}^{(i)} = x_i^{(i-1)} \oplus z^{(i-1)} = x_i^{(i-1)} \oplus z^{(i)} \sqsupseteq  z^{(i)} \sqsupset x_{i}^{(i)} 
    \end{align*}
    However, $ x_{i+1}^{(i)} \sqsupset x_{i}^{(i)} $ is a contradiction since then $x_{i+1}$ would have been the $i$-th pop instead of $x_i$.
\end{enumerate}
\end{proof}

\revthree{Now we are ready to present the proof for \autoref{thm:algoAbsorp}. In particular, we will use the two lemmas established above (Lemmas~\ref{lem:inv1} and~\ref{lem:inv2}) to show that the algorithm terminates and returns the desired fixpoint. Then we will analyze its runtime, as predicated in \autoref{thm:algoAbsorp}.}

\begin{proof}[Proof of Theorem~\ref{thm:algoAbsorp}]
    We show that Algorithm~\ref{algoAbsorp} terminates and yields a fixpoint, and by \autoref{prop:asyncKleene}, we get the least fixpoint for free because the Dijkstra-style while-loop is a special case of asynchronous Kleene chain. First, it is easy to see that Algorithm~\ref{algoAbsorp} terminates because the size of $q$ always decreases by $1$ (popped \textsf{IDB} variables will not be pushed back into the queue, using $\mathcal{F}$ for bookkeeping of already popped variables). 
    
    We now show that the returned $h(\cdot)$ is a fixpoint. To prove the fixpoint, we show that once the queue pops off an \textsf{IDB} variable $x$ as some value $h(x)$, the equation with $x$ at the left-hand side in $G$, say $x = y \circledast z$, always holds from that point onwards. We prove this by contradiction. Suppose that $x = y \circledast z$ causes the earliest violation in the underlying asynchronous Kleene chain. By \autoref{lem:inv1}, the values of $y, z$ are non-decreasing w.r.t. $\sqsubseteq$. Therefore, the only way to violate $x = y \circledast z$ is that the value of an \textsf{IDB} variable $y$ (or $z$) strictly increases after $x$ is popped off the queue. On the other hand, the queue will eventually pop off that \textsf{IDB} variable ($y$ or $z$) $\sqsupset$ $x$, since Algorithm~\ref{algoAbsorp} always terminates and $y, z$ can not grow smaller by \autoref{lem:inv1}. Yet, it is a direct contradiction to \autoref{lem:inv2}. 

    Lastly, we provide the runtime analysis. Using similar arguments in the proof of \autoref{thm:algoRank}, Lines 1-7 of Algorithm~\ref{algoAbsorp} can execute in $O(|G|)$ time. For its while-loop, each iteration binds an \textsf{IDB} variable $x$ and calls the insert operations of the priority queue $q$ at most $\mathsf{deg}(x)$ times, where $\mathsf{deg}(x)$ denotes the number of the equations in $E[x]$. A 2-canonical grounding has that $\sum_{x \in E} \mathsf{deg}(x) = 2 |G|$. Using a classic max heap implementation of a priority queue, inserts (or updates) of values can run in $O(\log |G|)$ time. Thus, the overall runtime is 
    $$
    O\left(|G| + \sum_{x \in E}\mathsf{deg}(x) \cdot \log |G| \right) = O(|G| \cdot \log |G|).
    $$
\end{proof}

\section{Missing Details from Section~\ref{sec:acyclic}} \label{app:acyclic}

In this part, we include:

\begin{enumerate}
    \item[(\ref{app:acyclic:algo})] \revone{Algorithm~\ref{grounded-acyclicprogram} for grounding a rulewise-acyclic \Datalogo programs and a step-by-step walk-through on Example~\ref{ex:acyclic}.}
    \item[(\ref{app:acyclic:applications})] Two addtional acyclic \Datalogo applications (Example~\ref{sec:andersen} and Example~\ref{sec:anbncn}) of Theorem~\ref{thm:main:acyclic}.
    \item[(\ref{app:acyclic:freeconnex})] Missing details for rulewise free-connex acyclic \Datalogo programs.
    \item[(\ref{app:acyclic:linearAtMost2})] Missing details for linear rulewise-acyclic \Datalogo programs of arity at most 2.
\end{enumerate}

\subsection{Algorithm for Grounding Rulewise-Acyclic \Datalogo Programs} \label{app:acyclic:algo}

The formal algorithm for grounding a rulewise-acyclic \Datalogo program is given in Algorithm~\ref{grounded-acyclicprogram}. We provide a step-by-step walk-through on Example~\ref{ex:acyclic} to complement Figure~\ref{fig:join-tree}. 

\revtwo{

\begin{example} \label{ex:acyclic-demo}
    Recall the acyclic rule presented in Example~\ref{ex:acyclic}:
    \begin{align*}
        T(x_1) \obtainedfrom \bigoplus_{\bx_{234}} T_2(x_2) \otimes T_3(x_3) \otimes R_{24}(\bx_{24}) \otimes R_{34}(\bx_{34}) \otimes R_{14}(\bx_{14}) 
    \end{align*}
\end{example}
We now show how the proof of Theorem~\ref{thm:main:acyclic} (or equivalently, the \textsc{Ground}$()$ method of Algorithm~\ref{grounded-acyclicprogram}) constructs the intermediate rules drawn in Figure~\ref{fig:join-tree}. For succinctness, we call the nodes in the join tree by their corresponding atoms. Following the rooted join tree in Figure~\ref{fig:join-tree}, we start from calling \textsc{Ground} at the root node $R_{14}$, i.e. \textsc{Ground}$(R_{14}, T(x_1))$:
\begin{enumerate}
    \item [$(i)$] (\textit{Refactor}) The root node $R_{14}$ has two children and thus we introduce two new \IDBs $U_4(x_4)$ for the node $R_{24}$ and $V_4(x_4)$ for the node $R_{34}$. This corresponding to the following \textit{refactoring} of the original rule:
        \begin{align*}
            T(x_1) &=  \bigoplus_{\bx_{234}} T_2(x_2) \otimes T_3(x_3) \otimes R_{24}(x_2, x_4) \otimes R_{34}(x_3, x_4) \otimes R_{14}(x_1, x_4)  \\
            &= \underbrace{\bigoplus_{x_4} R_{14}(x_1, x_4) \otimes \left( \bigoplus_{x_2} R_{24}(x_2, x_4) \otimes T_2(x_2) \right) \otimes \left( \bigoplus_{x_3} R_{34}(x_3, x_4) \otimes T_3(x_3) \right)}_{\textit{(push $\bigoplus$ over two children)}} \\
            &= \underbrace{\bigoplus_{x_4} R_{14}(x_1, x_4) \otimes U_4(x_4) \otimes V_4(x_4)}_{\textit{(introduce two new \IDBs)}}
        \end{align*}
        along with two new rules for the intermediate \IDBs:
        \begin{align*}
            U_4(x_4) &\obtainedfrom \bigoplus_{x_2} R_{24}(x_2, x_4) \otimes T_2(x_2) \\
            V_4(x_4) &\obtainedfrom \bigoplus_{x_3} R_{34}(x_3, x_4) \otimes T_3(x_3).
        \end{align*}

    \item [$(ii)$] (\textit{Ground}) We ground the refactored rule $\underline{T(x_1) \obtainedfrom \bigoplus_{x_4}  {R_{14}(x_1, x_4)} \otimes U_4(x_4) \otimes V_4(x_4)}$ (i.e. the top rule shown in Figure~\ref{fig:join-tree}) by taking all tuples in the \EDB $R_{14}$ and that naturally subsumes all possible tuples with non-zero annotations (i.e. ${R_{14}(x_1, x_4)} \otimes U_4(x_4) \otimes V_4(x_4) \neq \zerobf$). Recall that tuples not in the \EDB $R_{24}$ has an implicit annotation of $\zerobf$ and thus do not contribute to the sum-product. This gives a grounding of size $O(|R_{14}|) = O(m)$.
    \item [$(iii)$] (\textit{Recurse}) We recursively call \textsc{Ground} on the children nodes, i.e. \textsc{Ground}$(R_{24}, U_4(x_4))$ and \textsc{Ground}$(R_{34}, V_4(x_4))$. To avoid repeatence, we only demonstrate \textsc{Ground}$(R_{24}, U_4(x_4))$ next.
\end{enumerate}

Again, following the join tree rooted at $R_{24}$, we call \textsc{Ground}$(R_{24}, U_4(x_4))$:
\begin{enumerate}
    \item [$(i)$] (\textit{Refactor}) The node $R_{24}$ has one child $T_2(x_2)$ and thus we introduce a new \IDB $U_2(x_2)$. This corresponds to the following \textit{refactoring} of the original rule:
        \begin{align*}
            U_4(x_4) &= \bigoplus_{x_2} R_{24}(x_2, x_4) \otimes T_2(x_2) & \textit{(no more variables to push $\bigoplus$)} \\
            &= \bigoplus_{x_2} R_{24}(x_2, x_4) \otimes  U_2(x_2) & \textit{(introduce a new \IDB)} 
        \end{align*}
        along with a new rule for the intermediate \IDB:
        \begin{align*}
            U_2(x_2) &\obtainedfrom T_2(x_2).
        \end{align*}
    \item [$(ii)$] (\textit{Ground}) We ground the refactored rule $\underline{U_4(x_4) \obtainedfrom \bigoplus_{x_2} R_{24}(x_2, x_4) \otimes  U_2(x_2)}$ by taking all tuples in the \EDB $R_{24}(x_2, x_4)$. This gives a grounding of size $O(|R_{24}|) = O(m)$.
    \item [$(iii)$] (\textit{Recurse}) We recursively call \textsc{Ground}$(T_2, U_2(x_2))$ next.
\end{enumerate}

Finally, we call \textsc{Ground}$(T_2, U_2(x_2))$. However, since $T_2$ is a leaf node, there is no refactoring and further recursion. We simply ground the rule: $\underline{U_2(x_2) \obtainedfrom T_2(x_2)}$ by taking all values in the active domain of $x_2$. This gives a grounding of size $O(n)$ (recall $n \leq m$).
}

\begin{algorithm}[!t]
	\SetKwInOut{Input}{Input}
	\SetKwInOut{Output}{Output}
	\SetKwFunction{ground}{\textsc{Ground}}
	\SetKwProg{myproc}{Method}{}{}
	\Input{rulewise-acyclic \Datalogo Program $P$ over a semiring $\bS$}
	\Output{$\bS$-equivalent Grounding $G$}
        \textbf{let} $G = \emptyset$; \\
        \ForEach{rule $T(\bx_H) \obtainedfrom \sumprod_1(\bx_H) \oplus \sumprod_2(\bx_H) \ldots \in P$} {
            \ForEach{\SumProd query $\sumprod(\bx_H) \in T(\bx_H)$}{ 

                \textbf{construct} a join tree $(\mT, \chi)$ of $\sumprod(\bx_H)$ rooted at a node $s$, where its corresponding atom $T_{\chi(s)}$ contains the most head variables; \\
                \textbf{let} $H_t \subseteq H$ be the set of head variables that occur in the subtree of $\mT$ rooted at $t \in V(\mT)$; \\
                \textbf{let} $T(\bx_H) = U_{e_{(\emptyset, s)}}(\bx_{e_{(\emptyset, s)}})$; \tcp*[f]{$e_{(\emptyset, s)} = H_s = H$}\\
                \ground{$s$, $ U_{e_{(\emptyset, s)}}(\bx_{e_{(\emptyset, s)}})$}; \\
            }
        }

	\myproc(\tcp*[f]{$r$ is the parent of $s$}){\ground{$s$, $U_{e_{(r, s)}}(\bx_{e_{(r, s)}})$}}{ 
        \textcolor{gray}{// \textit{$(i)$ Refactor the subtree rooted at $s$ by introducing one new \IDB per child node $t$}} \\ 
        
        \ForEach(\tcp*[f]{$t$ is a child of $s$}){$(s, t) \in E(\mT)$}{
            \textbf{introduce} a new \IDB $U_{e_{(s, t)}}(\bx_{e_{(s, t)}})$ where $e_{(s, t)} = (\chi(s) \cap \chi(t)) \cup H_t$; \\
        }

        \textcolor{gray}{// \textit{$(ii)$ Ground the passed-in \IDB} $U_{e_{(r, s)}}(\bx_{e_{(r, s)}}) \obtainedfrom \bigoplus_{\bx_{S \setminus e_{(r, s)}}} T_{\chi(s)} (\bx_{\chi(s)})\otimes \bigotimes_{t: (s, t) \in E(\mT)} U_{e_{(s, t)}}(\bx_{e_{(s, t)}})$} \label{line:groundingStart}  \\
        \textbf{let} $S = \chi(s) \cup \bigcup_{t: (s, t) \in E(\mT)} e_{(s, t)}$; \\
        \ForEach(\tcp*[f]{$O(m + n^k)$}){$\ba_{\chi(s)} \in T_{\chi(s)}(\bx_{\chi(s)}) $}{
            \ForEach(\tcp*[f]{$O(n^{k-1})$}){$\ba_{S \setminus \chi(s)} \in \mathsf{Dom}(\bx_{S \setminus \chi(s)}) $}{
                \textbf{let} $\ba_{S} = (\ba_{\chi(s)}, \ba_{S \setminus \chi(s)})$;  \tcp*[f]{concat tuple} \\
                \If(\tcp*[f]{$\Pi_{e_{(s, t)}}\ba_{S} $ is the projection of $\ba_{S}$ to $\bx_{e_{(s, t)}}$} ){$U_{e_{(r, s)}}(\Pi_{e_{(r, s)}} \ba_{S} ) \notin G$ }{
                    \textbf{insert} into $G$ a new grounded rule
                    $U_{e_{(r, s)}}(\Pi_{e_{(r, s)}} \ba_{S} ) \obtainedfrom T_{\chi(s)} ( \ba_{\chi(s)}) \otimes \bigotimes_{t: (s, t) \in E(\mT)} U_{e_{(s, t)}}(\Pi_{e_{(s, t)}} \ba_{S} )$; \\
                }
                \Else{
                    \textbf{add} to the body of $U_{e_{(r, s)}}(\Pi_{e_{(r, s)}} \ba_{\vars{\phi_s}} )$ by $T_{\chi(s)} ( \ba_{\chi(s)}) \otimes \bigotimes_{t: (s, t) \in E(\mT)} U_{e_{(s, t)}}(\Pi_{e_{(s, t)}} \ba_{S} )$; \label{line:groundingEnd} \\
                }
            }
        }
        \textcolor{gray}{// \textit{$(iii)$ Recursively call \textsc{Ground} on each child node $t$ (pass in the new \IDB $U_{e_{(s, t)}}$)}} \\
        \ForEach{$(s, t) \in E(\mT)$}{
            \ground{$t$, $U_{e_{(s, t)}}(\bx_{e_{(s, t)}})$}; \\
            }
    }
	\caption{Ground a rulewise-acyclic \Datalogo Program over a semiring $\bS$} \label{grounded-acyclicprogram}
\end{algorithm}

\subsection{Two Additional Applications of Theorem~\ref{thm:main:acyclic}} \label{app:acyclic:applications}
We show two applications (Example~\ref{sec:andersen} and Example~\ref{sec:anbncn}) to illustrate the grounding technique for acyclic \Datalogo programs in Theorem~\ref{thm:main:acyclic}. 

\begin{example} \label{sec:andersen}
    Andersen's analysis can be captured by the following acyclic \textsf{Datalog} program of arity 2:
    \begin{align*}
     T(x_1,x_2) &\obtainedfrom \textsf{AddressOf}(x_1,x_2) \vee \left(\bigvee_{x_3} T(x_1,x_3) \wedge \textsf{Assign}(x_3,x_2) \right) \\ &\vee \left(\bigvee_{\bx_{34}}  T(x_1,x_4) \wedge T(x_4,x_3) \wedge \textsf{Load}(x_3,x_2) \right)  \vee \left(\bigvee_{\bx_{34}}  T(x_1, x_4) \wedge \textsf{Store}(x_4, x_3) \wedge T(x_2,x_3) \right).
    \end{align*}
    By Theorem~\ref{thm:main:acyclic}, we can obtain an equivalent grounded \Datalogo program of size $O(n^3)$ in time $O(n^3)$ since for binary \EDB predicates, $O(m + n^2) = O(n^2)$.
\end{example}

Next, we ground an acyclic \Datalogo Program for the language $a^p b^p c^p$, where $p \geq 1$. Note that this language is not context-free.

\begin{example} \label{sec:anbncn}
    \revtwo{Suppose we have a directed labeled graph, where each directed edge is labeled as $a$, $b$, or $c$ (possibly more than one labels for a single edge). 
	We construct a \Datalogo program over the natural number semiring $\mathbb{N} = (\mathbb{N}, +, \cdot, 0, 1)$ to compute: for every pair of node $(x_a, y_c)$ in this graph, the number of directed labeled paths from $x_a$ to $y_c$ that satisfies the expression $a^p b^p c^p$ for $p \geq 1$. The target in the following program (of arity 6) is the \IDB $Q$ and the \EDB predicates are $a, b, c$, where for each tuple $(x_a, y_a)$ in the \EDB $a$, its natural number annotation $a(x_a, y_a)= 1$ (since it means that there is a directed edge from node $x_a$ to node $y_a$ labeled as $a$ in the graph), and similarly for \EDBs $b$ and $c$. Tuples absent from the \EDB predicates are assumed to have annotation $0$ (i.e. no such labeled edge in the graph).
    The program is as follows:
	\begin{align*}
	T(x_a, y_a, x_b, y_b, x_c, y_c) & \obtainedfrom  a(x_a, y_a) \cdot b(x_b, y_b) \cdot c(x_c, y_c) \;  + \\ &\left( \sum_{y_a', y_b', y_c'} T(x_a, y_a', x_b, y_b', x_c, y_c') \cdot a(y_a', y_a)  \cdot b(y_b', y_b) \cdot c(y_c', y_c) \right).  \\
    T_{\textsf{path}}(x_a, y_a, y_b, y_c) & \obtainedfrom T(x_a, y_a, x_b, y_b, x_c, y_c), \; y_a = x_b, \; y_b = x_c. \\
	Q(x_a,y_c) & \obtainedfrom \sum_{y_a,y_b} T_{\textsf{path}}(x_a, y_a, y_b, y_c).
	\end{align*}
    We will use our grounding method to bound the cost of evaluating this \Datalogo program. The third rule obviously has a grounding of size $n^4$. For the second rule that involves selection predicates, one can rewrite it as the acyclic rule
    $$
    T_{\textsf{path}}(x_a, y_a, y_b, y_c) \obtainedfrom T(x_a, y_a, x_b, y_b, x_c, y_c), I(y_a, x_b), I(y_b, x_c).
    $$
    where $I(y, x)$ are new \EDB predicates that represent the identity function over the active domain, i.e. $I(y, x) = 1$ if $y = x$ and $0$ otherwise. Hence, $I(y_a, x_b)$ and $I(y_b, x_c)$ are of size $O(n)$ and this rule has a grounding of size $O(n^2 \cdot n \cdot n) = O(n^4)$.

    Now we ground the first rule (also acyclic) as follows: the number of groundings of the first \SumProd query $a(x_a, y_a) \cdot b(x_b, y_b) \cdot c(x_c, y_c)$ is $m \cdot n^4$. For the second \SumProd query, we can decompose it as follows:
	\begin{align*}
	T_1(x_a, y_a, x_b, y_b', x_c, y_c') & \obtainedfrom \sum_{y_a'} T(x_a, y_a', x_b, y_b', x_c, y_c') \cdot a(y_a', y_a).  \\
	T_2(x_a, y_a, x_b, y_b, x_c, y_c') & \obtainedfrom \sum_{y_b'} T_1(x_a, y_a, x_b, y_b', x_c, y_c') \cdot b(y_b', y_b).  \\
	T(x_a, y_a, x_b, y_b, x_c, y_c) & \obtainedfrom \sum_{y_c'} T_2(x_a, y_a, x_b, y_b, x_c, y_c') \cdot c(y_c', y_c).  
	\end{align*}
	It is easy to see that the groundings for each rule are bounded by $m \cdot n^5$. Hence, the overall grounding is of size $O(n^4 + m \cdot n^4 + m \cdot n^5) = O(m \cdot n^5)$.}
\end{example}

\subsection{Rulewise free-connex acyclic \Datalogo programs}  \label{app:acyclic:freeconnex}

\begin{proof}[Proof of~\autoref{thm:main:acyclic:freeconnex}]
    The construction is similar to that of Algorithm~\ref{grounded-acyclicprogram}, except that we root the join tree at the node $r$, where the tree is free-connex w.r.t. $r$. Now, we give a more fine-grained argument on the grounding size at Line \ref{line:groundingStart}-\ref{line:groundingEnd} of Algorithm~\ref{grounded-acyclicprogram} when $\mathsf{Ground}(s, U_{e_{(r, s)}}(\bx_{e_{(r, s)}}))$ is called. Recall that we want to ground the intermediate rule
    $$ U_{e_{(r, s)}}(\bx_{e_{(r, s)}}) \obtainedfrom \bigoplus_{\bx_{S \setminus H}} T_{\chi(s)} (\bx_{\chi(s)})\otimes \bigotimes_{t: (s, t) \in E(\mT)} U_{e_{(s, t)}}(\bx_{e_{(s, t)}})
    $$
 where $e_{(r, s)} = (\chi(r) \cap \chi(s)) \cup H_s$. Follow the proof of Theorem~\ref{thm:main:acyclic}, it suffices to cover $\chi(s) \cup H_s$ and for that, we split into the following two cases:
\begin{itemize}
    \item If $s \in V(\mT)$ contains only head variables (i.e. $\chi(s) \subseteq H$). Then, $\chi(s) \cup H_s \subseteq H$ and taking all instantiations in the active domain of $\chi(s) \cup H_s$ suffices to ground the body for the above rule. There are $O(n^k)$ such instantiations since $|H| \leq k$.
    \item Otherwise, $s$ contains at least one non-head variable (say $i$). Then, $H_s \subseteq \chi(s)$ since if there is a head variable $j \notin \chi(s)$, but it occurs in the subtree rooted at $s$, then $\mathsf{TOP}_r(i)$ must be an ancestor of $\mathsf{TOP}_r(j)$, which contradicts the fact that the join tree is free-connex w.r.t. $r$. Thus, $\chi(s) \cup H_s = \chi(s)$ and we can trivially cover $\chi(s)$ by $O(m + n^k)$ groundings.
\end{itemize}
For both cases, the size of the groundings inserted into $G$ for the intermediate rule is $O(m + n^k)$.
\end{proof}

\subsection{Linear rulewise-acyclic \Datalogo programs of arity at most 2} \label{app:acyclic:linearAtMost2}

\begin{proof}[Proof of~\autoref{prop:linear}]
    We illustrate the grounding technique for a single acyclic \SumProd $\sumprod(x, y)$ containing at most one \IDB. If $\sumprod(x, y)$ has no \IDB (so essentially a \SumProd query), then there is a grounding of size $O(n\cdot m)$ that can be constructed in time $O(n \cdot m)$, since now every atom has a grounding of $O(m)$ instead of $O(m + n^2)$ in the proof of Theorem~\ref{thm:main}. Thus, we only analyze the case where $\sumprod(x, y)$ contains exactly one binary \IDB $T$. We also assume that no \EDB atom contains both variables of $T$ (otherwise, we can ground the \IDB $T$ by $O(m)$ and reduce back to the \SumProd query case).

    Similar to Algorithm~\ref{grounded-acyclicprogram}, we construct a join tree $(\mT, \chi)$ rooted at a node $r$ containing the head variable $x$. If now the \IDB $T$ is a leaf node, then we simply follow Algorithm~\ref{grounded-acyclicprogram} to get a grounding of size $O(m \cdot n + n^2) = O(m \cdot n)$. Otherwise, let $\mT_t$ be the subtree rooted at $t$, where $\chi(t)$ corresponds to the \IDB atom $T$, and $H_t$ be the set of head variables that occur in $\mT_t$. If $y \in \chi(t)$, or if $y \notin H_t$, then we subsitute $\mT_t$ in $\sumprod(x, y)$ by a new \IDB $T'(\bx_{\chi(t)})$ and ground a new free-connex acyclic rule
    \begin{equation} \label{eq:proof:substitute}
        T'(\bx_{\chi(t)}) \obtainedfrom \bigoplus_{\bx_{[\ell] \setminus \chi(t)}} \bigotimes_{s \in V(\mT_t)} T_{\chi(s)}(\bx_{\chi(s)}).
    \end{equation}
    By Theorem~\ref{thm:main:acyclic:freeconnex}, we can construct for this rule a grounding of size $O(m + n^2)$ in time $O(m + n^2)$. Then, we reduce back to the previous case when we ground $\sumprod(x, y)$ since the only \IDB $T'$ becomes a leaf node. 

    The trickier case is when $y \in H_t \setminus \chi(t)$. First, we identify the joining variable between $T$ and its parent atom. Observe that there can not be more than one join variables because we assumed that no atom contains $\chi(t)$. There are two cases: $x \in \chi(T)$, then $x$ must be the join variable since $x \in \chi(r)$, i.e., the \IDB is $T(x, w)$; or $x \notin \chi(T)$ (hence $x \notin H_t$). Let the \IDB be $T(z, w)$, and w.l.o.g., let $z$ be the only join variable with its parent \EDB. 
    
    We only analyze the $T(z, w)$ case. The $T(x, w)$ case is identical. We find the node $t_y = \mathsf{TOP}_r(y)$ in $\mT_t$ and the path from $t$ to $t_y$. First, by the exact substitution as \eqref{eq:proof:substitute}, we can construct a grounding of size $O(m)$ in time $O(m)$ for $\mT_{t_y}$ (i.e., the subtree rooted at $t_y$) and replace it with a new \EDB $R'_{\chi(t_y)}(\bx_{\chi(t_y)})$. Then, for every edge $(p, q) \in E(\mT_t)$ along the path from $t$ to $t_y$, we substitute the \EDB $R_{\chi(p)}(\bx_{\chi(p)})$ corresponds to $p$ by a new \EDB $R'_{\chi(p)}(\bx_{\chi(p)})$ via the following free-connex acyclic \SumProd query
    \begin{align*}
        R'_{\chi(p)}(\bx_{\chi(p)}) & \obtainedfrom \bigoplus_{\bx_{[\ell] \setminus \chi(p)}} R_{\chi(p)}(\bx_{\chi(p)}) \otimes \bigotimes_{(p, q') \in E(\mT_t): q' \neq q} \bigotimes_{s \in V(\mT_{q'})} R_{\chi(s)}(\bx_{\chi(s)}).
    \end{align*}
    where the body has no \IDB. Thus, the rule has a grounding of size $O(m)$ obtained in $O(m)$ time. Such substitutions prune the subtree $\mT_t$ into a path from $t$ to $t_y$. Thus we introduce another \IDB $T'(z, y)$ and the linear rule whose body contains only atoms on the path (i.e. $T(z, w) \otimes \cdots \otimes R_{\chi(t_y)}(\bx_{\chi(t_y)})$). Using a join-project style rewritting, it is easy to see that  the rule for $T'(z, y)$ has a grounding of size $O(m \cdot n)$ that can be obtained in time $O(m \cdot n)$. Finally, we collapse the subtree $\mT_t$ again into a single atom $T'(z, y)$ at the leaf. Following Algorithm~\ref{grounded-acyclicprogram}, we get a grounding of size $O(m \cdot n + n^2) = O(m \cdot n)$.
\end{proof}

Next, we demonstrate the application of~\autoref{prop:linear} and how it leads to an improvement over directly applying Algorithm~\ref{grounded-acyclicprogram}.

\begin{example}  \label{ex:nonbinary}
	Consider the following linear rule (essentially a chain query) of a \textsf{Datalog} program with the only \IDB $T(x_2, x_3)$ in the body:
	\begin{align*}
		T(x_1, x_5) &\obtainedfrom \bigoplus_{\bx_{234}} R_{12}(x_1, x_2) \otimes T(x_2, x_3) \otimes R_{34}(x_3, x_4) \otimes R_{45}(x_4, x_5).
	\end{align*}
Suppose that we elect $R_{12}$ as the root bag for the join tree shown in~\autoref{fig:nonbinary}.
\begin{figure}[t]
	\centering
	\scalebox{0.95}{
		\begin{tikzpicture}[level distance=10em,
			every node/.style = {line width=1pt, shape=rectangle, text centered, anchor=center, rounded corners, draw, align=center}, grow=right];
			
			\node (V) {$R_{12}(x_1, x_2)$};
			\node (E) [draw=none, above of=V, node distance=1.2cm]{};
			\node (R) [below of=V, node distance=2cm] {$T(x_2,x_3)$};
			\node (T) [below of=R, node distance=2cm] {$R_{34}(x_3, x_4)$};			
			\node (U) [below of=T, node distance=2cm] {$R_{45}(x_4,x_5)$};			
			\draw[->] (E) edge [line width=1pt]  node [draw=none, midway, label=right:{$T(x_1, x_5) \obtainedfrom \bigoplus_{x_2}  R_{12}(x_1, x_2) \otimes T_{25}(x_2,x_5)$}] {} (V);
			\draw[->] (V) edge [line width=1pt] node [draw=none, midway, label=left:{$T_{25}(x_2, x_5)  \obtainedfrom \bigoplus_{x_3} T(x_2,x_3) \otimes T_{35}(x_3,x_5)$}] {} (R);
			\draw[->] (R) edge [line width=1pt] node [draw=none, midway, label=left:{$T_{35}(x_3,x_5)  \obtainedfrom \bigoplus_{x_4} R_{34}(x_3, x_4) \otimes T_{45}(x_4,x_5) $}] {} (T);
			\draw[->] (T) edge [line width=1pt] node [draw=none, midway, label=left:{$T_{45}(x_4,x_5)  \obtainedfrom R_{45}(x_4,x_5)$}] {} (U);
			
	\end{tikzpicture}}
	\caption{A join tree with its corresponding rewriting for~\autoref{ex:nonbinary} when using Algorithm~\ref{grounded-acyclicprogram}.}
\end{figure}
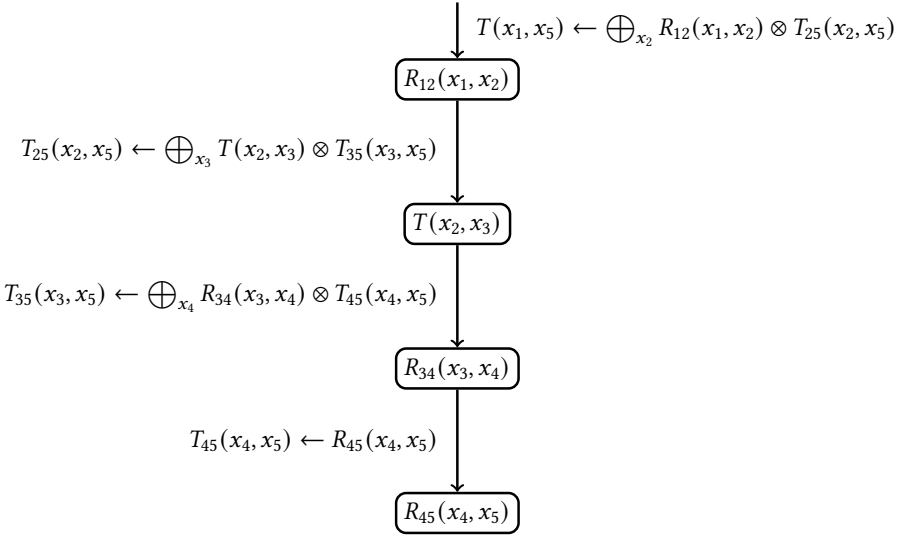 \label{fig:nonbinary}
Algorithm~\ref{grounded-acyclicprogram} running the join tree is suboptimal: the grounding of the rule $T_{25}(x_2, x_5)  \obtainedfrom \bigoplus_{x_3} T(x_2,x_3) \otimes T_{35}(x_3,x_5)$ requires $O(n^3)$ since every possible constant combination of $(x_2, x_3, x_5)$ in the active domain must be considered from $T(x_2,x_3) \otimes T_{35}(x_3,x_5)$.

In constrast, \autoref{prop:linear} avoids the situation where an intermediate rule contains more than one \IDB in the body. The algorithm runs in a top-down fashion first from the \IDB node and constructs the grounding for $T_{25}(x_2,x_5)$ via the following:
\begin{align*}
    T_{24}(x_2,x_4) & \obtainedfrom \bigoplus_{x_3} T(x_2,x_3) \otimes R_{34}(x_3,x_4) \\
    T_{25}(x_2,x_5) & \obtainedfrom \bigoplus_{x_4} T_{24}(x_2,x_4) \otimes R_{45}(x_4,x_5).
\end{align*}
Now both intermediate rules have a grounding of $O(m \cdot n)$.  Finally, we ground the rule $T(x_1, x_5) \obtainedfrom \bigoplus_{x_2} R_{12}(x_1, x_2) \otimes T_{25}(x_2,x_5)$ in $O(m \cdot n)$ time and size. Indeed, one can check that it is an $\bS$-equivalent rewriting of the program since
\begin{align*}
    \bigoplus_{x_2} R_{12}(x_1, x_2) \otimes T_{25}(x_2,x_5) & = \underbrace{\bigoplus_{x_2} R_{12}(x_1, x_2) \otimes \left( \bigoplus_{x_4} T_{24}(x_2,x_4) \otimes R_{45}(x_4,x_5) \right)}_{{\textit{(substitute $T_{25}(x_2,x_5)$)}}}   \\
    & = \underbrace{\bigoplus_{\bx_{24}} R_{12}(x_1, x_2) \otimes T_{24}(x_2,x_4) \otimes R_{45}(x_4,x_5)}_{\textit{(pull out the summation $\bigoplus_{x_4}$)}}  \\
    & = \underbrace{\bigoplus_{\bx_{24}} R_{12}(x_1, x_2) \otimes \left( \bigoplus_{x_3} T(x_2,x_3) \otimes R_{34}(x_3,x_4) \right) \otimes R_{45}(x_4,x_5)}_{\textit{(substitute $T_{24}(x_2,x_4)$)}}  \\
    & = \underbrace{\bigoplus_{\bx_{234}} R_{12}(x_1, x_2) \otimes T(x_2,x_3) \otimes R_{34}(x_3,x_4) \otimes R_{45}(x_4,x_5)}_{\textit{(pull out the summation $\bigoplus_{x_3}$)}}  \\
    & = \underbrace{T(x_1, x_5)}_{\textit{(by definition)}}  \\
\end{align*}
\end{example}

\section{Lower Bounds} \label{app:lowerbound}

In this section, we will prove the lower bounds stated in Theorem~\ref{thm:main_lower_bound}. Our lower bounds are based on the following proposition, which constructs a class of \Datalogo\ programs with the appropriate width that decides whether a undirected graph $G$ contains a clique.

\begin{proposition} \label{prop:clique:program}
Take an integer $k \geq 2$ and a rational number $w \geq 1$ such that $k \cdot w$ is an integer. 
There is a (non-recursive) \Datalogo program $P$ with arity $k$ such that $\subw(P) \leq w$, and the target Boolean \IDB predicate $Q()$ for $P$ decides whether a undirected graph $G$ has a clique of size $\ell = (k-1)+k \cdot w$.
\end{proposition}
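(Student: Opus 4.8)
\emph{Overall strategy.} Since $P$ must be non-recursive with a Boolean target, its unfolding is logically the $\ell$-clique query, so the real content is to exhibit a \emph{width-$w$, arity-$k$ factorization} of $K_\ell$. The obstruction is that $K_\ell$ is irreducibly cyclic (treewidth $\ell-1$): any single rule whose body contains all $\binom{\ell}{2}$ edges forces a bag on all $\ell$ variables, whose cheapest fractional cover by arity-$k$ atoms is $\ell/k = w+(k-1)/k > w$. The plan is therefore to (i) delete a carefully chosen edge set to push the width below $\ell/k$, and (ii) reinstate those edges with a cheap closing rule, using inert arity-$k$ ``guard'' atoms to absorb the remaining dense structure.

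\emph{Construction.} Take the \EDB\ $E$ (the graph) together with a unary $U$ holding the active domain, both available by acyclic rules. First introduce an arity-$k$ guard $A(x_1,\dots,x_k)\obtainedfrom U(x_1)\otimes\cdots\otimes U(x_k)$; over the Boolean (resp.\ tropical) semiring $A$ is identically $\mathbf{1}$, hence semantically inert, but it contributes size-$k$ hyperedges that lower the submodular width. Single out $k$ interface variables $v_1,\dots,v_k$ and let $O=\{o_1,\dots,o_{kw-1}\}$ be $kw-1$ further variables, so that $|\{v_i\}\cup O| = (k-1)+kw=\ell$. The main rule computes $K_\ell$ on these $\ell$ variables, \emph{minus} the $\binom{k}{2}$ edges among the $v_i$, guarded by $A$-atoms on the $k$-subsets of each $O\cup\{v_i\}$, and projects onto the interface:
\[
B(v_1,\dots,v_k)\obtainedfrom \bigoplus_{O}\ \Big(\bigotimes_{\{p,q\}\not\subseteq\{v_1,\dots,v_k\}} E(x_p,x_q)\Big)\otimes\Big(\bigotimes_{S\subseteq O\cup\{v_i\}} A(\bx_S)\Big).
\]
The closing rule reinstates the interface clique: $Q()\obtainedfrom \bigoplus_{v_1,\dots,v_k} B(v_1,\dots,v_k)\otimes \bigotimes_{i<j}E(v_i,v_j)$. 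The dependency order $A\to B\to Q$ is acyclic, so $P$ is non-recursive, and every \IDB\ has at most $k$ variables.

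\emph{Verification I claim to carry out.} For correctness I would check that, in a simple (loopless) graph, $B$ together with the closing rule asserts exactly that $\{v_1,\dots,v_k\}\cup O$ induces all $\binom{\ell}{2}$ edges, i.e.\ an $\ell$-clique, with distinctness of the $\ell$ vertices forced by the edge atoms themselves. For the width, I would exhibit the star tree decomposition of $B$'s body whose bags are $\chi_i=O\cup\{v_i\}$ for $i\in[k]$: it is valid and non-redundant because $O$ lies in every bag while each $v_i$ lies only in $\chi_i$, every retained edge and every guard atom is contained in some bag, and no bag is a subset of another. Each bag has $|O|+1=kw$ variables and is fractionally covered by its $A$-guards at cost $kw/k=w$, so by the standard polymatroid/fractional-cover bound $h(\chi_i)\le w$ for every polymatroid $h\in\Gamma_\ell$ with $h(\text{atom})\le 1$; taking the minimum over decompositions in \eqref{eq:subw} then yields $\subw(B)\le w$, whereas the guard rule and the closing rule are acyclic and thus have $\subw=1\le w$. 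Hence $\subw(P)\le w$.

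\emph{Main obstacle.} The crux is exactly this width verification, and in particular that the prescribed value $\ell=(k-1)+kw$ is \emph{forced} rather than incidental: it is the largest $\ell$ for which a bag $O\cup\{v_i\}$ of size $|O|+1=\ell-k+1$ satisfies $(\ell-k+1)/k\le w$, so the identity falls directly out of the per-bag cover constraint. I expect the two delicate points to be (a) proving the inequality $h(\chi_i)\le w$ uniformly over all polymatroids, via subadditivity along the uniform fractional cover of $\chi_i$ by the $k$-subset guards, and (b) arguing that the inert guards $A(\bx_S)$ neither alter the semantics (each evaluates to $\mathbf{1}$) nor invalidate the decomposition (each guard lies inside a single bag). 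Materializing $U$ and $A$ from the input stays within the arity and width budget since both are defined by acyclic rules. This also explains the intended downstream use: with $\subw(P)\le w$, Theorem~\ref{thm:main} grounds $P$ in size $\widetilde O(n^{k-1+kw})=\widetilde O(n^\ell)$, matching the clique-based lower bound.
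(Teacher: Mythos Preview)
Your proposal is correct and follows essentially the same route as the paper's own proof: split the $\ell$ clique variables into $k$ ``interface'' variables $v_1,\dots,v_k$ and a core $O$ of size $kw-1$, exhibit the star tree decomposition whose bags are $O\cup\{v_i\}$, cover each bag fractionally by size-$k$ atoms at cost $(kw)/k=w$, and close with an acyclic rule that reinstates the $\binom{k}{2}$ interface edges. The only cosmetic differences are that the paper packages the edge atoms into a materialized $k$-clique predicate $C$ (so the fat atoms themselves carry the edge constraints) whereas you keep raw edge atoms and add semantically inert $A$-guards, and that the paper additionally invokes chordality of the clique-graph to argue that \emph{every} non-redundant tree decomposition has exactly these bags (which is not needed for the $\subw(P)\le w$ inequality the proposition actually states).
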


\begin{proof}
 Let $R(x,y)$ be the binary predicate that represents the edges of $G$, and $V(x)$ be the unary predicate that represents the nodes of $G$. It will be helpful to distinguish the $\ell$ variables of the clique we want to compute into $k$ variables $x_1, \dots, x_k$ and the remaining $\ell-k$ variables $y_1, y_2, \dots, y_{\ell-k}$. Note that $\ell-k =k \cdot w -1\geq 2-1 =1 $, so there is at least one $y$-variable.
 
Initially, we compute all cliques of size $k$ in the graph:
	\begin{align*}
		T(z_1, \dots, z_k) & \obtainedfrom  V(z_1) \otimes V(z_2) \otimes \dots \otimes V(z_{k}).\\
		C(z_1, \dots, z_k) & \obtainedfrom T(z_1, \dots, z_k) \otimes \bigotimes_{1 \leq i < j \leq k} R(z_i, z_j).
	\end{align*}
	Both rules are acyclic (so their submodular width is 1) and the arity of the two intensional predicates we introduce is $k$.
	Next, we introduce the following rule:
	\begin{align*}
		S(x_1, \dots, x_k)  \obtainedfrom \bigoplus_{y_1, \ldots, y_{\ell-k}} U_1[x_1, y_1, y_2, \dots, y_{\ell-k}] \otimes  U_2[x_2, y_1, y_2, \dots, y_{\ell-k}] \otimes \dots \otimes  U_k[x_k, y_1, y_2, \dots, y_{\ell-k}].  
	\end{align*}
	where $U_i [x_i, y_1, y_2, \dots, y_{\ell-k}]$ stands for the product of the following sequence of atoms:
	\begin{align*}
	\left \{ C(z_1, \dots, z_k) \mid z_1, \dots, z_k \text{ are all possible sets of size $k$ from } \{x_i, y_1, y_2, \dots, y_{\ell-k}\} \right \} 
	\end{align*}
	Note that this is well-defined, since $\ell-k+1 = k \cdot w$ and $w \geq 1$.
	We will argue that the submodular width of this rule for $S(x_1, \dots, x_k)$ is $w$. The idea is that every tree decomposition has the exact same set of bags: $\{x_i, y_1, y_2, \dots, y_{\ell-k}\}$ for $i=1, \dots \ell$. This follows from \cite{CarmeliKKK21} and the fact that the clique-graph of the hypergraph is chordal (i.e. any cycle in the clique-graph of length greater than $3$ has a chord).
	
	Now, each bag has $\ell-k+1$ variables. Pick a set of $\ell-k+1$ hyperedges (of size $k$) such that each variable is covered exactly $k$ times (this selection is possible because every possible $k$-sized hyperedge is in the hypergraph). By assigning a weight of $1/k$ to any such hyperedge, we have a fractional edge cover of size $(\ell-k+1)/k =w$.
		
Finally, we need to check that each tuple in $S$ forms an $k$-clique, which can be done through the following rule:
	\begin{align*}
		Q() &  \obtainedfrom \bigoplus_{x_1, x_2, \ldots, x_{k}} S(x_1, \dots, x_k) \otimes  \bigotimes_{1 \leq i < j \leq k} R(x_i, x_j).  
	\end{align*}
Observe that the last rule is also acyclic, so its submodular width is 1.
\end{proof}

\subsection{Missing Proof for Part $(1)$ of Theorem~\ref{thm:main_lower_bound}}
\introparagraph{Provenance Polynomials} Consider a non-recursive \Datalogo\ program $P$ over a semiring $\bS$ with a single Boolean target $Q()$. Then, we can generate a provenance polynomial for $P$ by interpreting each \EDB\ fact $R(\ba)$ as a variable $x_{\ba}$. We call this the provenance polynomial of $P$. The non-recursive condition guarantees that the provenance polynomial is finite.

\introparagraph{Circuits over Semirings} We define a circuit $F$ over a semiring $\bS$ to be a Directed Acyclic Graph (DAG) with input nodes (with fan-in 0) variables $x_1, \dots, x_m$ and the constants $\mathbf{0}, \mathbf{1} \in \boldsymbol{D}$. Every other node is labelled by $\oplus$ or $\otimes$ and has fan-in 2; these nodes are called $\oplus$-gates and $\otimes$-gates respectively. An output gate of $F$ is any gate with fanout 0. The size of the circuit $F$, denoted $|F|$, is the number of gates in $F$. We say that $F$ {\em evaluates a polynomial} $f$ over the semiring $\bS$ if for any input values that the variables $x_1, \dots, x_m$ take over the domain $\boldsymbol{D}$ of the semiring, the output value of the circuit is the corresponding value of $f$.

\begin{proposition}\label{prop:grounding:circuit}
Let $G$ be a non-recursive grounded \Datalogo\ program over a semiring $\bS$ with a single Boolean target $Q()$. Then, there is a circuit $F$ over $\bS$ that evaluates the provenance polynomial of $G$ and $F$ has size $O(|G|)$. 
\end{proposition}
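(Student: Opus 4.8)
The plan is to realize each equation of the grounding as a single gate of the circuit, after first normalizing the grounding so that every equation is binary. Concretely, I would begin by applying \autoref{lem:2canonical} to transform $G$ into a $\bS$-equivalent $2$-canonical grounding $G'$ of size $O(|G|)$, in which every equation has the form $y = x \oplus z$ or $y = x \otimes z$. Since the original program is non-recursive, and the transformation of \autoref{lem:2canonical} only introduces fresh variables whose defining equations point ``downward'' toward already-existing variables, $G'$ remains non-recursive: the dependency relation in which $y$ depends on the right-hand-side variables of its defining equation is acyclic.

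Next I would build the circuit $F$ directly from $G'$. For every \EDB coefficient $e$ appearing in $G'$, I would create an input node (that is, a fresh variable $x_{\ba}$ of the provenance polynomial), and I would create the two constant input nodes $\mathbf{0}$ and $\mathbf{1}$. For every equation $y = x \circledast z$ of $G'$ (with $\circledast \in \{\oplus,\otimes\}$), I would create a $\circledast$-gate whose two in-edges come from the nodes already associated with $x$ and with $z$; this gate becomes the node associated with $y$. Equations of the form $y = \mathbf{0}$ (as in the $x_{aa} = \mathbf{0}$ case of \autoref{ex:polynomial}) are handled by identifying $y$ with the constant node $\mathbf{0}$. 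Because each variable of $G'$ is the left-hand side of exactly one equation, each \IDB node is produced exactly once, and because $G'$ is non-recursive, the resulting wiring is a DAG with fan-in $2$ at every gate, hence a legitimate circuit over $\bS$. The output gate is the node associated with the target $Q()$, whose fanout is $0$. The total number of gates equals the number of equations plus the number of inputs, both $O(|G|)$, so $|F| = O(|G|)$.

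Finally I would argue correctness, namely that $F$ evaluates the provenance polynomial of $G$. Fix an assignment of semiring values to the \EDB variables. Evaluating $F$ in topological order computes, at each \IDB node $y$, exactly the value prescribed by its defining equation applied to the already-computed values of $x$ and $z$. Since $G'$ is non-recursive, its least fixpoint is obtained by a single bottom-up substitution pass that assigns to each \IDB variable the value its equation dictates, and this coincides with the value $F$ computes at the corresponding node. In particular, the value at the output node $Q()$ equals the least-fixpoint value of $Q()$ in $G'$, which by \autoref{lem:2canonical} equals that in $G$, i.e. the provenance polynomial of $G$ evaluated at the given inputs. As this holds for every input assignment, $F$ evaluates the provenance polynomial.

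The main obstacle I anticipate is not the construction itself but the bookkeeping needed to justify the two structural claims that make it valid: first, that $2$-canonicalization preserves non-recursiveness, so that the wiring is acyclic and $F$ is genuinely a DAG; and second, that for a non-recursive grounding the least fixpoint agrees gate-for-gate with the single-pass substitution the circuit performs. Both are intuitively clear, but they are precisely where one must check that combining rules with the same head and introducing the auxiliary $2$-canonical variables does not secretly create a dependency cycle or alter the fixpoint value.
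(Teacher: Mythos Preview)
Your proposal is correct and essentially the same approach as the paper's own proof: both decompose each sum-of-products rule into a chain of binary $\otimes$-gates followed by a chain of binary $\oplus$-gates. The only cosmetic difference is packaging---you invoke \autoref{lem:2canonical} to do that decomposition up front and then map each binary equation to a gate, whereas the paper inlines the identical chain construction directly while building the circuit; the resulting circuit and the size bound are the same.
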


\begin{proof}
Given $G$, we will construct a semiring circuit $F$ as follows. Every \EDB\ fact $R_J(\ba_J)$ maps to a distinct input gate $z_{R_J(\ba_J)}$ of the circuit. For every $\IDB$ fact $S_J(\ba_J)$, we introduce an $\oplus$-gate $z_{S_J(\ba_J)}$. Take any rule $r$ with head $S_J(\ba_J)$ and body $T_{J_1}(\ba_{J_1}), \dots, T_{J_k}(\ba_{J_k})$. We then create a subcircuit of the form
 $$ (((z_{T_{J_1}(\ba_{J_1})} \otimes z_{T_{J_2}(\ba_{J_2})}) \otimes z_{T_{J_3}(\ba_{J_3})}) \dots )  $$
 and name the top gate of this subcircuit as $y_{S_J(\ba_J)}^r$. Let $r_1, r_2, \dots, r_s$ be the rules with the same head $S_J(\ba_J)$. Then, we make $z_{S_J(\ba_J)}$ be the top gate of the subcircuit
 $$ \left(y_{S_J(\ba_J)}^{r_1} \oplus y_{S_J(\ba_J)}^{r_2} \right) \oplus y_{S_J(\ba_J)}^{r_3} \dots $$
 The gate corresponding to $z_{q()}$ is the output gate. 
 
 We now bound the size of the circuit $F$. Note that for every rule with $k$ atoms in the body we introduce $(k-1)$ $\otimes$-gates and one $\oplus$-gate. This concludes the proof.  
\end{proof}

We will use the following proposition, which is an unconditional lower bound on the size of a semiring circuit that computes the $k$-clique.

\begin{proposition}[\cite{Jukna15}]\label{prop:clique:lb}
Any semiring circuit that evaluates the $k$-clique polynomial over $\mathsf{Trop}^{+}=\left(\mathbb{R}_{+} \cup\{\infty\}, \min ,+, \infty, 0\right)$ has size $\Omega(n^k)$.
\end{proposition}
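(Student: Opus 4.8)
The plan is to prove a matching lower bound on the size of any $(\min,+)$ circuit (i.e. a circuit over $\mathsf{Trop}^{+}$) that evaluates the $k$-clique polynomial
\[
 f_{n,k}(x) \;=\; \min_{Q}\ \sum_{\{i,j\}\in E(Q)} x_{ij},
\]
where $Q$ ranges over all $k$-cliques of the complete graph on $n$ vertices and $x_{ij}$ weights the edge $\{i,j\}$. As a tropical polynomial, $f_{n,k}$ is a tropical sum ($\oplus=\min$) of the $\binom{n}{k}=\Theta(n^{k})$ distinct, multilinear, degree-$\binom{k}{2}$ monomials $m_{Q}=\bigotimes_{e\in E(Q)} x_{e}$, one per clique. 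Listing the cliques yields a trivial circuit of size $O(n^{k})$, so the content of the statement is that gate reuse cannot beat this bound by more than a constant factor. First I would stress why this is not immediate: over $\mathsf{Trop}^{+}$ the additive operation $\oplus=\min$ is \emph{idempotent}, so the classical ``number of monomials'' lower bounds for monotone $(+,\times)$ circuits (Jerrum--Snir style) do not transfer, since a tropical circuit may aggressively share subcomputations across many cliques.

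The approach I would take is to reduce the statement to a general lower-bound principle for $(\min,+)$ circuits: if the monomial set of a tropical polynomial forms a \emph{Sidon set} — meaning all pairwise products $m_{Q}\otimes m_{Q'}$ (equivalently, all sums $\mathbf{1}_{E(Q)}+\mathbf{1}_{E(Q')}$ of edge-indicator vectors) are distinct — then any circuit computing it has size $\Omega(\#\text{monomials})$. The mechanism underlying this principle is a monomial-tracking argument combined with \emph{isolating inputs}: for a fixed clique $Q$, set $x_{e}=\mathbf{1}$ for $e\in E(Q)$ and $x_{e}=\mathbf{0}$ for $e\notin E(Q)$; since in $\mathsf{Trop}^{+}$ one has $\mathbf{1}=0$ and $\mathbf{0}=\infty$, the clique $Q$ attains total weight $0$ while every other $k$-clique, being forced to use an edge outside $E(Q)$, attains weight $\infty$, so $Q$ is the \emph{unique} minimizer and the optimal derivation must route exactly $E(Q)$ through the gates. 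I would then verify the Sidon property for the $k$-clique family directly: the edge-sets are an antichain (two distinct $k$-cliques have equal-size $\binom{k}{2}$-element edge-sets, so neither can contain the other), and from the combined edge-multiset $\mathbf{1}_{E(Q)}+\mathbf{1}_{E(Q')}$ one can uniquely reconstruct the pair $\{Q,Q'\}$ by recovering the two vertex sets from the support and the doubled coordinates.

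The decisive step, which I expect to be the main obstacle, is making the ``defeat of reuse'' rigorous — that is, proving the general Sidon principle so that idempotency of $\min$ genuinely cannot collapse the $\Theta(n^{k})$ cliques onto $o(n^{k})$ gates. Concretely, I would charge each clique to a gate via the isolating inputs and argue that the Sidon/rigidity property forces the split of a clique's edge-set at any $\otimes$-gate to be combinatorially rigid, so that a single contribution cannot be shared between two distinct cliques; this injective charging of the $\binom{n}{k}$ cliques to gates yields circuit size $\Omega\!\left(\binom{n}{k}\right)=\Omega(n^{k})$. The two technically delicate parts are (i) the unique-reconstruction lemma establishing the Sidon property, and (ii) the charging argument itself, which is precisely where tropical lower bounds are harder than their Boolean or monotone-arithmetic analogues, since the idempotent $\min$ permits reuse that the Sidon structure must be shown to neutralize.
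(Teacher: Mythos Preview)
The paper does not prove this proposition at all: it is quoted verbatim from \cite{Jukna15} and used purely as a black box in the proof of part~(1) of Theorem~\ref{thm:main_lower_bound}. There is therefore no in-paper argument to compare your proposal against.

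On the substance of your sketch: the Sidon property you isolate does hold for the $k$-clique monomials (from the edge multiset $\mathbf{1}_{E(Q)}+\mathbf{1}_{E(Q')}$ one reads off $V(Q)\cap V(Q')$ from the multiplicity-$2$ edges, and then recovers the partition of the remaining vertices from the connected components of the induced subgraph), so that part is fine. The genuine gap is precisely the one you already flag as ``the main obstacle'': you never actually carry out the charging argument, and the isolating-input idea by itself does not pin each clique to a distinct gate. Over a non-idempotent semiring the Sidon property does immediately yield the monomial-count lower bound (Schnorr, Jerrum--Snir), but over $\mathsf{Trop}^{+}$ idempotence allows a single $\otimes$-gate to feed many cliques that all survive to the output, and you have not shown why the Sidon condition rules this out. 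What is missing is exactly the structural lemma that converts a size-$s$ tropical circuit into a cover of the monomial set by $s$ ``rectangles'' and then bounds each rectangle using the clique/Sidon structure; your proposal names this step but does not supply it, so as written it is an outline of the known proof strategy rather than a proof.
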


We are now ready to prove the first part of Theorem~\ref{thm:main_lower_bound}.

\begin{proof}[Proof of Theorem~\ref{thm:main_lower_bound}, part (1)]
Consider the program $P$ from Proposition~\ref{prop:clique:program}. This program returns the minimum weight of a $k$-clique in a graph $G$, where $k = \ell-1 + \ell \cdot w$. Let $f_P$ be the provenance polynomial for $P$. From Proposition~\ref{prop:clique:lb}, we know that any semiring circuit that evaluates $f_P$ over the $\mathsf{Trop}^{+}$ has size $\Omega(n^k)$. Applying finally Proposition~\ref{prop:grounding:circuit}, we get that the size of any grounded program equivalent to $P$ over $\mathsf{Trop}^{+}$ must also be $\Omega(n^k)$.
\end{proof}

\subsection{Missing Proof for Part $(2)$ of Theorem~\ref{thm:main_lower_bound}}
\introparagraph{Conditional Lower Bounds} Using the construction of~\autoref{prop:clique:program}, we can obtain directly a conditional lower bound on the running time using problems from fine-grained complexity. In particular, we consider the {\em min-weight $\ell$-Clique hypothesis}~\cite{DBLP:conf/soda/LincolnWW18}: there is no algorithm that can find a $\ell$-Clique of minimum total edge weight in an $n$-node graph with nonnegative integer edge weights in time $O(n^{\ell-c})$ for some constant $c>0$. Observing that running the \Datalogo\ program of~\autoref{prop:clique:program} over the tropical semiring corresponds to solving the min weight $\ell$-Clique problem, we obtain the following proposition, which proves part (2) of Theorem~\ref{thm:main_lower_bound}.

\begin{proposition}
Take any integer $k \geq 2$ and any rational number $w \geq 1$ such that $k \cdot w$ is an integer. There is a (non-recursive, non-linear) \Datalogo\ program $P$ over the tropical semiring with \IDB and \EDB arities at most $k$ such that $\subw(P) = w$, such that no algorithm can compute it in time $O(n^{k w + k-1-o(1)})$ assuming the min-weight $k$-Clique hypothesis.
\end{proposition}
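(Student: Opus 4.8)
The plan is to reuse the program $P$ constructed in Proposition~\ref{prop:clique:program} essentially verbatim, now read over the tropical semiring $\mathsf{Trop}^{+}=(\mathbb{R}_{+}\cup\{\infty\},\min,+,\infty,0)$. Recall that $P$ has $\arity{P}=k$, satisfies $\subw(P)=w$, and its target $Q()$ decides, over the Boolean semiring, whether the input graph contains a clique of size $\ell=(k-1)+k\cdot w$. First I would fix the semiring interpretation: the binary \EDB\ $R$ is annotated with edge weights (non-edges carrying $\infty=\mathbf{0}$) and the unary \EDB\ $V$ is annotated with $0=\mathbf{1}$, so that $\otimes$ becomes addition of weights and $\oplus$ becomes minimization. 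Under this reading the provenance polynomial of $P$, which over $\mathbb{B}$ is the indicator of an $\ell$-clique, turns into a minimum over $\ell$-cliques of a sum of edge weights; the goal is to argue that this minimum is (a fixed rescaling of) the minimum-weight $\ell$-clique.

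Granting that correspondence, the conclusion is immediate. Given any instance of min-weight $\ell$-Clique on $n$ vertices, I would build the \EDB\ relations $R,V$ in $O(n^{2})$ time and run the hypothetical evaluation algorithm for $P$ over $\mathsf{Trop}^{+}$; reading off $Q()$ then solves min-weight $\ell$-Clique. Hence an algorithm evaluating $P$ in time $O(n^{\ell-\epsilon})$ for some fixed $\epsilon>0$ would solve min-weight $\ell$-Clique in time $O(n^{\ell-\epsilon})+O(n^{2})=O(n^{\ell-\epsilon})$ (using $\ell\geq 3$), contradicting the min-weight $\ell$-Clique hypothesis. Since $\ell=k\cdot w+k-1$, this rules out any $O(n^{k\cdot w+k-1-o(1)})$ algorithm, and the width and arity bounds $\subw(P)=w$, $\arity{P}\leq k$ are inherited directly from Proposition~\ref{prop:clique:program}.

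The main obstacle is making the phrase ``corresponds to solving min-weight $\ell$-Clique'' precise, because in Proposition~\ref{prop:clique:program} each factor $U_i$ ranges over \emph{all} $k$-subsets of $\{x_i,y_1,\dots,y_{\ell-k}\}$, so a given edge of a candidate clique is added (tropically) a number of times that depends on the roles of its endpoints---both among the $x$-variables, mixed, or both among the $y$-variables. These multiplicities are non-uniform, so the raw output is not simply proportional to the clique weight. To handle this I would reduce from the \emph{colored} (partite) version of min-weight $\ell$-Clique, which is equivalent to the uncolored version under the hypothesis: fixing each variable to its own part (encoded by allowing $R$ to be finite only between the corresponding parts) makes the multiplicity $m_{ij}$ of every part-pair a fixed positive integer depending only on $k$ and $w$. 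I would then pre-scale the input weight of each part-pair by $M/m_{ij}$, where $M=\mathrm{lcm}_{ij}\, m_{ij}$, so that the tropical evaluation returns exactly $M$ times the true minimum-weight rainbow clique, which is recovered by dividing by $M$. Verifying the exact values of these multiplicities (they are $1$ for $x$--$x$ pairs, $\binom{kw-2}{k-2}$ for $x$--$y$ pairs, and $k\binom{kw-2}{k-2}$ for $y$--$y$ pairs) and the colored/uncolored equivalence is the only delicate part of the argument.
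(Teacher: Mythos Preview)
Your approach is exactly the paper's: invoke the program from Proposition~\ref{prop:clique:program} over $\mathsf{Trop}^{+}$ and reduce from min-weight $\ell$-Clique with $\ell=(k-1)+k\cdot w$. The paper's own proof is literally a single sentence (``Observing that running the \Datalogo\ program of Proposition~\ref{prop:clique:program} over the tropical semiring corresponds to solving the min weight $\ell$-Clique problem''), so you have reproduced and expanded it.

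Where you go further is in flagging a genuine subtlety the paper elides: in the construction of Proposition~\ref{prop:clique:program}, the edge factors of a candidate $\ell$-clique appear with non-uniform multiplicities in the provenance monomial (your counts of $1$, $\binom{kw-2}{k-2}$, and $k\binom{kw-2}{k-2}$ for $x$--$x$, $x$--$y$, and $y$--$y$ pairs are correct), so under $(\min,+)$ the raw value of $Q()$ is not simply the minimum clique weight. Your fix---pass to the colored/partite version of min-weight $\ell$-Clique (equivalent under the hypothesis via standard color-coding) and pre-scale each color-pair by the appropriate $\mathrm{lcm}$ factor---is a clean way to close this gap. In that sense your argument is more careful than the paper's one-line proof, which simply asserts the correspondence without addressing the multiplicity issue.
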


\section{Missing Details from Section~\ref{sec:grounding}} \label{app:grounding}

In this part, we include:

\begin{enumerate}
    \item [(\ref{sec:PANDA})] An evaluation algorithm for \SumProd queries over a dioid using the \textsf{PANDA} algorithm, formally stated in Theorem~\ref{thm:generalizedPANDA}.
    \item [(\ref{app:contexample})] A continuation of Example~\ref{ex:diamond} from Section~\ref{sec:grounding}, showing that a na\"ive grounding strategy using \textsf{PANDA} may lead to a suboptimal grounding.
    \item [(\ref{app:proofsSec6})] Missing details of Theorem~\ref{thm:main} from Section~\ref{sec:grounding}: the general grounding algorithm (Algorithm~\ref{groundedprogram}) and correctness proof.
    \item [(\ref{app:miscellaneous})] Missing details from Section~\ref{sec:grounding}: Free-connexity, Linear Programs and Fractional Hypertree-width.
\end{enumerate}

\subsection{Sum-product query evaluation using the \textsf{PANDA} algorithm} \label{sec:PANDA}

The \textbf{Proof-Assisted Entropic Degree-Aware} (\textsf{PANDA}) algorithm~\cite{PANDA} is originally introduced to evaluate CQs, i.e. \SumProd queries over the Boolean semiring. This section demonstrates that the \textsf{PANDA} algorithm can be used to evaluate \SumProd queries over any dioid. First, we introduce the concept of assignment functions for tree decompositions.

\introparagraph{Assignment Function} An \textit{assignment function} $\nu: \mE \mapsto V(\mT)$ for a tree decomposition $(\mT, \chi)$ is a function that maps each hyperedge $J \in \mE$ into a node $t$ in the decomposition such that $J \subseteq \chi(t)$. 

\medskip

Let $T(\bx_H) \obtainedfrom \sumprod_1(\bx_H) \oplus \sumprod_2(\bx_H) \oplus \ldots$ be a \Datalogo rule over a dioid $\bS$, where $\sumprod(\bx_H) : \bigoplus_{\bx_{[\ell] \setminus H}} \bigotimes_{J \in \mathcal{E}} T_J(\mathbf{x}_J)$ is a \SumProd query in the body of the rule. The $i$-th \textsf{ICO} freezes $T_J(\mathbf{x}_J)$ and evaluates $\sumprod(\bx_H)$ as a non-recursive \SumProd query over the dioid $\bS$ to get the output $\sumprod^{(i+1)}(\bx_H) $, i.e.,
\begin{equation}\label{eq:faq:freeze}
    \sumprod^{(i+1)}(\bx_H) : \quad \bigoplus_{\bx_{[\ell] \setminus H}} \bigotimes_{J \in \mathcal{E}} T_J^{(i)}(\mathbf{x}_J).
\end{equation}
\textsf{PANDA} evaluates \eqref{eq:faq:freeze} when $\bS$ is the Boolean semiring in time $\widetilde{O}(m^{\subw} \cdot |\sumprod^{(i+1)}(\bx_H) |)$, assuming $m$ is the sum of sizes of the input predicates $T_J^{(i)}(\mathbf{x}_J)$ and $\sumprod^{(i+1)}(\bx_H)$ is the output (or derived) $\bS$-relation.

\medskip

Now we present the evaluation algorithm for \SumProd query evaluation over any \textit{dioid} $\bS$. The algorithm is as follows:
\begin{enumerate}
    \item [(1)] Take every non-redundant tree decomposition of the hypergraph $([\ell], \mE)$ associated with the \SumProd query $\sumprod(\bx_H)$
    $$ (\mT_1, \chi_1),  (\mT_2, \chi_2), \ldots, (\mT_M, \chi_M)
    $$
    We assume that there are $M$ of them, where $M$ is finite under data complexity, i.e. independent of the input instance (Proposition 2.9 in \cite{PANDA}).
    \item [(2)] Write a set of disjunctive rules (where the head is a disjunction of output relations)
    \begin{align}\label{eq:disjunctiveRule}
         \rho: \bigvee_{i \in [M]} B_{\chi_i(t_i)}(\bx_{\chi_i(t_i)}) \leftarrow \bigwedge_{J \in \mathcal{E}} B_J(\mathbf{x}_J) 
    \end{align}
     where $B_J(\bx_J)$ are $\mathbb{B}$-relations and a tuple $\ba_J \in B_J(\bx_J)$ if and only if $T_J(\ba_J) \neq \zerobf$; and each disjunctive rule $\rho$ picks one node $t_i \in V(\mT_i)$ from every $(\mT_i, \chi_i)$ to construct its head of $\rho$. Thus, there are $\Pi_{i \in [M]} |V(\mT_i)|$ such disjunctive rules. 
    \item [(3)] On each disjunctive rule $\rho$ that selected nodes $(t_1, t_2, \cdots, t_{M}) \in \times_{i \in [M]} V(\mathcal{T}_i)$, apply the $\mathsf{PANDA}$ algorithm to get a materialized view $B^*_{\chi_i(t_i)}(\bx_{\chi_i(t_i)})$ for every ${i \in [M]}$. It is shown by~\cite{PANDA} that \textsf{PANDA} evaluates any disjunctive rule in \eqref{eq:disjunctiveRule} in $\widetilde{O}(m^{\subw})$ time and every output view $B^*_{\chi_i(t_i)}$ is of size $\widetilde{O}(m^{\subw})$. The reader may refer to~\cite{PANDA} for the formal evaluation semantics of a disjunctive rule.
    \item [(4)] For each node $t \in V(\mT_i)$ in each $(\mT_i, \chi_i)$, evaluate the \SumProd query
    $$ \overline{T_{\chi_i(t)}}(\bx_{\chi_i(t)}) \obtainedfrom \overline{B_{\chi_i(t)}}(\bx_{\chi_i(t)}) \otimes  \bigotimes_{J \in \mathcal{E} : \nu_i(J) = t} T_J(\bx_J),
    $$
    where the input $\bS$-relation $\overline{B_{\chi_i(t)}}(\bx_{\chi_i(t)})$ stores a tuple $\bx_{\chi_i(t)}$ as value $\onebf$ if the tuple is in the union over all $B^*_{\chi_i(t)}(\bx_{\chi_i(t)})$ emitted by (possibly multiple) \textsf{PANDA} invocations in step (3) across all disjunctive rules. Indeed, observe that each node $t \in V(\mT_i)$ participates in $\Pi_{j \in [M] \setminus i} |V(\mT_j)|$ disjunctive rules. Here, let $\nu_i : \mathcal{E} \rightarrow V(\mT_i)$ be an arbitary assignment function for $(\mT_i, \chi_i)$ that assigns each $J \in \mathcal{E}$ to a node $t \in V(\mT_i)$ such that $J \subseteq \chi_i(t)$. Such a $\nu_i$ is always possible by the definition of a tree decomposition.
    \item [(5)] For every $(\mT_i, \chi_i)$, evaluate the acyclic \SumProd query 
    \begin{equation} \label{eq:intermediateFAQ}
         \sumprod_i(\bx_H) \obtainedfrom \bigoplus_{\bx_{[\ell] \setminus H}} \bigotimes_{t \in V(\mT_i)} \overline{T_{\chi_i(t)}}(\bx_{\chi_i(t_i)}) 
    \end{equation}
    using the \textsf{InsideOut} algorithm~\cite{faq} in time $\widetilde{O}(|\overline{T_{\chi_i(t)}}(\bx_{\chi_i(t_i)})| \cdot |\sumprod_i(\bx_H)|)$, where $|\overline{T_{\chi_i(t)}}(\bx_{\chi_i(t_i)})|$ is the size of input relation.
    \item [(5)] Get the final query result $\sumprod(\bx_H)$ by $\bigoplus_{i \in [M]} \sumprod_i(\bx_H)$.
\end{enumerate}

The correctness of the algorithm over the Boolean semiring and when $H = [\ell]$ or $\emptyset$ is established in Corollary 7.13 of~\cite{PANDA}. We lift the corollary to a dioid, and over arbitrary $\emptyset \subseteq H \subseteq [\ell]$.

\begin{lemma} \label{lemma:generalizedPANDA}
    $\sumprod(\bx_H) = \bigoplus_{i \in [M]} \sumprod_i(\bx_H)$, where $\sumprod(\bx_H) $ is the target \SumProd query \eqref{eq:faq:freeze} and $\sumprod_i(\bx_H)$ is an acyclic $query$ \eqref{eq:intermediateFAQ} from the $i$-th tree decomposition $(\mT_i, \chi_i)$.
\end{lemma}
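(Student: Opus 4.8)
The plan is to expand both sides of the claimed identity into sums of \emph{monomials} indexed by full tuples $\ba_{[\ell]}$ over the active domain, and then reconcile the two expansions using the Boolean correctness of \textsf{PANDA} together with the idempotence of $\oplus$. For a full tuple $\ba_{[\ell]}$, write $\mathrm{mon}(\ba_{[\ell]}) \defeq \bigotimes_{J \in \mE} T_J(\ba_J)$ for its monomial. First I would rewrite the target query as $\sumprod(\ba_H) = \bigoplus_{\ba_{[\ell] \setminus H}} \mathrm{mon}(\ba_{[\ell]})$ and observe that, by the annihilation axiom $a \otimes \zerobf = \zerobf$, any tuple with nonzero monomial must satisfy $T_J(\ba_J) \neq \zerobf$ for all $J$; equivalently $\ba_{[\ell]}$ lies in the Boolean join $\bowtie_{J \in \mE} B_J$. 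Hence the sum may be restricted to feasible tuples without changing its value.

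Next I would perform the same expansion for each $\sumprod_i$. Since $\nu_i$ is an assignment function, every hyperedge $J$ is sent to exactly one bag, so in the product $\bigotimes_{t \in V(\mT_i)} \overline{T_{\chi_i(t)}}(\ba_{\chi_i(t)})$ each factor $T_J$ occurs exactly once; by commutativity and associativity of $\otimes$ this product equals $\left(\bigotimes_{t} \overline{B_{\chi_i(t)}}(\ba_{\chi_i(t)})\right) \otimes \mathrm{mon}(\ba_{[\ell]})$. Because each $\overline{B_{\chi_i(t)}}$ is a $\{\zerobf, \onebf\}$-valued relation, the bracketed factor equals $\onebf$ precisely when $\ba_{[\ell]}$ lies in the acyclic join $\bowtie_{t \in V(\mT_i)} \overline{B_{\chi_i(t)}}$ of the materialized views, and $\zerobf$ otherwise. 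Therefore $\sumprod_i(\ba_H)$ is exactly the restriction of the same monomial sum to tuples feasible for the $i$-th decomposition.

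The crux is then to combine over $i$. Summing the per-decomposition expansions gives $\bigoplus_{i \in [M]} \sumprod_i(\ba_H) = \bigoplus_{\ba_{[\ell] \setminus H}} \bigoplus_{i \,:\, \ba_{[\ell]} \in \bowtie_t \overline{B_{\chi_i(t)}}} \mathrm{mon}(\ba_{[\ell]})$. Here I would invoke the Boolean correctness of \textsf{PANDA} (Corollary 7.13 of \cite{PANDA}), which guarantees that the union of the acyclic joins over all decompositions reconstructs the full feasible join, i.e. $\bigcup_{i \in [M]} \bowtie_{t \in V(\mT_i)} \overline{B_{\chi_i(t)}} = \bowtie_{J \in \mE} B_J$ (soundness gives the inclusion $\subseteq$ since each view tuple is a valid projection, completeness the inclusion $\supseteq$). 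Consequently, a feasible full tuple contributes its monomial to at least one $\sumprod_i$, and possibly to several.

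The main obstacle I anticipate is precisely this multiplicity: a single tuple $\ba_{[\ell]}$ can belong to the feasible set of several decompositions, so its monomial would be added more than once in $\bigoplus_i \sumprod_i$. This is exactly where the dioid hypothesis is essential --- idempotence of $\oplus$ (i.e. $a \oplus a = a$) collapses these repeated contributions, so that summing over all $i$ reproduces each feasible monomial exactly once. Combining the idempotent collapse with the \textsf{PANDA} covering identity yields $\bigoplus_i \sumprod_i(\ba_H) = \bigoplus_{\ba_{[\ell] \setminus H},\, \ba_{[\ell]} \in \bowtie_J B_J} \mathrm{mon}(\ba_{[\ell]}) = \sumprod(\ba_H)$, which is the desired equality. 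Finally I would verify two bookkeeping points: that the monomial value is independent of the chosen decomposition and assignment function (guaranteed by commutativity of $\otimes$ and the fact that $\nu_i$ consumes each $T_J$ once), and that all sums are finite (as $M$ is finite under data complexity and the active domain is finite), so no $\omega$-continuity argument is required and the algebraic manipulations are purely finitary.
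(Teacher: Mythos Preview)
Your proposal is correct and follows essentially the same approach as the paper: both arguments fix a full tuple $\ba_{[\ell]}$, observe that the bag product factors as $\bigl(\bigotimes_t \overline{B_{\chi_i(t)}}(\ba_{\chi_i(t)})\bigr)\otimes \mathrm{mon}(\ba_{[\ell]})$ with the first factor in $\{\zerobf,\onebf\}$, invoke Corollary~7.13 of~\cite{PANDA} for Boolean coverage, and use idempotence of $\oplus$ to collapse the per-decomposition sums before exchanging the order of summation. One small remark: your parenthetical justification of soundness (``each view tuple is a valid projection'') is not quite what \textsf{PANDA} guarantees, but you do not actually need it---any full tuple outside $\bowtie_{J}B_J$ already has $\mathrm{mon}(\ba_{[\ell]})=\zerobf$ by annihilation, so only the completeness direction of Corollary~7.13 is genuinely required.
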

\begin{proof}
    By Corollary 7.13 of~\cite{PANDA} (the correctness over the Boolean semiring), for any constant full tuple $\ba_{[\ell]}$, $\sumprod(\ba_{[\ell]}) := \bigotimes_{J \in \mathcal{E}} T_J(\ba_J) \neq \zerobf$ if and only if there is at least one decomposition $(\mT_i, \chi_i)$, where $i \in [M]$, such that $\sumprod_i(\ba_{[\ell]}) := \bigotimes_{t \in V(\mT_i)} \overline{T_{\chi_i(t)}}(\ba_{\chi_i(t)})  \neq \zerobf$. Here, we use $\ba_J$ as a shorthand for the projection of $\ba_{[\ell]}$ on the variables in $\bx_J$. In fact, for any such $i$  where $\phi_i(\ba_{[\ell]}) := \bigotimes_{t \in V(\mT_i)} \overline{T_{\chi_i(t)}}(\ba_{\chi_i(t)}) \neq \zerobf$ (possibly many of them), 
    \begin{align*}
        \phi_i(\ba_{[\ell]})  & = \bigotimes_{t \in V(\mT_i)} \overline{T_{\chi_i(t)}}(\ba_{\chi_i(t)})  \\
                        & = \bigotimes_{t \in V(\mT_i)} \overline{B_{\chi_i(t)}}(\ba_{\chi_i(t)}) \otimes  \bigotimes_{J \in \mathcal{E} : \nu_i(J) = t} T_J(\ba_J) \\
                        & = \bigotimes_{t \in V(\mT_i)} \onebf  \otimes  \bigotimes_{J \in \mathcal{E} : \nu_i(J) = t} T_J(\ba_J) \\
                        & = \bigotimes_{J \in \mathcal{E}} T_J(\ba_J) 
    \end{align*}
    where the third line uses the fact that $\overline{B_{\chi_i(t)}}(\ba_{\chi_i(t)}) = \onebf$ because otherwise, $\phi_i(\ba_{[\ell]}) = \zerobf$ (a contradiction). Thus, by idempotence of $\oplus$, $\bigoplus_{i \in [M]}  \phi_i(\ba_{[\ell]}) =  \bigotimes_{J \in \mathcal{E}} T_J(\ba_J) $ for any $\ba_{[\ell]}$. Now for an arbitary tuple $\ba_H$,
    \begin{align*}
        \bigoplus_{i \in [M]} \sumprod_i(\ba_H) & = \bigoplus_{i \in [M]} \bigoplus_{\bx_{[\ell] \setminus H}} \phi_i(\ba_{H}, \bx_{[\ell] \setminus H}) \\
                                         & = \bigoplus_{\bx_{[\ell] \setminus H}}  \bigoplus_{i \in [M]}  \phi_i(\ba_{H}, \bx_{[\ell] \setminus H}) & \textit{(change aggregation order)} \\
                                         & = \bigoplus_{\bx_{[\ell] \setminus H}} \left( \bigotimes_{J \in \mathcal{E}} T_J(\Pi_J(\ba_{H}, \bx_{[\ell] \setminus H})) \right)  & \textit{(by idempotence of $\oplus$)} \\ 
                                         & = \sumprod(\ba_{ H})
    \end{align*}
    where $(\ba_{H}, \bx_{[\ell] \setminus H})$ denotes an arbitary (full) tuple over the active domain of $\bx_{[\ell]}$ where its projection on $\bx_H$ agrees with $\ba_H$. 
\end{proof}

As a result of the Lemma~\ref{lemma:generalizedPANDA} and Theorem 1.7 of~\cite{PANDA}, we have the following theorem.
\begin{theorem} \label{thm:generalizedPANDA}
    The above evaluation algorithm evaluates a \SumProd query over a dioid $\bS$ in time $\widetilde{O}(m^{\subw} \cdot |\sumprod(\bx_H)|)$, where $\subw$ is the submodular width of the query,  $m$ is the sum of sizes of the input $\bS$-relations $T_J(\mathbf{x}_J)$ and $\sumprod(\bx_H)$ is the output $\bS$-relation.
\end{theorem}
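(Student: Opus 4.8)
The plan is to separate the correctness of the algorithm from its running-time accounting, since correctness is already settled: Lemma~\ref{lemma:generalizedPANDA} guarantees that the value returned in the final step, $\bigoplus_{i \in [M]} \sumprod_i(\bx_H)$, equals the target $\sumprod(\bx_H)$ on every tuple $\ba_H$ over the active domain. It therefore only remains to bound the total time of steps~(1)--(6) by $\widetilde{O}(m^{\subw} \cdot |\sumprod(\bx_H)|)$. As a first observation, under data complexity every quantity depending only on the query shape is constant: by Proposition~2.9 of~\cite{PANDA} the number $M$ of non-redundant tree decompositions is finite and instance-independent, so the number of disjunctive rules $\prod_{i \in [M]} |V(\mT_i)|$ is constant as well. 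Hence steps~(1) and~(2), which merely enumerate these structures, cost $O(1)$.

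Next I would charge step~(3) directly to the PANDA guarantee (Theorem~1.7 of~\cite{PANDA}): a single disjunctive rule~\eqref{eq:disjunctiveRule} is evaluated in time $\widetilde{O}(m^{\subw})$ and each emitted view $B^*_{\chi_i(t_i)}$ has size $\widetilde{O}(m^{\subw})$. Since there are only constantly many disjunctive rules, the total work of step~(3) and the size of every Boolean view is $\widetilde{O}(m^{\subw})$. In step~(4) each node $t \in V(\mT_i)$ produces $\overline{T_{\chi_i(t)}}$ by multiplying in $\bS$ the Boolean filter $\overline{B_{\chi_i(t)}}$ (the union over the PANDA outputs landing on the bag $\chi_i(t)$, of size $\widetilde{O}(m^{\subw})$) with the constantly many input relations $T_J$ assigned to $t$ by $\nu_i$. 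Because $\overline{B_{\chi_i(t)}}$ already has size $\widetilde{O}(m^{\subw})$ and acts as a semijoin filtering the product onto that support, each such computation costs $\widetilde{O}(m^{\subw})$ and yields a bag relation $\overline{T_{\chi_i(t)}}$ of size $\widetilde{O}(m^{\subw})$.

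Step~(5) is where the output factor enters, and I expect it to be the crux of the argument. Each $\sumprod_i(\bx_H)$ in~\eqref{eq:intermediateFAQ} is an \emph{acyclic} \SumProd query whose input relations are the bag relations $\overline{T_{\chi_i(t)}}$ of size $\widetilde{O}(m^{\subw})$, so running \textsf{InsideOut}~\cite{faq} costs $\widetilde{O}(\text{input size} \cdot \text{output size}) = \widetilde{O}(m^{\subw} \cdot |\sumprod_i(\bx_H)|)$. The delicate point is to replace $|\sumprod_i(\bx_H)|$ by the final output size $|\sumprod(\bx_H)|$: this holds because in a dioid the natural order gives $\sumprod_i \sqsubseteq \sumprod = \bigoplus_{j} \sumprod_j$, so every tuple with a nonzero annotation in $\sumprod_i$ is nonzero in $\sumprod$, i.e. the support of $\sumprod_i$ is contained in that of $\sumprod$. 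Thus each of the $M$ calls runs in $\widetilde{O}(m^{\subw} \cdot |\sumprod(\bx_H)|)$.

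Finally, step~(6) sums $M$ partial results each of support size at most $|\sumprod(\bx_H)|$, costing $\widetilde{O}(|\sumprod(\bx_H)|)$. Adding the phases and absorbing the constant $M$ into $\widetilde{O}(\cdot)$ yields the claimed bound. The main obstacle is precisely the output-size bound in step~(5): one must confirm both that \textsf{InsideOut} on an acyclic instance genuinely costs input-times-output (with no hidden extra width factor) and that restricting to the support of $\sumprod$ is legitimate over an arbitrary dioid, for which the monotonicity of the natural order together with the idempotence of $\oplus$ are the essential ingredients.
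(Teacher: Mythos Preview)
Your proposal is correct and follows the same approach as the paper: the paper's own justification is simply the one sentence ``As a result of Lemma~\ref{lemma:generalizedPANDA} and Theorem~1.7 of~\cite{PANDA}, we have the following theorem,'' so you are relying on the same two ingredients (the lemma for correctness, PANDA's main theorem for the $\widetilde{O}(m^{\subw})$ bound on each disjunctive rule). Your write-up actually supplies more detail than the paper does, in particular the support-containment argument $\sumprod_i \sqsubseteq \sumprod$ to bound $|\sumprod_i(\bx_H)|$ by $|\sumprod(\bx_H)|$ in step~(5), which the paper leaves implicit.
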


\smallskip

\subsection{Continuing Example~\ref{ex:diamond}} \label{app:contexample}
\begin{example} \label{app:ex:diamond}
    In Example~\ref{ex:diamond}, we ground the following \Datalogo program $P$ that finds nodes that are reachable via a dimond-pattern from some node in a set $U$:
    \begin{align*}
     T(x_1) &\obtainedfrom U(x_1) \oplus  \bigoplus_{\bx_{234}} T(x_3) \otimes R_{32}(x_3, x_2) \otimes R_{21}(x_2, x_1) \otimes R_{34}(x_3, x_4) \otimes R_{41}(x_4, x_1) 
    \end{align*}
    A simple grounding is to ground the $4$-cycle join first, materializing the opposite nodes of the cycle, i.e.
    \begin{align*}
        \textsf{Cycle}(x_1, x_3) & \obtainedfrom \bigoplus_{\bx_{24}} R_{32}(x_3, x_2) \otimes R_{21}(x_2, x_1) \otimes R_{34}(x_3, x_4) \otimes R_{41}(x_4, x_1) \\
        T(x_1) &\obtainedfrom U(x_1) \oplus  \bigoplus_{\bx_{3}} T(x_3) \otimes \textsf{Cycle}(x_1, x_3) 
       \end{align*}
    \textsf{PANDA} evaluates the first rule (essentially a non-recursive \SumProd query) in time $\widetilde{O}(m^{3/2} + n^2)$, since $\textsf{Cycle}(x_1, x_3)$ has a grounding (or output) of size $O(n^2)$.  
\end{example}

\subsection{Missing Details of Theorem~\ref{thm:main} from Section~\ref{sec:grounding}: Formal Algorithm and Proof} \label{app:proofsSec6}

Algorithm~\ref{groundedprogram} is a rule-by-rule construction of the $\bS$-equivalent grounding $G$ from $P$ and it leads to Theorem~\ref{thm:main}. Its formal proof is shown as follows.

 \begin{algorithm}[!t]
	\SetKwInOut{Input}{Input}
	\SetKwInOut{Output}{Output}
	\SetKwFunction{ground}{grounds}
	\SetKwFunction{rewrite}{rewrite}
	\SetKwFunction{genprogram}{GenerateProgram}
	\SetKwProg{myproc}{Procedure}{}{}
	\Input{Datalog Program $P$ and dioid $\sigma$}
	\Output{$\bS$-equivalent Grounding $G$}
        \textbf{let} $G = \emptyset$ \\
        \ForEach{rule $T(\bx_H) \obtainedfrom \sumprod_1(\bx_H) \oplus \sumprod_2(\bx_H) \ldots \in P$} {
            \ForEach{\SumProd query $\sumprod(\bx_H)$ in the rule}{ 
                \textcolor{gray}{// \textit{assume} $\sumprod(\bx_H) =  \bigoplus_{\bx_{[\ell] \setminus H}} \bigotimes_{J \in \mathcal{E}} T_J(\bx_J)$} \\
                \textcolor{gray}{// $(i)$ \textit{Refactoring over multiple tree decompositions}} \\

                \textbf{construct} the set of all non-redundant tree decompositions of $\sumprod(\bx_H)$, i.e. $(\mT_1, \chi_1),  (\mT_2, \chi_2), \ldots, (\mT_M, \chi_M)$; \\
                \textbf{associate} each $(\mT_i, \chi_i)$ with an assignment function $\nu_i : \mathcal{E} \rightarrow V(\mT_i)$; \\

                \textcolor{gray}{// $(ii)$ \textit{Grounding all bags/IDBs at once using \textsf{PANDA}} (see Appendix~\ref{sec:PANDA} and~\cite{PANDA} for details of the \textsf{PANDA} algorithm)} \\
                \textbf{introduce} a new \IDB $B_{\chi_i(t)}(\bx_{\chi_i(t)})$ for every $t \in V(\mT_i)$ and $i \in [M]$; \\
                \textbf{apply} \textsf{PANDA} (more precisely step (2), (3) and (4) as illustrated in Appendix~\ref{sec:PANDA}) to get a view $\overline{B_{\chi_i(t)}}(\bx_{\chi_i(t)})$ for every $t \in V(\mT_i)$ and $i \in [M]$; \\
                \ForEach{tuple $\ba_{\chi_i(t)} \in \overline{B_{\chi_i(t)}}(\bx_{\chi_i(t)})$}{ 
                    \textbf{insert} into $G$ the grounded rule $B_{\chi_i(t)}(\ba_{\chi_i(t)}) \obtainedfrom \bigotimes_{J \in \mathcal{E} : \nu_i(J) = t} T_J(\Pi_{J}(\ba_{\chi_i(t)}))$; \\
                }
                \textcolor{gray}{// $(iii)$ \textit{Grounding the acyclic sub-queries, one for each decomposition}} \\
                \textbf{apply} Algorithm~\ref{grounded-acyclicprogram} to ground the acyclic rule 
                $T(\bx_H) \obtainedfrom \bigoplus_{\bx_{i \in [M]}} \left( \bigoplus_{\bx_{[\ell] \setminus H}} \bigotimes_{t \in V(\mT_i)} B_{\chi_i(t)}\right)$, \\
                using $\{B^*_{\chi_i(t)}(\bx_{\chi_i(t_i)}) \mid t \in V(\mT_i)\}$ as the input instance; \\
            }
        }
	\caption{Ground a \Datalogo Program over a dioid $\bS$} \label{groundedprogram}
\end{algorithm}

\begin{proof}[Proof of Theorem~\ref{thm:main}]
    The key idea of Algorithm~\ref{groundedprogram} is to group atoms in a (possibly cyclic) \SumProd query $\sumprod(\bx_H)$ into an acyclic one using tree decompositions, then ground the acyclic \SumProd query using Algorithm~\ref{grounded-acyclicprogram}. Suppose that $\sumprod(\bx_H)$ has a tree decomposition $(\mT, \chi)$ with an assignment function $\nu : \mathcal{E} \rightarrow V(\mT)$. Then, we can rewrite $\sumprod(\bx_H)$ as
    \begin{align*}
        \sumprod(\bx_H) & = \bigoplus_{\bx_{[\ell] \setminus H}} \bigotimes_{J \in \mathcal{E}} T_J(\bx_J)  = \bigoplus_{\bx_{[\ell] \setminus H}} \bigotimes_{t \in V(\mT)} \bigotimes_{J \in \mE: \nu(J) = t} T_{J}(\bx_{J}) = \bigoplus_{\bx_{[\ell] \setminus H}} \bigotimes_{t \in V(\mT)} B_{\chi(t)}(\bx_{\chi(t)})
    \end{align*}
    where we substitute $B_{\chi(t)}(\bx_{\chi(t)})$ by $\bigotimes_{J \in \mE: \nu(J) = t} T_{J}(\bx_{J})$. The resulting \SumProd query is acyclic because $\mT$ is its join tree.  If using multiple (say $M$) tree decompositions as in Algorithm~\ref{groundedprogram}, we can rewrite $\sumprod(\bx_H)$ as
    \begin{align*}
        \sumprod(\bx_H) & = \bigoplus_{\bx_{[\ell] \setminus H}} \bigotimes_{J \in \mathcal{E}} T_J(\bx_J) \\
        & = \bigoplus_{i \in [M]} \bigoplus_{\bx_{[\ell] \setminus H}} \bigotimes_{J \in \mathcal{E}} T_J(\bx_J) \\
        & = \bigoplus_{i \in [M]} \bigoplus_{\bx_{[\ell] \setminus H}} \bigotimes_{t \in V(\mT_i)} B_{\chi_i(t)}(\bx_{\chi_i(t)}) 
    \end{align*}
    where the second equality uses the idempotency of $\oplus$. To ground the new \IDB $B_{\chi_i(t)}(\bx_{\chi_i(t)})$, we apply the \textsf{PANDA} algorithm over the dioid $\bS$ to get a view $B^*_{\chi_i(t)}(\bx_{\chi_i(t)})$, for every $t \in V(\mT_i)$ and $i \in [M]$. By Lemma~\ref{lemma:generalizedPANDA} in Appendix~\ref{sec:PANDA}, for any tuple $\bx_H$, it always holds that
    $$
    \sumprod(\bx_H) = \bigoplus_{i \in [M]} \bigoplus_{\bx_{[\ell] \setminus H}}  \bigotimes_{t \in V(\mT_i)} B^*_{\chi_i(t)}(\bx_{\chi_i(t)}).
    $$ 
    Thus, grounding each bag $B_{\chi_i(t)}(\bx_{\chi_i(t)})$ by its corresponding view $B^*_{\chi_i(t)}(\bx_{\chi_i(t)})$ suffices, i.e. we insert into $G$ a rule:
    $$
        B_{\chi_i(t)}(\ba_{\chi_i(t)}) \obtainedfrom \bigotimes_{J \in \mathcal{E} : \nu_i(J) = t} T_J(\Pi_{J}(\ba_{\chi_i(t)}))
    $$
    for every $\ba_{\chi_i(t)} \in B^*_{\chi_i(t)}(\bx_{\chi_i(t)})$. Finally, to ground $T(\bx_H)$, we simply call Algorithm~\ref{grounded-acyclicprogram} and since the size of $B^*_{\chi_i(t)}(\bx_{\chi_i(t)})$ is $\widetilde{O}(m^\subw  + n^{k \cdot \subw})$, by Theorem~\ref{thm:main:acyclic}, we can construct a grounding for $T(\bx_H)$ in time $\widetilde{O}(n^{k-1} \cdot (m^\subw  + n^{k \cdot \subw}))$ of size $\widetilde{O}(n^{k-1} \cdot (m^\subw  + n^{k \cdot \subw}))$.
\end{proof}

\subsection{Missing details from Section~\ref{sec:grounding}: Free-connexity, Linear Programs and Fractional Hypertree-width.} \label{app:miscellaneous}


\begin{proof}[Proof of Proposition~\ref{thm:main:free-connex}]
    The proof follows directly from that of Theorem~\ref{thm:main} and Theorem~\ref{thm:main:acyclic:freeconnex}, by constraining Algorithm~\ref{groundedprogram} to only use non-redundant free-connex tree decompositions.
\end{proof}

\smallskip

\introparagraph{Degree Constraints} 
In fact, beyond the input size $m$, the evaluation algorithm presented in Appendix~\ref{sec:PANDA} can incorporate a broader class of constraints called \textit{degree constraints}~\cite{PANDA} to evaluate a \SumProd query over a dioid more efficiently.
 A degree constraint~\cite{PANDA} is an assertion on an atom $T_J(\bx_J)$ and $V \subset V \subseteq J$ of the form 
\begin{align} \label{eq:degreeConstraint}
    \textsf{deg}(\bx_U \mid \bx_V) = \max_{\ba_V \in \mathsf{Dom}(\bx_V)} | \Pi_{U} T(\bx_{J \setminus V}, \ba_V ) | \leq m_{U \mid V},
\end{align}
where $\Pi_{U} T(\bx_{J \setminus V}, \ba_V )$ is the relation obtained by first selecting tuples that agree with the tuple $\ba_V$ on $\bx_V$ and then projecting on $\bx_U$ and $m_{U \mid V}$ is a natural number. A cardinality constraint in our setting is therefore a degree constraint with $m_{J \mid \emptyset} = m$. Similarly, the active domain of $n$ for a variable $x_i$ ($i \in J$) is a degree constraint with $m_{\{i\} \mid \emptyset} = n$. In fact, Proposition~\ref{prop:main:linear} is an improved grounding results for linear \Datalog, using the constraints of active domains on each variable. 

In addition, a functional dependency $V \rightarrow U$ is a degree constraint with $m_{U \cup V \mid V} = 1$. This can capture primary-key constraints \EDBs, and repeated variables in \IDBs (e.g. $T(x,x,y)$ essentially says, the first variable decides the second).

The evaluation algorithm presented in Appendix~\ref{sec:PANDA} uses \textsf{PANDA} as a black-box, thus it naturally inherits the capability of handling degree constraints as \eqref{eq:degreeConstraint} from the original \textsf{PANDA} algorithm, to evaluate a \SumProd query $\sumprod(\bx_H)$ over a dioid more efficiently. Its runtime is shown to be predicated by the {\em degree-aware submodular width}. We refer the reader to~\cite{PANDA} for the formal definition of degree-aware submodular width and in-depth analysis.

\medskip 

Using \textsf{PANDA} as a subroutine, our grounding algorithm (Algorithm~\ref{groundedprogram}) also handles degree constraints for tighter groundings. For example, we show an improved grounding for linear \Datalogo, using the refined constraint that each variable in an \IDB has an active domain of $n$ (see Table~\ref{tab:summary}), instead of $O(n^k)$ for every \IDB predicate.

\begin{proof}[Proof of Proposition~\ref{prop:main:linear}]
    We follow the exact proof of Theorem~\ref{thm:main}. However, instead of the crude $\widetilde{O}(m^{\subw} + n^{k \cdot \subw})$ bound for each view $B^*_{\chi_i(t)}(\bx_{\chi_i(t)})$ obtained at Line 9 of Algorithm~\ref{groundedprogram}, we show a tighter bound of $\widetilde{O}(m^{\subw} \cdot (n^k / m + 1))$. 
    
    By the properties of \textsf{PANDA}~\cite{PANDA} and the fact that all rules are linear, the fine-grained cardinality constraints are: $(i)$ at most one \IDB in each \SumProd query with an active domain of $O(n)$ for each of its variable, $(ii)$ other \EDB atoms are of size $O(m)$. Note that $(i)$ is a stronger constraint than the \IDB cardinality constraint of $O(n^k)$.

    We can assume that \textsf{PANDA} terminates in time $\widetilde{O}(m^{w-w_i} \cdot n^{k w_i } )$ and the size of each resulting view $B^*_{\chi_i(t)}(\bx_{\chi_i(t)})$ has a more accurate size bound of $\widetilde{O}(m^{w-w_i} \cdot n^{k w_i } ) = \widetilde{O}(m^w \cdot (n^k/m)^{w_i})$, where $w \leq \subw$ and $w_i \leq 1$. One may refer to Lemma 5.3 of~\cite{PANDA} for the primal and dual linear programs that justifies the assumed bound.
    
    We now split into two cases: if $n^k \leq m$, then $(n^k/m)^{w_i} \leq 1$; otherwise,  $(n^k/m)^{w_i} \leq n^k/m$. Hence, we get $\widetilde{O}(m^w \cdot (n^k/m +1 ))$, where $w \leq \subw$ for groundings intermediate bags. 
\end{proof}

\medskip 

\begin{proof}[Proof of Proposition~\ref{prop:main:fhw}]
    We still use Algorithm~\ref{groundedprogram} to ground $P$, with two slight modifications: $(i)$ for each \SumProd query $\sumprod$, pick one tree decomposition $(\mT, \chi)$ that attains $\fhw(\sumprod)$; and $(ii)$ use the \textsf{InsideOut} algorithm instead of \textsf{PANDA} on when grounding the bags $B_{\chi(t)}(\bx_{\chi(t)})$. Thus, the size of the grounding for each bag is $O(m^{\fhw} + n^{k \cdot \fhw})$, without the polylog factor! The rest of the proof is similar to that of Theorem~\ref{thm:main} without using the idempotence of $\oplus$, since only one tree decomposition is used for each \SumProd query ($M = 1$ in that proof).
\end{proof}

Similar to the free-connex submodular-width, the free-connex fractional hypertree-width of a \SumProd query $\sumprod(\bx_H)$ is defined as 
\begin{align*}
    \cfhw(\sumprod(\bx_H)) \defeq \; \min_{(\mathcal{T}, \chi) \in \mathcal{F}_f}  \max_{h \in \Gamma_{\ell}} \max_{t \in V(\mathcal{T})} h(\chi(t)) 
\end{align*}
where $\mathcal{F}_f$ is the set of all non-redundant free-connex tree decompositions of $\sumprod(\bx_H)$. Obviously, $\cfhw(\sumprod(\bx_H)) \leq \fhw(\sumprod(\bx_H))$. We obtain the following corollary directly from Theorem~\ref{thm:main:acyclic:freeconnex} and Proposition~\ref{prop:main:fhw}.

\begin{corollary} \label{cor:main:cfhw}
    Let $P$ be a \Datalogo program where $\arity{P}$ is at most $k$. Let $\cfhw$ be the free-connex fractional hypertree-width of $P$.
    Let $\bS$ be a naturally-ordered semiring. Then, a $\bS$-equivalent grounding can be constructed of size (and time) $O(m^{\cfhw} + n^{k \cdot \cfhw})$.
\end{corollary}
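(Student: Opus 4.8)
The plan is to follow the proof of Proposition~\ref{prop:main:fhw} almost verbatim, replacing the fractional hypertree-width by its free-connex variant and swapping in the cheaper free-connex acyclic grounding bound. Concretely, I would run Algorithm~\ref{groundedprogram} rule by rule, but with three modifications. First, for each \SumProd query $\sumprod(\bx_H)$, instead of enumerating all non-redundant tree decompositions I pick a \emph{single} non-redundant free-connex tree decomposition $(\mT, \chi)$ (free-connex w.r.t. the head variables $H$) attaining the minimum in the definition of $\cfhw(\sumprod)$; thus $M = 1$. Second, I ground each bag $B_{\chi(t)}(\bx_{\chi(t)})$ using the \textsf{InsideOut} algorithm~\cite{faq} rather than \textsf{PANDA}. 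Third, in the final step I ground the resulting acyclic query by invoking Proposition~\ref{thm:main:acyclic:freeconnex} in place of Theorem~\ref{thm:main:acyclic}.

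Since only one decomposition is used per \SumProd query, the rewriting
\[
\sumprod(\bx_H) = \bigoplus_{\bx_{[\ell] \setminus H}} \bigotimes_{t \in V(\mT)} B_{\chi(t)}(\bx_{\chi(t)}), \qquad B_{\chi(t)}(\bx_{\chi(t)}) = \bigotimes_{J \in \mE : \nu(J) = t} T_J(\bx_J)
\]
is an exact algebraic identity that does not rely on idempotence of $\oplus$; hence the argument goes through for an arbitrary naturally-ordered semiring $\bS$, exactly as in Proposition~\ref{prop:main:fhw} (the case $M = 1$). Grounding each bag with \textsf{InsideOut} produces a table of size $O(m^{\cfhw} + n^{k \cdot \cfhw})$ with no polylogarithmic overhead, because by the choice of $(\mT, \chi)$ the fractional edge cover number of every bag $\chi(t)$ is at most $\cfhw$. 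Taking the maximum over the constantly many \SumProd queries of $P$ (data complexity) keeps the per-bag cost at $O(m^{\cfhw(P)} + n^{k \cdot \cfhw(P)})$.

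It then remains to ground the acyclic \SumProd query displayed above, whose body atoms are the freshly grounded bags of total size $O(m^{\cfhw} + n^{k \cdot \cfhw})$. The crucial point is that the join tree of this rewritten query is precisely $\mT$, which was chosen to be free-connex w.r.t. $H$; hence the rewritten query is free-connex acyclic. Applying Proposition~\ref{thm:main:acyclic:freeconnex} with input size $O(m^{\cfhw} + n^{k \cdot \cfhw})$ yields a grounding of size $O\bigl((m^{\cfhw} + n^{k \cdot \cfhw}) + n^{k}\bigr)$. Since $\cfhw \geq 1$ always holds, we have $n^{k \cdot \cfhw} \geq n^{k}$, so the $n^{k}$ term is absorbed and the total grounding size (and construction time) is $O(m^{\cfhw} + n^{k \cdot \cfhw})$, as claimed.

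The main obstacle is this last point: one must verify that substituting each bag by a fresh \IDB $B_{\chi(t)}$ preserves free-connexity, so that the cheaper bound of Proposition~\ref{thm:main:acyclic:freeconnex}, which drops the $n^{k-1}$ factor present in Theorem~\ref{thm:main:acyclic}, is actually applicable. This is exactly where restricting the minimization to free-connex decompositions in the definition of $\cfhw$ pays off: free-connexity is a structural property of $(\mT, \chi)$ relative to $H$, and the rewritten query inherits $\mT$ as its join tree with head variables still $\bx_H$, so the property transfers directly. Everything else is a routine adaptation of the proofs of Proposition~\ref{prop:main:fhw} and Theorem~\ref{thm:main}.
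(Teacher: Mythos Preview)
Your proposal is correct and is exactly the approach the paper intends: the paper states the corollary as an immediate consequence of Proposition~\ref{thm:main:acyclic:freeconnex} and Proposition~\ref{prop:main:fhw}, and your write-up simply unpacks that combination---pick a single free-connex decomposition achieving $\cfhw$, ground bags via \textsf{InsideOut} (so idempotence is not needed), then feed the resulting free-connex acyclic query into Proposition~\ref{thm:main:acyclic:freeconnex}. Your observation that free-connexity of $(\mT,\chi)$ w.r.t.\ $H$ is inherited by the rewritten query (since $\mT$ is literally its join tree) is the one point the paper leaves implicit, and you handle it correctly.
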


\end{document}